\newtheorem{theorem}{Theorem}[section]
\newtheorem{corollary}{Corollary}[section]
\newtheorem{lemma}{Lemma}[section]
\newtheorem{proposition}{Proposition}[section]
\theoremstyle{definition}
\newtheorem{definition}{Definition}[section]
\theoremstyle{remark}
\DeclareMathOperator{\Tr}{Tr}
\newcommand{\Trb}[2][]{\Tr_{#1} \left[ #2 \right]}
\newcommand{\Var}[1]{\mathbb{V}_{#1}}
\newcommand{\Varr}{\Var{\rho,H}}
\newcommand{\Loss}{\mathcal{L}_{\rho, H}}
\newcommand{\BH}{\mathcal{B}}
\newcommand{\BHk}{{\mathcal{B}}_\kappa}
\newcommand{\BHz}{{\mathcal{B}}_z}
\newcommand{\deltat}[2]{\Tilde{\delta}_{#1,#2}}
\begin{document}
\preprint{APS/123-QED}
\title{Estimates of loss function concentration in noisy parametrized quantum circuits}

\author{Giulio Crognaletti}
\email{giulio.crognaletti@phd.units.it}
\affiliation{
Department of Physics, University of Trieste, Strada Costiera 11, 34151 Trieste, Italy}
\affiliation{Istituto Nazionale di Fisica Nucleare, Trieste Section, Via Valerio 2, 34127 Trieste, Italy}
\affiliation{European Organization for Nuclear Research (CERN), 1211 Geneva, Switzerland}

\author{Michele Grossi}
\email{michele.grossi@cern.ch}
\affiliation{European Organization for Nuclear Research (CERN), 1211 Geneva, Switzerland}

\author{Angelo Bassi}
\affiliation{
Department of Physics, University of Trieste, Strada Costiera 11, 34151 Trieste, Italy}
\affiliation{Istituto Nazionale di Fisica Nucleare, Trieste Section, Via Valerio 2, 34127 Trieste, Italy}

\date{\today}
\begin{abstract}
Variational quantum computing offers a powerful framework with applications across diverse fields such as quantum chemistry, machine learning, and optimization. However, its scalability is hindered by the exponential concentration of the loss function, known as the barren plateau problem. 
While significant progress has been made and prior work has separately analyzed barren plateaus in unitary and noisy settings, their combined impact remains poorly understood, largely due to limitations in conventional Lie-algebraic approaches.
In this work, we introduce a novel analytical framework based on non-negative matrix theory that enables the description of the variance in layered noisy quantum circuits with arbitrary noise channels. This approach enables the derivation of exact expressions in the deep-circuit regime, uncovering the complex interplay between unitary layers and noise.
Notably, we identify a noise-induced absorption mechanism—a phenomenon absent in purely unitary dynamics—which provides new insight into how noise shapes circuit behavior. We further present a controlled convergence analysis, establishing general lower bounds on the variance of both deep and shallow circuits.
This leads to a principled connection between noise resilience and the expressive capacity of parameterized quantum circuits, particularly under smart initialization strategies. Our theoretical results are supported by numerical simulations and illustrative applications.
\end{abstract}

\maketitle

\section{Introduction}

Quantum computers hold the potential for substantial speed increases across various computational tasks, with a notable example being their capacity to transform our understanding of nature through the simulation of quantum systems \cite{Feynman82, LLoyd96, DiMeglio24}.
In this context, variational quantum computing offers a versatile tool, which combines quantum and classical computational resources. 
The flexibility and relative simplicity of this approach, coupled with its potential for noise resilience, make it a compelling candidate for near-term quantum devices \cite{Fontana21, Vischi24}.

In the past years, a substantial effort has been put in by the community to unlock the potential of variational algorithms. A major hurdle in this direction is the barren plateau (BP) effect, which implies an exponential flattening of the loss landscape, making the optimization step unfeasible with a polynomial amount of quantum resources \cite{McClean18}. More precisely, in the presence of BP, we have an exponentially vanishing probability of being able to efficiently find a loss-minimizing direction after parameter initialization.
In the absence of noise, this phenomenon has been linked to several factors, ranging from the expressive power and entangling capability of the circuit \cite{Larocca22, Marrero21, Sack22, Holmes22, Pesah21, Martin23}, to the locality of $H$ \cite{Cerezo21_2, Uvarov21} and the entanglement of the initial state $\rho$ \cite{Thanasilp23}. Recently, the contributions coming from all such factors were unified in a Lie-algebraic framework \cite{Ragone24, Fontana24, Diaz23}, showing how barren plateaus ultimately arise as a \emph{curse of dimensionality}. Several BP mitigation strategies have been proposed to circumvent the issue. Among the most popular, there are small-angle initializations \cite{Park24, Park24_2, Zhang22, Wang23, Puig24}, which leverage the restriction of the domain $\Theta$ of $\theta$ to limit the expressive power of the circuit, and consequently avoid concentration. \\
In addition to that, the presence of noise is also deemed detrimental, as it often gives rise to noise-induced barren plateau (NIBP)\ \cite{Wang21, Schumann23} and symmetry breaking \cite{Crognaletti24, Tuysuz24}. The former exacerbates the BP effect, producing a deterministic concentration. In this case, an efficient loss-minimizing direction is exponentially hard to estimate, \emph{regardless} of the parameter's initialization choice. These phenomena are mostly linked to decoherence. Ref.~\cite{Larocca24} offers a review of the subject.\\
Indeed, the action of noise can be quite destructive for quantum computation. Along these lines, recent research~\cite{Quek24, Takagi23, Tsubouchi23} puts stringent fundamental bounds on capabilities of quantum error mitigation strategies, emphasizing the limits of scalability of near-term devices, especially in the presence of global depolarizing noise. 
However, such analyses are mostly limited to unital noise, and besides, they typically rely on the properties of circuit ensembles. Indeed, studies have demonstrated that non-unital noise can significantly influence the emergence of NIBP\cite{Mele24, Fefferman23, Singkanipa24}, underscoring that the interplay between general noise and unitary circuit layers in variational quantum computing remains a critical and unresolved area of research. \\
In this work, we approach the problem through a novel formulation grounded in non-negative matrix theory, which enables a unified and general description of these dynamics. This framework not only addresses the previously uncharted aspects of the problem but also seamlessly encompasses both unital and non-unital noise settings. We quantify the deep circuit variance $\Varr^\infty$ for layered circuits composed of local 2-designs interleaved with general noise maps, providing a comprehensive picture. More specifically, we are able to recover known results for unitary circuits and strictly contractive noise maps as limiting cases, while unveiling the emergence of a more complex phenomenon, namely \emph{absorption}, in the intermediate case. Crucially, this can occur only with noise maps that are not strictly contractive—that is, maps that are insufficiently strong to fully erase all quantum information from the system. It is worth emphasizing that this class encompasses both unital and non-unital noise channels, illustrating that unitality of the noise alone is not necessarily detrimental for PQCs. The proposed framework also facilitates a systematic analysis of the noise resilience properties of PQCs, providing a foundation for principled BP mitigation strategies. Notably, we demonstrate a direct correlation between a circuit’s resilience to noise and its capacity for enhanced expressivity under small-angle initialization-like schemes. 

\begin{figure*}
    \centering
    \includegraphics[width=\linewidth]{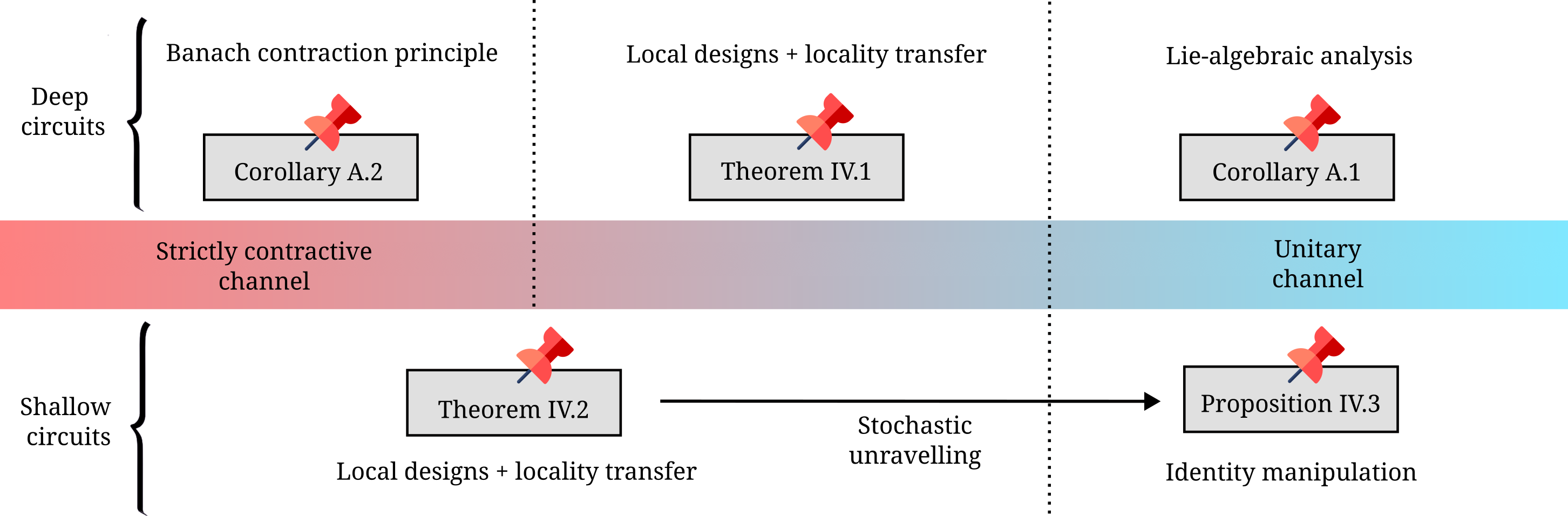}
    \caption{Loss concentration in variational quantum computing.
    The analytical formulation proposed here employs non-negative matrix theory to describe the interplay between local 2-designs and noise. This allows for precise calculation of loss variances for generic noise maps, from strictly contractive to unitary channels, as illustrated by the coloured band in the Figure. 
    The upper part considers deep circuits, where the loss variance $\Varr^L$ reaches its asymptotic limit. While loss concentration is well-understood for strictly contractive \cite{Schumann23, Wang21} and unitary \cite{Ragone24, Fontana24, Diaz23} channels, \cref{thm:main} provides an analytical solution for the intermediate case, revealing a noise-induced absorption mechanism unique to this regime. Consistency with known limiting cases is verified in \cref{sec:methods:recover_limits}.
    The lower part focuses on ``shallow'' circuits, where a lower bound on $\Varr^L$ is established using \cref{thm:lower_bound}. This extends previous works on brickwork circuits \cite{Cerezo21_2, Mele24}, enabling  initialization strategies such as small angle initializations \cite{Zhang22, Wang23} to be represented as stochastic unravellings of noise maps. In \cref{subsect:small_angles} we expand upon their applicability, showing how both unitary and non-unitary QResNets can be derived thanks to \cref{prop:noise_small_angle_equiv}.}
    \label{fig:summary}
\end{figure*}

\section{Motivating examples}\label{sec:intro}

In the following, we consider $n$-qubit quantum systems with a Hilbert space $\mathcal{H}=\bigotimes_{m=1}^M \mathcal{H}_m$ of dimension $d$, divided into $M$ subsystems, each of dimension $d_m$, representing either single qubits or groups of qubits.
More specifically, we study the problem where a quantum state $\rho$ is evolved using a parameterized quantum channel $\Phi_\theta$, whose parameters $\theta = (\theta_1, \theta_2, \dots )\in\Theta$ are optimized by minimizing the loss function
\begin{equation}\label{eq:loss}
\Loss = \Trb{\Phi_\theta(\rho)H},
\end{equation}
where $H$ is an observable of the system.
Here, the presence of BP is diagnosed by studying the variance $\Varr$ of $\Loss$ for varying parameters $\theta$. In particular, we say that $\Loss$ suffers from BP if $\Varr \in O(e^{-\beta n})$, $\beta>0$, as in this case, the loss function exponentially concentrates around its mean value in the number $n$ of qubits.
Within this framework, we focus on the case of \emph{layered} quantum channels, namely
\begin{equation}
\label{eq:circuit_channel}
\Phi_{\theta} = \mathcal{U}_{\theta_{L+1}} \circ \mathcal{E}_{L}\circ\mathcal{U}_{\theta_{L-1}} \dots \circ  \mathcal{E}_{1}\circ\mathcal{U}_{\theta_{1}},
\end{equation}
where each layer is composed of a unitary part $\mathcal{U}_{\theta_l}$ and an arbitrary quantum channel $\mathcal{E}_l$.   
If we denote by $\BH$ the space of bounded operators acting on $\mathcal{H}$, both such components can be regarded as linear, completely positive functions mapping $\BH$ to itself. 
Furthermore, the intermediate channels $\mathcal{E}_l$ will be deemed strictly-contractive if the restriction to the Hermiatian, traceless hyperplane $\mathbb{H}_0 \subset \BH$ of its adjoint action $\mathcal{E}_l^\dagger$ has an induced Schatten 1-norm $\|\mathcal{E}^\dagger_l|_{\mathbb{H}_0}\|_1$ strictly less than unity, $\|\mathcal{E}^\dagger_l|_{\mathbb{H}_0}\|_1<1$, and contractive otherwise, i.e. $\|\mathcal{E}^\dagger_l|_{\mathbb{H}_0}\|_1=1$. According to this definition, contractive channels preserve the $1$-norm of certain operators in $\BH$, while possibly being strictly contractive for others. Moreover, unitary channels, which preserve all norms across their entire domain, are a special case of contractive channels that are never strictly contractive. We refer to \cref{appendix:proof_main} for more details.

Since $\Varr$ is ultimately upper-bounded by the norm of $H$, it is intuitively clear that strictly contractive channels lead to vanishing variance in the deep circuit limit. However, it is important to clarify that this behavior is not inherently linked to the unitality of each individual $\mathcal{E}_l$. Although certain results in the literature may be misconstrued as implying that unital, non-unitary channels inevitably induce deterministic concentration—i.e., NIBP—in deep circuits \cite{Singkanipa24, Wang21}, this interpretation does not hold in general.

A simple yet compelling counterexample involves commuting noise and unitary layers, i.e., when $\mathcal{E}_l \circ \mathcal{U}_{l'}(\rho) = \mathcal{U}_{l'} \circ \mathcal{E}_l(\rho)$ for all $\rho$ and all $l, l'$. In such cases, one can commute all noisy operations to the beginning of the computation, effectively reducing the noisy circuit to a fully unitary one, provided the initial state is redefined as $\tilde{\rho} = \mathcal{E}_L \circ \cdots \circ \mathcal{E}_1(\rho)$. If $\rho$ is a steady state of all intermediate channels $\mathcal{E}_l$, then $\tilde{\rho} = \rho$, and the entire circuit becomes noise-free. Consequently, it cannot exhibit NIBP, although BP may still occur. An explicit example of this is provided in \cref{sec:unital_noise_examples}.

Similarly, even in the non-commuting case, the emergence of concentration is not guaranteed. In fact, the variance can even increase compared to the purely unitary circuit. This counterintuitive behavior can result from noise-induced symmetry breaking \cite{Tuysuz24, Crognaletti24}, a phenomenon typically studied in non-unital channels but also possible in unital ones. In both cases, this effect can lead to an enhanced variance. An example illustrating this mechanism is also presented in \cref{sec:unital_noise_examples}. 

Altogether, the above discussion highlights that, beyond the non-trivial effects observed in non-unital channels, the broader class of contractive, but not \emph{strictly} contractive channels gives rise to a rich variety of largely unexplored variance behaviors. In the following sections we will characterize these scenarios for general channels, encompassing the previously discussed cases as specific instances.


\section{General quantum channels}

To accurately characterize the effects of general noise maps, it is essential to first analyze the structural transformations induced on the computational space by the unitary layers. In particular, it worth noticing that the subdivision of $\mathcal{H}$ into local subsystems induces also a partition of $\BH$. More specifically, if we denote by $\kappa \in \{0,1\}^M$ a binary string of length $M$, we can split $\BH$ into \emph{local} subspaces $\BHk \subset \BH$, each spanned by the \emph{traceless} operators acting non trivially on $\mathcal{H}_m$ if and only if $\kappa_m=1$. Indeed, we can partition the whole space as
\begin{equation}\label{eq:B_subdivision}
    \BH = \bigoplus_{\kappa\in\{0,1\}^M} \mathcal{B}_\kappa 
\end{equation}
with $d_\kappa = \dim(\BHk) = \prod_m (d_m^2-1)^{\kappa_m}$. Clearly, if $\kappa = 0$, $\mathcal{B}_0 = \text{span}\, \mathds{1}$. In this work, such decomposition, is delineated as \emph{locality} from $\mathcal{H}$ to $\BH$. Specifically, we say that $A\in \BH$ is $\kappa$-local if $A \in \BHk$, and we associate to $A$ a locality vector $\ell_A \in \mathbb{R}^{2^M}$ defined element-wise by
\begin{equation}\label{eq:locality_vector}
    (\ell_A)_\kappa = \sum_{j=1}^{d_\kappa} \Trb{B_j A}^2
\end{equation}
for some Hermitian, orthonormal basis $\{B_j\}_{j=1}^{d_\kappa}$ of $\BHk$. Clearly, from the definition, a $\kappa$-local operator $A$ has locality vector $(\ell_A)_\lambda = \delta_{\kappa,\lambda} \|A\|^2_2$, where $\|\cdot\|_2$ is the Hilbert-Schmidt norm. We remark that similar quantities are not new in the context of PQCs, and in fact the vector $\ell_A$ is analogous to the purity measures defined in~\cite{Ragone24, Diaz23}.
Based on this, a derived notion of \emph{locality preservation} can be introduced for linear maps acting on $\BH$. In particular, given a map $\Lambda:\BH\to\BH$ and two subspaces $\BHk$ and $\mathcal{B}_\lambda$, we can measure the degree to which $\Lambda$ is able to put them in communication. The more subspaces are connected, the less locality preserving the map $\Lambda$ will be. This idea is captured formally in the following definition of a locality transfer matrix (LTM).
\begin{definition}[Locality transfer matrix]\label{def:LTM}
    Given a linear map $\Lambda:\BH\to\BH$, its locality transfer matrix $T$ is defined elementwise as
\begin{equation}\label{eq:LTM}
    T_{\kappa, \lambda} = \frac{1}{d_\kappa} \sum_{j=1}^{d_\kappa}(\ell_{\Lambda(B_j)})_\lambda
\end{equation}
for some Hermitian, orthonormal basis $\{B_j\}_{j=1}^{d_\kappa}$ of $\BHk$.
\end{definition}
In this formalism, \emph{locality preserving} transformations reflect all maps whose LTM coincides with the identity, i.e. $T_{\kappa, \lambda} = \delta_{\kappa, \lambda}$.
Trivially, unitary maps, separable with respect to the partition, namely $\mathcal{U}: \rho \mapsto U \rho U^\dagger$, where $U = \bigotimes_m U_m$, are locality preserving.

With this description in mind, we assume that $\mathcal{U}_\theta$ of \cref{eq:circuit_channel} is locality preserving, and hence describes an ideal operation limited to the local subsystems $\mathcal{H}_m$, while each channel $\mathcal{E}_l$ encodes both the operations which entangle the subsystems as well as any interaction between the system and the environment. This formally captures the idea of a variational quantum algorithm running on a real, noisy, device, where quantum computation can be realised very precisely within each subsystem, but is still inaccurate when dealing with more complex entangling operations. Furthermore, we assume that, within each subsystem and for all layers $l$, $\mathcal{U}_{\theta_l}$ is deep enough to form an approximate $2$-design over the local unitary groups $U(d_m)$. This assumption is justified by the fact that such local operations are relatively inexpensive, and that the necessary depth can be very small even for large systems, since in general it depends on $d_m$ rather than $d$.

A crucial property in the following analysis is the relation between the LTMs of the channel $\mathcal{E}_l$ and its Hermitian adjoint $\mathcal{E}_l^\dagger$ with respect to the Hilbert-Schmidt scalar product, namely $T$ and $T^\dagger$. Such relation can be characterized for generic linear maps $\Lambda:\BH\to \BH$ as a direct consequence of \cref{def:LTM}, and in particular we have $TD = (T^\dagger D)^t$, with $D_{\kappa, \lambda} = d_\kappa \delta_{\kappa, \lambda}$. For sake of readability, here we introduce a shorthand notation for the scalar product $(\cdot,\cdot)$ in $\mathbb{R}^{2^M}$ such that $T$ and $T^\dagger$ are \emph{also} Hermitian adjoint of one another, i.e.

\begin{equation}\label{eq:scalar_prod}
    (a,b) = a^tD^{-1}b = \sum_\kappa \frac{a_\kappa b_\kappa}{d_\kappa}.
\end{equation}
We refer the reader to \cref{app:loc_properties} for a more detailed discussion.
Having introduced the main tools, we now characterize the scaling of the variance $\Varr^L$ of $\Loss$ for quantum channels described by \cref{eq:circuit_channel}.  

\section{Results}\label{sec:results}

The overarching goal is to characterise the properties of the variance $\Varr^L$ as a function of the number $L$ of layers. To this end, we will provide a formal expression  for the general case; we will then use it to explicitly compute the variance of $\Loss$ in the deep circuit limit, as well as to set lower bounds for shallow circuits. We refer to \cref{fig:summary} for a schematic summary of the main results of this work.
\subsection{Loss variance calculation}

The first task is to derive a formal expression for the variance $\Varr^L$ in terms of the properties of $\rho, H$ and the intermediate channels, showing its relevance and main domains of applicability. Our first result, given in the following Proposition, serves as the foundation of the subsequent arguments.

\begin{proposition}[General formula]\label{prop:general_formula}
Let $\rho, H \in \BH$ and let $\Phi_\theta$ be a layered quantum channel as in described in \cref{sec:intro}. Then, we have
\begin{equation}\label{eq:general_formula}
    \mathbb{E}_{\theta} \left\{ \Trb{\Phi_\theta(\rho)H}^2 \right\} = \left(\ell_\rho, \prod_{l=1}^L T_l\,\ell_H\right)
\end{equation}
where $(\cdot,\cdot)$ is the scalar product defined in \cref{eq:scalar_prod}, and each $T_l$ is the LTM associated to the respective $\mathcal{E}_l^\dagger$.
\end{proposition}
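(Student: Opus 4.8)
The plan is to obtain the general formula by iterating \cref{prop:warm_up} one layer at a time, i.e.\ by induction on the number $L$ of noise channels. The base case $L=0$ is exactly \cref{prop:warm_up} (the empty product of locality transfer matrices being the identity). For the inductive step, factor off the first unitary and the first noise channel, $\Phi_\theta=\Psi_\theta\circ\mathcal{E}_1\circ\mathcal{U}_{\theta_1}$, where $\Psi_\theta$ is again a layered channel of the form in \cref{sec:intro} but with only the $L-1$ noise channels $\mathcal{E}_2,\dots,\mathcal{E}_L$, and whose parameters are independent of $\theta_1$. Pushing $\mathcal{E}_1$ and $\Psi_\theta$ to the observable side, $\Trb{\Phi_\theta(\rho)H}=\Trb{\mathcal{U}_{\theta_1}(\rho)\,\mathcal{E}_1^\dagger(\Psi_\theta^\dagger(H))}$, and averaging over $\theta_1$ first (it does not enter $\Psi_\theta$), \cref{prop:warm_up} applied with observable $\mathcal{E}_1^\dagger(\Psi_\theta^\dagger(H))$ yields
\begin{equation}\label{eq:plan_step}
\mathbb{E}_{\theta_1}\!\left\{\Trb{\Phi_\theta(\rho)H}^2\right\}=\left(\ell_\rho,\ \ell_{\mathcal{E}_1^\dagger(\Psi_\theta^\dagger(H))}\right).
\end{equation}

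The point that makes the recursion close is that the right-hand side of \cref{eq:plan_step}, although quadratic in the still-random operator $\Psi_\theta^\dagger(H)$, is automatically a non-negative weighted sum of squared expectation values of exactly the kind \cref{prop:warm_up} and the inductive hypothesis are designed to consume, so that no cross-terms ever appear; this is precisely why one peels a single local $2$-design layer at a time rather than attempting the full Weingarten average in one shot. Expanding \cref{eq:scalar_prod} and \cref{eq:locality_vector} over a Hermitian orthonormal basis $\{B_j\}_{j=1}^{d_\kappa}$ of each $\BHk$, and using $\Trb{B_j\,\mathcal{E}_1^\dagger(G)}=\Trb{\mathcal{E}_1(B_j)\,G}$ followed by $\Trb{\mathcal{E}_1(B_j)\,\Psi_\theta^\dagger(H)}=\Trb{\Psi_\theta(\mathcal{E}_1(B_j))\,H}$, one rewrites the right-hand side of \cref{eq:plan_step} as $\sum_{\kappa}\frac{(\ell_\rho)_\kappa}{d_\kappa}\sum_{j=1}^{d_\kappa}\Trb{\Psi_\theta(\mathcal{E}_1(B_j))\,H}^2$. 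Taking the remaining expectation over the parameters of $\Psi_\theta$ and applying the inductive hypothesis to each summand, now with $\mathcal{E}_1(B_j)$ in the role of the initial operator, gives $\mathbb{E}_{\theta}\{\Trb{\Phi_\theta(\rho)H}^2\}=\left(\sum_{\kappa,j}\frac{(\ell_\rho)_\kappa}{d_\kappa}\,\ell_{\mathcal{E}_1(B_j)},\ \prod_{l=2}^{L}T_l\,\ell_H\right)$.

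The final step is to recognise the vector $v:=\sum_{\kappa,j}\frac{(\ell_\rho)_\kappa}{d_\kappa}\,\ell_{\mathcal{E}_1(B_j)}$: by \cref{def:LTM} its entries are those of $\ell_\rho$ transported through the locality transfer matrix of $\mathcal{E}_1$, and transporting that action to the other argument of the scalar product via the adjoint relation $TD=(T^\dagger D)^t$ turns it into a clean left action of $T_1$, the LTM of $\mathcal{E}_1^\dagger$, producing the missing first factor and hence $\mathbb{E}_{\theta}\{\Trb{\Phi_\theta(\rho)H}^2\}=(\ell_\rho,\ \prod_{l=1}^{L}T_l\,\ell_H)$, which closes the induction. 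I expect the only real difficulty to be purely this last piece of bookkeeping — keeping the $D$-weightings straight so that the LTMs come out composed in the correct order and orientation — which is exactly what the weighted scalar product of \cref{eq:scalar_prod} and the relation between $T$ and $T^\dagger$ (derived in \cref{app:loc_properties}) are there to handle. Equivalently, one could organise the whole computation without induction by writing $\mathbb{E}_\theta\{\Trb{\Phi_\theta(\rho)H}^2\}=\Trb{H^{\otimes 2}\,(\mathcal{T}_{L+1}\circ\mathcal{E}_L^{\otimes 2}\circ\cdots\circ\mathcal{E}_1^{\otimes 2}\circ\mathcal{T}_1)(\rho^{\otimes 2})}$ with $\mathcal{T}_l=\mathbb{E}_{\theta_l}\{\mathcal{U}_{\theta_l}^{\otimes 2}\}$; by Schur--Weyl duality each local $2$-design twirl $\mathcal{T}_l$ is a projection of $\BH^{\otimes 2}$ onto a subspace canonically labelled by $\kappa\in\{0,1\}^M$, and restricting the composition to these diagonal sectors collapses it to the matrix product $\prod_l T_l$ on $\mathbb{R}^{2^M}$, with \cref{prop:warm_up} its $L=0$ instance.
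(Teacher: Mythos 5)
Your proof is correct and follows essentially the same route as the paper: the paper isolates your inductive step as a standalone proposition (a single channel sandwiched between two independent local $2$-designs, handled by applying the warm-up formula to the state-side average and then moving the channel onto the local basis elements via its Hilbert--Schmidt adjoint so that the warm-up, respectively the inductive hypothesis, applies again), and then obtains the general formula by iterated application, which is exactly your induction. The closing bookkeeping you flag --- trading the locality transfer matrix of $\mathcal{E}_1$ for that of $\mathcal{E}_1^\dagger$ through the relation $TD=(T^\dagger D)^t$ and the weighted scalar product --- is likewise how the paper's appendix resolves the final step.
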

Exploiting the fact that, under the assumptions of \cref{sec:intro}, each layer $\mathcal{U}_{\theta_l}$ forms a global unitary $1$-design, \cref{prop:general_formula} allows a direct calculation of $\Varr^L$ as a function of the LTMs of the intermediate channels as
\begin{equation}\label{eq:varr_0}
    \Varr^L = \left(\ell_\rho, \prod_{l=1}^L T_l\,\ell_H\right) - \frac{\Trb{H}^2}{d^2}.
\end{equation}
A proof of the aforementioned statements is given in \cref{app:preliminary_results}.

Note that this formulation provides an exact formula, which in principle gives access to the study of loss function concentration for any $L$. Indeed, this is the case when there are few subsystems, since in this case one can easily estimate and manipulate the matrices $T_l$ (see \cref{app:non_conv_var} for an example).
This approach becomes rather cumbersome in a general setting, for large systems, since the resources needed to represent the LTM can grow exponentially. Nevertheless, as shown in the following sections, \cref{prop:general_formula} can still be profitably used as a theoretical tool, as it allows characterizing $\Varr^L$ in both the deep and shallow circuit limit. 

\begin{figure*}
    \includegraphics[width=0.9\linewidth]{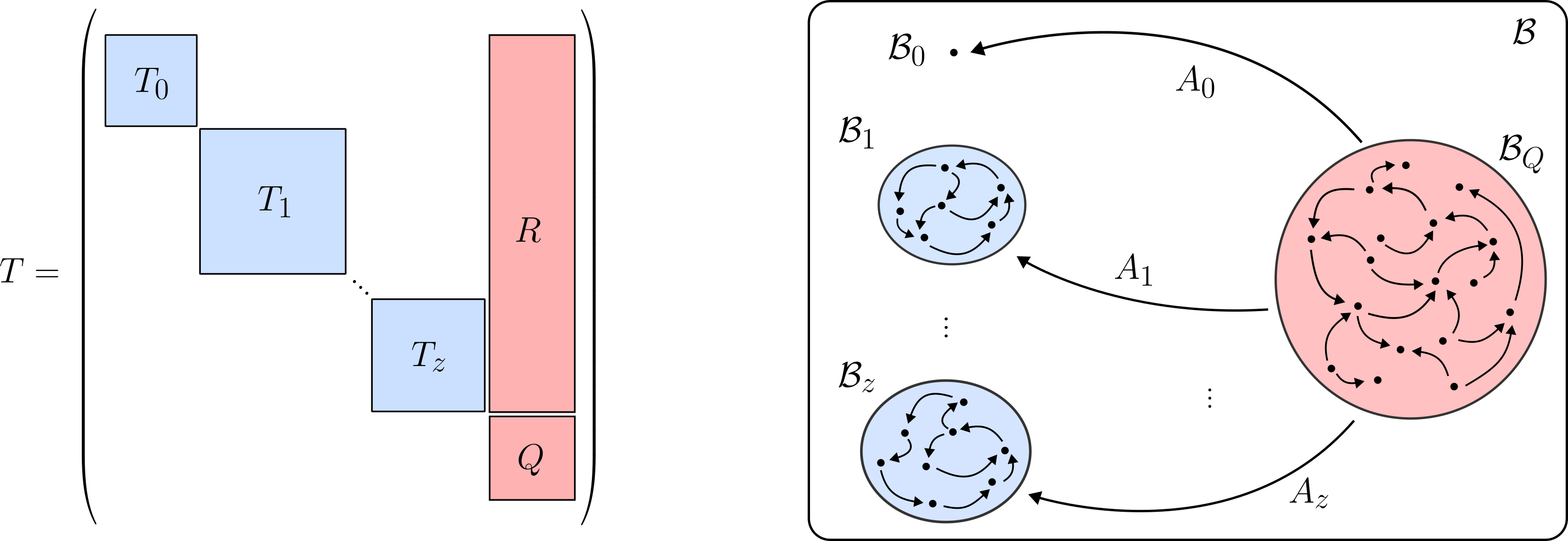}
    \caption{Graphical representations of the stochastic process describing $\Varr^L$. On the left, we show the structure of the general locality transfer matrix (LTM), highlighting the decomposition into irreducible components \cite{Seneta06}. Light blue blocks represent irreducible, essential components of $T$, while red blocks are related to inessential ones. In particular, $Q$ represents the collection of all irreducible, inessential components of $T$ and $R$ their relation with the essential components $T_z$. On the right, the same process is represented graphically, in terms of the local subspaces $\BHk$. Here, each dot represents a single subspace, while the arrows represent the adjoint action of the channel $\mathcal{E}$. Essential and inessential components share the same colour code of $T$.}
    \label{fig:T_and_B}
\end{figure*}

\subsection{Deep circuit limit}
As the circuits become deeper, it is natural to expect the contribution of the leading eigenvectors of $T_l$ to become dominant, as repeatedly multiplying them gives rise to a process analogous to power iteration methods. The structure of such eigenvectors captures several interesting properties of the interaction between the unitary and non-unitary parts of $\Phi_\theta$, particularly \emph{absorption}, which can only arise in this picture. 

For simplicity, let us focus on \emph{homogeneous} channels, i.e. we fix $\mathcal{E}_l = \mathcal{E} \; \forall l$, and consider, without loss of generality, a traceless observable $H$, $\Trb{H}=0$. In this case, \cref{eq:varr_0} becomes $\Varr^L = \left(\ell_\rho, T^L\ell_H\right)$. Note that, by construction, $T$ is a non-negative matrix, namely $T_{\kappa,\lambda} \geq 0\;\forall {\kappa,\lambda}$, and thus it can always be expressed in canonical, block-upper triangular form, where each of the diagonal blocks is irreducible \cite{Seneta06}. Throughout this work, irreducible components of $T$ will be regarded as \emph{essential} if they cannot lead outside the block, and will be denoted by $T_z$. 
Otherwise, an irreducible block will be deemed \emph{inessential}, and their collection will be denoted by $Q$. A useful pictorial representation of these possibilities is shown in \cref{fig:T_and_B}, together with the matrix canonical form.  A foundational reference for this definition can be found in Ref.~\cite{Seneta06}, while a more detailed discussion is proposed in \cref{appendix:proof_main}. 

Given this structure, it will be particularly useful to denote by $\BHz = \bigoplus_{\kappa \in T_z} \BHk$ the union of subspaces put in communication within $T_z$, by $d_z$ their total dimension, and given $A\in \BH$, by $(\ell_A)_z = \sum_{\kappa \in T_z} (\ell_A)_\kappa$ the corresponding locality.
Regardless of the specific channel $\mathcal{E}$ used, some general properties of the blocks $T_z$ can be identified. For instance, due to trace preservation, the trivial subspace $\mathcal{B}_0$ always forms an essential component of $T$, which we denote by $T_0$. Moreover, since such blocks are essential, it follows that $\mathcal{E}^\dagger(\BHz) \subset \BHz$, and complete positivity ensures that all $T_z$ must be contractive in the sense of the spectral radius, i.e. $\rho(T_z) \leq 1$. Moreover, when the equality holds, the left eigenvector $v_z$ of $T_z$ of the dominant eigenvalue can be explicitly computed, yielding $(v_z)_\kappa = 1\,\forall\kappa$. In contrast, its contribution vanishes under repeated applications of the channel when the condition is not satisfied. This suggests that, as $L\to\infty$, the general form for the variance reads

\begin{equation}\label{eq:v_infty_general}
    \Varr^\infty = \sum_{z \,|\, \rho(T_z) = 1} (\ell_\rho,w_z) (\ell_H)_z + (\ell_\rho,w_z) (A \ell_H)_z,
\end{equation}
where $w_z$ denotes the right eigenvector of the leading eigenvalue of $T_z$ and $A$ is a matrix of the same shape as $R$ in \cref{fig:T_and_B}. 
Such reasoning is formally encapsulated in the following Theorem.

\begin{theorem} [Deep circuit variance]
\label{thm:main}
Let $\rho, H \in \BH$ and let $\Phi_\theta$ a be layered quantum channel as in described in \cref{sec:intro}. Then the Cesàro average of $\Varr^L$ converges to \cref{eq:v_infty_general}, and we have
\begin{equation}
\label{eq:main:cesaro_limit}
    \left|\frac{1}{L} \sum_{l=0}^L \Varr^l -\Varr^\infty \right| \in O\left(e^{-\beta L} \|H\|^2_2\right),
\end{equation}
for some constant $\beta > 0$. Additionally, if all essential blocks are aperiodic (i.e. with period $p=1$), then $\Varr^L$ is convergent, and we have
\begin{equation}
\label{eq:main:simple_limit}
    \left|\Varr^L -\Varr^\infty \right| \in O\left(e^{-\beta L} \|H\|^2_2\right),
\end{equation}
where the absorption matrix is given by $A = R(\mathds{1}-Q)^{-1}$. 
\end{theorem}

As a special case, if the intermediate channel can be reduced unitarily to a tensor product of single qubit channels, then also $w_z$ of \cref{eq:v_infty_general} can be explicitly computed, yielding the following Corollary.

\begin{corollary}\label{cor:single_qubit_noise_main}
If the intermediate channel takes the form $\mathcal{E} = \mathcal{N}\circ\mathcal{W}$, where the noise channel $\mathcal{N} = \bigotimes_m \mathcal{N}_m$ is the composition of single qubit channels and $\mathcal{W}$ is unitary, then 
\begin{equation}\label{eq:V_infty}
    \Varr^\infty = \sum_{z \,|\, \rho(T_z) = 1} \frac{(\ell_\rho)_z (\ell_H)_z}{d_z} + \frac{(\ell_\rho)_z (A \ell_H)_z}{d_z}.
\end{equation}
\end{corollary}

The proof of \cref{thm:main} and \cref{cor:single_qubit_noise_main} can be found in \cref{appendix:proof_main}. 

It is worth noting that, the loss variance of the circuit need not converge. In fact, while it is bounded (i.e. $\Varr^L \leq \|H\|_\infty^2 \; \forall L$), different circuit sub-sequences can in general lead to different limits. This is connected to the presence of \emph{cycles} of period $p>1$ in each $T_z$, and can arise, for instance, when the entangling operation is not chosen carefully with respect to the partition of $\mathcal{H}$. In those cases, the value of $\Varr^L$ will depend on which equivalence class of integers modulo $p$ the depth $L$ belongs to. 
This is put into an effective example and further discussed \cref{app:non_conv_var}.
For simplicity, in the following, we will assume \emph{aperiodicity} for each irreducible block.

We start the analysis of \cref{thm:main} by unpacking \cref{eq:V_infty}. In this equation, $\Varr^\infty$ is shown to depend on four important quantities, namely the invariant subspaces $\BHz$ of $\Phi_\theta^\dagger$, the respective locality of $\rho$ and $H$, together with the matrix $A$. As a reminder, in order for this structure to arise, the invariant subspaces of $\mathcal{U}_\theta^\dagger$ and $\mathcal{E}^\dagger$ have to be \emph{well-aligned}, so that non-trivial subspaces $\BHz \subset \BH$ are present. Failure to achieve this will cause the emergence of noise-induced concentration in the sense of Ref. \cite{Wang21}, as shown in \cref{sec:methods:recover_limits}. This extends the notion of alignment already introduced in the literature \cite{Larocca24} for $\rho, H$ and unitary circuits. Moreover, we observe that such subspaces act as \emph{attractors} for the variance, as each summand not only depends on the local components of $\rho$ and $H$, but also on the components of $H$ belonging to $\mathcal{E}(\BHz) \cap \mathcal{B}_Q$.
In this sense, $A$ can be interpreted as an absorption matrix, which quantifies the extent of the contribution of such terms. We remark that this contribution is always non-negative, and it is intimately related to the contractivity properties of $\mathcal{E}^\dagger$. Overall, this suggests that appropriate non-unitary layers will alleviate the concentration typical of unitary circuits by a mechanism that allows to bring the contribution of the components of $H$ belonging to strictly contractive, high dimensional subspaces, to non-contractive, smaller dimensional ones.

As shown in \cref{sec:methods:recover_limits}, \cref{eq:V_infty} can be used to recover deep circuit limit variance in the unitary and strictly contractive setting, consistent with existing literature \cite{Ragone24, Fontana24, Mele24, Singkanipa24}. In particular, regarding strictly contractive, non-unital noise, we show how previous results can be expressed as the contribution to the absorption term of $\mathcal{B}_0$, which always forms a norm-preserving subspace due to trace preservation of $\Phi_\theta$. However, it is crucial to remark that this specific contribution is not due to the retention of any computational power to the PQC, since the dependence on the initial state is completely lost, but rather to the competing effects between the drive of $\mathcal{U}_\theta$ and $\mathcal{E}$ towards the respective, different fixed points. An example of this phenomenon is provided in \cref{sec:methods:noise_ent}, where we study how extensive entanglement can further exacerbate the issue. This is in stark contrast with the general case discussed above, i.e. the absorption terms of $\BHz$, $z>0$, as, although more difficult to realize, they keep a non-trivial dependence on the initial state, and as such it can be said to genuinely avoid concentration if $d_z$ scales appropriately.

The results showed so far were based on the analysis of the dominant eigenvectors of $T$. When $L$ is not deep enough to reach convergence to its asymptotic limit, we need to characterize better the behaviour of $\Varr^L$. This is the subject of the following Section.

\subsection{Lower bounds on ``shallow" circuits}

While for shallow circuits we cannot rely on the spectral properties of $T$ to determine $\Varr^L$, we can still use knowledge of the convergence speed to $\Varr^{\infty}$ to set general lower bounds. Intuitively, such bounds can be obtained by preventing the variance to reach its stationary state, which can be done only if the exponential upper bounds appearing in \cref{thm:main} are sufficiently loose, namely $\beta L \in O(\log n)$. This implies either that the circuit is \emph{shallow}, i.e. there are not enough layers to reach the asymptotic value $\Varr^{\infty}$, or \emph{effectively shallow}, i.e. the mixing speed $\beta$ of $T$ is slowed down according to $\beta \in O(\log n/L)$ so that $\Varr^{\infty}$ is never reached, regardless of $L$.
As this speed is related to the amount of correlations with respect to the partition introduced by the channel, we study this scenario in the limit of $\mathcal{E}^\dagger$ being close to locality preserving. This idea is formally captured in the following Theorem, which extends its application to generally non-homogeneous channels.

\begin{theorem}[General lower bound]\label{thm:lower_bound}
    Let $\rho, H \in \BH$ and consider a sequence of quantum channels $\{\mathcal{E}_l\}_{l=1}^L$, and let $\{T_l\}_{l=1}^L$ be the respective LTMs. Finally let $K\subset \{0,1\}^M$ denote a subset of indices, and by $\alpha_l = \min_{\kappa\in K} (T_l)_{\kappa,\kappa}$. Then
    \begin{equation}\label{eq:lower_bound_main}
        \Varr^L \geq  \alpha^L (\ell_\rho,\ell_{\mathcal{K}(H)}),
    \end{equation}
    where $\mathcal{K}:\BH\to\mathcal{B}_K$ is the projector onto $\mathcal{B}_K = \bigoplus_{\kappa \in K} \BHk$ and $\alpha$ is the geometric mean of $\alpha_l$. 
\end{theorem}

Depending on the scaling of $\alpha$ and the dimensions $d_m$ of the subsystems, \cref{eq:lower_bound_main} can provide a meaningful lower bound. For instance, focusing on the case of subsystems with constant dimension, we have the following Corollary.

\begin{corollary}[Lower bound examples]\label{cor:small_angle}
    Let $\mathcal{H} = \bigotimes_{m=1}^M \mathcal{H}_m$, $d_m \in \Theta(1)$. If either of the conditions 
    \begin{enumerate}[(a)]
        \item $\alpha > 0$, $\alpha \in \Omega(1)$ and $L\in O(\log n)$,
        \item $\alpha = 1-f(n,L)$, $f \in O(\log n/L)$ and {$L \in \Omega(\log^{1+\epsilon}n)$} for some arbitrary $\epsilon > 0$
    \end{enumerate}
    is satisfied, then
    \begin{equation}
    \label{eq:small_angle:bound}
        \Varr^L \geq F(n) (\ell_\rho, \ell_{{\mathcal{K}(H)}}),
    \end{equation}
    where $F(n) \in \Omega(1/\text{poly}(n))$.  
\end{corollary}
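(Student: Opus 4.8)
The plan is to read \cref{cor:small_angle} off \cref{thm:lower_bound}, the only content being an asymptotic estimate of the prefactor $\alpha^L$ appearing in \cref{eq:lower_bound}. First I would record that $(\ell_\rho,\ell_{\mathcal{K}(H)})\ge 0$: by \cref{eq:locality_vector} every component of $\ell_\rho$ and of $\ell_{\mathcal{K}(H)}$ is a sum of squares, and the weights $1/d_\kappa$ in the scalar product \cref{eq:scalar_prod} are strictly positive. Since \cref{eq:lower_bound} reads $\Varr^L\ge\alpha^L(\ell_\rho,\ell_{\mathcal{K}(H)})$, it is then enough to show that under either hypothesis one has $\alpha^L\ge F(n)$ with $F(n)\in\Omega(1/\text{poly}(n))$; multiplying by the non-negative factor $(\ell_\rho,\ell_{\mathcal{K}(H)})$ yields \cref{eq:small_angle:bound}. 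We may also assume $\alpha\le 1$, since otherwise $\alpha^L\ge 1$ and the claim is trivial with $F\equiv 1$.

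For case (a), $\alpha\in\Omega(1)$ provides an absolute constant $\alpha_0\in(0,1]$ with $\alpha\ge\alpha_0$ for all large $n$, and $L\in O(\log n)$ gives $L\le C\log n$ for a constant $C$; since $\alpha\le 1$ the map $L\mapsto\alpha^L$ is non-increasing, so
\[
    \alpha^L\;\ge\;\alpha^{C\log n}\;\ge\;\alpha_0^{\,C\log n}\;=\;n^{-C\log(1/\alpha_0)},
\]
a fixed inverse polynomial. For case (b), write $\alpha=1-f(n,L)$ with $0\le f(n,L)\le C\log n/L$ for large $n$. Here the two assumptions play complementary roles: $L\in\Omega(\log^{1+\epsilon}n)$ forces $f(n,L)\le C'/\log^{\epsilon}n\to 0$, so eventually $f(n,L)\le 1/2$ and the elementary inequality $\ln(1-x)\ge -2x$ valid on $[0,1/2]$ applies, while $f\in O(\log n/L)$ keeps the exponent $f(n,L)\,L=O(\log n)$. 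Combining,
\[
    \alpha^L\;=\;e^{\,L\ln(1-f(n,L))}\;\ge\;e^{-2f(n,L)L}\;\ge\;e^{-2C\log n},
\]
again an inverse polynomial in $n$. In both cases the assumption $d_m\in\Theta(1)$ serves only to make sure the constants $\alpha_0$, $C$, $C'$, and hence the degree of $F(n)$, are absolute rather than $n$-dependent; note also that the resulting $F(n)$ does not depend on $\rho$ or $H$, consistently with the form of \cref{eq:small_angle:bound}.

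I do not anticipate a real obstacle: \cref{thm:lower_bound} already does the heavy lifting and what remains is asymptotic bookkeeping. The point that genuinely needs attention is the interplay of the two conditions in case (b) — dropping $L\in\Omega(\log^{1+\epsilon}n)$ removes the guarantee that $f(n,L)$ is eventually small enough to linearize $\ln(1-f)$, so the bound would degrade to merely sub-exponential, while dropping $f\in O(\log n/L)$ removes the control $f(n,L)\,L=O(\log n)$ that makes the bound polynomial. Beyond that, I would only be careful to track that every $O(\cdot)$ and $\Omega(\cdot)$ constant above is absolute under $d_m\in\Theta(1)$.
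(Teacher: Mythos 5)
Your proposal is correct and follows essentially the same route as the paper: reduce via \cref{thm:lower_bound} to showing $\alpha^L \in \Omega(1/\mathrm{poly}(n))$, handle case (a) by $\alpha^{O(\log n)} = n^{-O(\log(1/\alpha))}$, and handle case (b) by bounding $L\log(1-f)$ from below using $fL \in O(\log n)$ (the paper uses a Taylor expansion of $\log(1-C\log n/L)$ where you use the elementary bound $\ln(1-x)\ge -2x$ on $[0,1/2]$, but this is the same idea). Your explicit remarks on the non-negativity of $(\ell_\rho,\ell_{\mathcal{K}(H)})$ and on where the hypothesis $L\in\Omega(\log^{1+\epsilon}n)$ is actually used are minor refinements of the paper's argument, not a different approach.
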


The conditions of \cref{cor:small_angle} reflect the aforementioned scenarios; in particular, condition $(a)$ ensures the absence of concentration for \emph{shallow} circuits, both unitary and noisy. Specifically, this holds true whenever $ 0 < \alpha \in \Omega(1)$, indicating that the intermediate channel does not become increasingly rapidly entangling, as the problem size grows. As a notable example, this condition is satisfied for brickwork circuits, equipped with local noise and a local observable \cite{Mele24}. Similarly, condition $(b)$ reflects the absence of concentration for \emph{effectively shallow} quantum channels. A significant example is that of finite local-depth circuits (FLDCs) \cite{Zhang24}. We can interpret condition $(b)$ as a limit to the mixing speed of $T$ by noting that, in the homogeneous case, it is equivalent to the more explicit relation $|1-\lambda| \in O(\log n/L)$ for all the eigenvalues $\lambda$ of $T$ by Gershgorin circles theorem \cite{Meyer00}, which directly implies $\beta \in O(\log n/L)$.

Interestingly, since these results have been obtained by imposing $T \approx \mathds{1}$, they have a strong resemblance with small angle initialization strategies \cite{Wang23, Zhang22}, which similarly hinge on identity manipulation. In fact, while the primary concern of \cref{thm:lower_bound} is noise, it could still be regarded as a theoretical foundation of such initialization strategies. In the following Section, we show a deep relation between these types of smart initializations for noiseless circuits and the properties of noisy layers.

\subsection{Connection with small angle initializations}\label{subsect:small_angles}

Here we expand on the concept of small angle initialization introduced in Refs. \cite{Zhang22, Wang23}.
In particular, we establish a general relationship between the insights gained from controlling loss concentration in noisy circuits (as presented in \cref{thm:lower_bound}) and BP mitigation strategies that typically only apply to ideal circuits.

The main idea behind small angle initialization strategies considers a layered quantum circuit $U_\theta = \prod_l U_{\theta_l}$ where absence of concentration for a given initialization distribution is shown. This is typically very peaked around $0$, with variance scaling inversely to the number of layers: $\sigma^2 \in O(1/L)$. The core idea of all such strategies relies on \emph{identity manipulation}, i.e. on choosing initialization distributions such that $U_{\theta} \approx \mathds{1}$ with high probability. This introduces a sizable variance to the circuit, at the price of having a large bias towards the identity in the quantum model. 

A similar structure can be defined in our framework by considering quantum channels $\Phi_{\theta, \phi}$, as in \cref{eq:circuit_channel}, where the intermediate channels $\mathcal{E}_{\phi_{l}}$ are now parameterized. This differs from the main idea of small angle initialization, as the local components $\mathcal{U}_{\theta_l}$ of the channel remain Haar random, and instead it is the allowed channels $\mathcal{E}_\phi$ that get restricted. Intuitively, this will lead to a different model bias for small angles, i.e. $\Phi_{\theta, \phi} \approx \mathcal{U}_\theta$.
We name this model Quantum Residual Network (QResNet), as we interpret the large identity component of $\mathcal{E}_\phi$ as a \emph{skip-connection}, in analogy to classical Residual Networks \cite{He15}. 
Indeed, this structure is enough to avoid concentration when a small angle initialization strategy is used. To see this, we remark that while a constant channel was used to derive \cref{prop:general_formula}, it can be readily generalized to parameterized channels $\mathcal{E}_\phi$, as long as the parameters $\phi_l$ are independent. In that case it is sufficient to use $\mathbb{E}_\phi\{ T_\phi \}$ in place of $T$, where $T_\phi$ is the LTM of $\mathcal{E}^\dagger_\phi$ (see \cref{appendix:lower_bounds} for a proof). Exploiting this, we can derive the following Proposition.

\begin{proposition}[QResNet]\label{prop:qresnet}
    Let $\mathcal{E}_\phi(\cdot) = e^{i\phi G} \cdot e^{-i\phi G}$ be a unitary entangling gate, and let $\mu, \sigma^2$ be the mean and variance of the initialization distribution of $\phi$. Then, if $\mu=0$, ${\sigma^2 \in O(\log n/\|G\|_2^2L)}$, and $L\in\Omega(\log^{1+\epsilon} n)$ we have
    \begin{equation}
        \Varr^L \geq F(n) (\ell_\rho, \ell_H),
    \end{equation}
    where $F(n) \in \Omega(1/\text{poly}(n))$.
\end{proposition}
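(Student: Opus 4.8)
The plan is to reduce \cref{prop:qresnet} to condition $(b)$ of \cref{cor:small_angle} by bounding the averaged locality transfer matrix from below, near the identity. Since the angles $\phi_l$ are drawn independently, the parameterized version of \cref{prop:general_formula} noted above applies with the single nonnegative matrix $\bar T:=\mathbb E_\phi\{T_\phi\}$ playing the homogeneous role of the LTM, $T_\phi$ being the LTM of $\mathcal E_\phi^\dagger(\cdot)=e^{-i\phi G}\cdot e^{i\phi G}$; thus $\Varr^L=(\ell_\rho,\bar T^L\ell_H)-\Trb{H}^2/d^2$, and \cref{thm:lower_bound} together with its specialization \cref{cor:small_angle} apply verbatim with $T_l\equiv\bar T$. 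I will use $K=\{0,1\}^M$, so that $\mathcal K=\mathrm{id}$ and $\ell_{\mathcal K(H)}=\ell_H$ (the trivial subspace $\mathcal B_0$ carries exactly the $\Trb{H}^2/d^2$ that is subtracted, by trace preservation of $\mathcal E_\phi$, so it does not affect the bound). It then suffices to show $\alpha:=\min_\kappa \bar T_{\kappa,\kappa}\ge 1-f$ with $f\in O(\log n/L)$.

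The core estimate is a lower bound on the diagonal entries of $\bar T$ that is \emph{uniform} in $\kappa$. Because $\mathcal E_\phi^\dagger$ is a unitary channel it preserves Hermiticity and $\|\cdot\|_2$, so \cref{def:LTM} and the Pythagorean identity in $\BHk$ give
\begin{equation}
1-(T_\phi)_{\kappa,\kappa}=\frac{1}{d_\kappa}\sum_{j=1}^{d_\kappa}\|(\mathds 1-\mathcal P_\kappa)\,\mathcal E_\phi^\dagger(B_j)\|_2^2,
\end{equation}
where $\mathcal P_\kappa$ is the orthogonal projector onto $\BHk$ and $\{B_j\}$ a Hermitian orthonormal basis of it. Since $(\mathds 1-\mathcal P_\kappa)B_j=0$, one can replace $\mathcal E_\phi^\dagger(B_j)$ by $\mathcal E_\phi^\dagger(B_j)-B_j$, bound the projector by the identity, and multiply on the right by the unitary $e^{-i\phi G}$, obtaining $\|(\mathds 1-\mathcal P_\kappa)\mathcal E_\phi^\dagger(B_j)\|_2\le\|[e^{-i\phi G},B_j]\|_2$. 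The crude commutator bound $\|[e^{-i\phi G},B_j]\|_2\le 2\|e^{-i\phi G}-\mathds 1\|_\infty\|B_j\|_2\le 2|\phi|\,\|G\|_\infty$ then yields $1-(T_\phi)_{\kappa,\kappa}\le 4\phi^2\|G\|_\infty^2$ for every $\kappa$. Averaging over $\phi$, using $\mu=0$ so that $\mathbb E_\phi[\phi^2]=\sigma^2$, and $\|G\|_\infty\le\|G\|_2$, gives $1-\bar T_{\kappa,\kappa}\le 4\sigma^2\|G\|_2^2$ uniformly in $\kappa$.

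Setting $f:=4\sigma^2\|G\|_2^2$, we get $\alpha\ge 1-f$, and the hypothesis $\sigma^2\in O(\log n/(\|G\|_2^2L))$ forces $f\in O(\log n/L)$. Combined with $L\in\Omega(\log^{1+\epsilon}n)$ (and with $d_m\in\Theta(1)$, as in the qubit setting underlying the statement), this is exactly condition $(b)$ of \cref{cor:small_angle}, whose conclusion with $K=\{0,1\}^M$ reads $\Varr^L\ge F(n)(\ell_\rho,\ell_H)$, $F(n)\in\Omega(1/\mathrm{poly}(n))$; concretely $F(n)$ may be taken as $\alpha^L=(1-f)^L\ge e^{-2fL}$, which is $\Omega(1/\mathrm{poly}(n))$ since $fL\in O(\log n)$ and $f<1/2$ for $n$ large.

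The step I expect to be the crux is the uniform-in-$\kappa$ diagonal bound: one must argue that averaging the parameterized unitary conjugation over small angles leaks only an $O(\sigma^2)$ fraction of any \emph{normalized} local operator out of its subspace, with a constant independent of $d_\kappa$ (a careless estimate can introduce a spurious $d/d_\kappa$ factor). The observation that avoids this is that $(\mathds 1-\mathcal P_\kappa)$ annihilates the zeroth-order piece $B_j$, so only the first-order rotation $-i\phi[G,B_j]$ can leave $\BHk$; a plain operator-norm bound on that commutator is valid for all $\phi$ (no series-truncation estimate is needed) and is manifestly $\kappa$-independent after dividing by the $d_\kappa$ basis elements. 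Everything else --- the reduction through \cref{prop:general_formula} and the final polynomial lower bound --- is a direct application of results already established in \cref{sec:results}.
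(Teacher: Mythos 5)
Your proof is correct, and it follows the same overall architecture as the paper's (establish a uniform diagonal lower bound $\mathbb{E}_\phi\{(T_\phi)_{\kappa,\kappa}\}\geq 1-4\sigma^2\|G\|_2^2$, then invoke condition $(b)$ of \cref{cor:small_angle}), but the key estimate is obtained by a genuinely different and, in fact, tighter route. The paper bounds the diagonal of the LTM of the \emph{averaged} channel $\mathcal{E}=\mathbb{E}_\phi\{\mathcal{E}_\phi\}$ by discarding off-diagonal terms and Taylor-expanding $e^{i\phi G}$ to second order around $\mu=0$ (an ``$\approx$'' step whose $O(\phi^3\|G\|^3)$ remainder, and hence the third and higher moments of $\phi$, are left uncontrolled), and then transfers the bound to $\mathbb{E}_\phi\{T_\phi\}$ via \cref{prop:noise_small_angle_equiv}. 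You instead bound $\mathbb{E}_\phi\{T_\phi\}$ directly — which is the matrix actually governing $\Varr^L$ — using the exact identity $1-(T_\phi)_{\kappa,\kappa}=\tfrac{1}{d_\kappa}\sum_j\|(\mathds{1}-\mathcal{P}_\kappa)\mathcal{E}_\phi^\dagger(B_j)\|_2^2$ (valid because unitary conjugation preserves the Hilbert--Schmidt norm) together with the non-perturbative commutator bound $\|[e^{-i\phi G},B_j]\|_2\leq 2|\phi|\,\|G\|_\infty$. This makes the estimate rigorous for every value of $\phi$, so the averaging step needs only $\mathbb{E}_\phi\{\phi^2\}=\sigma^2$ and no assumption on the tails of the initialization distribution; it also makes the $\kappa$-uniformity manifest, and removes the need to pass through \cref{prop:noise_small_angle_equiv}. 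Your handling of the trivial subspace $\mathcal{B}_0$ (whose diagonal entry equals $1$ by unitality of $\mathcal{E}_\phi^\dagger$ and exactly compensates the subtracted $\Trb{H}^2/d^2$) is at least as careful as the paper's, which implicitly assumes $H$ traceless. The only hypothesis you import silently, $d_m\in\Theta(1)$, is likewise implicit in the paper's appeal to \cref{cor:small_angle}.
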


This result represents an application of \cref{cor:small_angle} $(b)$ in the case of unitary, parameterized intermediate channels.
Note that \cref{thm:lower_bound}, which is the backbone of \cref{prop:qresnet}, is not limited to unitary circuits, but is applicable to generic quantum channels. Indeed, if we take $\mathcal{E}$ to be a noise model, \cref{cor:small_angle} $(b)$ may be analogously interpreted as a condition on the noise rates to avoid concentration. This showcases a connection between QResNets that can avoid BP and noise models not strong enough to cause NIBP. Formally, this is captured by the following Proposition. 

\begin{proposition}[Noise map and QResNet]\label{prop:noise_small_angle_equiv}
    Let $\{E_\phi\}_\phi$ be an ensemble such that $\mathcal{E}(\rho) = \mathbb{E}_\phi \{ E_\phi \rho E_\phi^\dagger\}$ is a quantum channel. Further, denote by $T_\phi$ the transfer matrix associated to each $\mathcal{E}_\phi^\dagger(\cdot) = E^\dagger_\phi \cdot E_\phi$, and by $T$ the transfer matrix of $\mathcal{E}^\dagger$. Then we have
    \begin{equation}
\mathbb{E}_\phi \{T_\phi\} \geq T,
    \end{equation}
elementwise, with equality if and only if $\mathcal{E}$ is unitary.
\end{proposition}

This result can be interpreted as follows: if a noise map satisfies \cref{cor:small_angle}, then there exists a QResNet associated to it that is able to avoid BP. 
As an example, assume that the channel $\mathcal{E} = e^{\Delta t \mathcal{L}}$ is obtained as a solution of the Lindblad equation at time $\Delta t \ll 1$, where $\mathcal{L}(\rho) = \sum_i L_i \rho L_i - 1/2\{L_i^2,\rho\}$, $L_i=L_i^\dagger \, \forall i$, i.e. we consider weak Lindbladian noise \cite{DiBartolomeo23}. Then, there always exist a unitary, stochastic unravelling $\{U_\phi\}_\phi$, such that $\mathcal{E}(\rho) = \mathbb{E}_\phi \{ U_\phi \rho U_\phi^\dagger\}$ \cite{Adler07}. From \cref{prop:noise_small_angle_equiv}, the variance of the model obtained from the ensemble by promoting $\phi$ to variational parameters is lower-bounded by the variance of the channel, and hence it provides a BP free QResNet for weak enough noise.
Interestingly, \cref{prop:noise_small_angle_equiv} can be equally applied if the ensemble is not unitary, which can happen if the noise channel $\mathcal{E}$ is non-unital. This effectively extends the framework of small angle initialization to non-unitary quantum models, e.g. those based on linear combination of unitaries (LCU) or analogous techniques \cite{Heredge24}.

Finally, it is important to note that stochastic unravellings are not unique, allowing multiple QResNets (unitary or non-unitary) to be derived from the same channel. Additionally, different choices among such models may lead to distinct variances \cite{Weisman93}, all bounded from below by the variance of $\mathcal{E}$. Indeed, any stochastic unravelling is connected to the physical channel solely through their first moment, while higher-order moments are relevant only in the context of the associated QResNet, as they do not represent otherwise measurable quantities.

\section{Discussion}

The study of loss function concentration is a central topic in variational quantum computing. While the description of this effect in the absence of noise has been recently formulated using Lie-algebraic theory \cite{Ragone24, Fontana24, Diaz23}, this approach inevitably fails in the general setting of non-unitary circuits, where the group structure description is lost. 
In this work, we employ non-negative matrix theory to derive a general expression for $\Varr^L$ in quantum circuits composed of local 2-designs interleaved with arbitrary quantum channels. 
This circuit architecture is chosen for its combination of analytical accessibility and physical relevance: it is rich enough to capture qualitatively novel variance behavior, while structured enough to address fundamental questions about the interplay between noise and unitary layers in the emergence of loss concentration.

In particular, within this framework, the variance is formulated in terms of the channel’s locality transfer matrix (LTM), which in turn enables the precise computation of $\Varr^\infty$ in the deep-circuit limit.
Specifically, the observed structure of $\Varr^\infty$ brings out a new mechanism, which we call \emph{absorption}, whereby components of $H$ pertaining to strictly contractive subspaces of $\mathcal{B}_Q$ can augment the variance of the model by coupling with non-contractive ones through the non-reversible action of the quantum noise channel. Such result is supported by numerical simulations, provided in \cref{sec:methods:noise_ent} and \cref{sec:unital_noise_examples}.

This indicates that a mixed configuration of ideal and noisy qubits, as well as a partial realization of error correction, could potentially outperform purely ideal or purely noisy systems in terms of variance. This is especially important in the early stages of fault-tolerant quantum computing, where a large quantity of noisy qubits can be utilized, but the availability of ideal qubits is still constrained \cite{Preskill18, Katabarwa24}.
This may also be beneficial in characterizing variational quantum algorithms executed on hardware, where different qubits may experience varying error rates \cite{DiBartolomeo23}. In this context, $\Varr^\infty$ can be used to approximate $\Varr^L$ in regimes where $L$ is sufficiently large to significantly impact some qubits but not yet others. 

Furthermore, this approach clarifies the role of unitality in noise-induced barren plateaus, shifting the focus on the more relevant aspect of \emph{contractivity} and \emph{alignment} between invariant subspaces of the unitary layers $\mathcal{U_\theta}$ and strictly contractive subspaces of the noise map $\mathcal{E}$.

Additionally, we introduced a general lower bound on the variance of noisy circuits. This bound is derived by restricting the mixing speed of $\Varr^L$ to prevent it from reaching its asymptotic limit.
We subsequently employ this approach to introduce QResNets, an initialization strategy analogous to small angle initializations \cite{Zhang22, Wang23, Park24, Park24_2} that can effectively mitigate the occurrence of barren plateaus. Moreover, we demonstrate that an analogous procedure can be applied to noise maps, establishing a formal connection between weak noise and QResNets. This enables us to derive BP-free QResNets as stochastic unravellings of sufficiently weak noise maps. Notably, since these unravellings are not necessarily unitary, this approach can yield BP-free architectures beyond unitary circuits, encompassing more complex models such as those employing linear combinations of unitaries (LCU) \cite{Heredge24}.

Future research may extend our findings by relaxing the local 2-design property of the unitary layer. Such a generalization would broaden the domain of applicability of our results beyond hardware-specific designs, but it would primarily offer a technical refinement without altering the conceptual insights gained in this work.
Furthermore, recent studies have established a connection between the absence of concentration and classical simulability for both ideal \cite{Cerezo24} and noisy \cite{Mele24} quantum circuits. While these results often focus on strictly contractive noise models, our work suggests a potential avenue for combining these concepts, extending their validity to more complex noise environments.
A representation of our contributions is depicted in \cref{fig:summary}: the non negative matrix approach offers a new and timely answers to several open questions related to concentration phenomena in quantum circuits and provides valuable insights into the optimal utilization of near-term and early fault-tolerant quantum devices, thus guiding the community towards the effective application of the variational quantum computation framework.

\begin{acknowledgments}
G.C. thanks F. Benatti, G. Nichele, M. Vischi, G. Di Bartolomeo, A. Candido, S. Y. Chang, M. Sahebi and M. Robbiati for useful discussions, and the CERN Openlab agreement with the University of Trieste. G. C. acknowledges financial support from University of Trieste, INFN and EU Erasmus+ Traineeship Programme.
M.G. is supported by CERN through
the CERN Quantum Technology Initiative. A.B. acknowledges support from the University of Trieste, INFN, the PNRR MUR project PE0000023 - NQSTI and the EU EIC Pathfinder project QuCoM (GA 101046973).
\end{acknowledgments}


\appendix

\section{Limiting cases of the deep circuit formula}\label{sec:methods:recover_limits}

Starting from the analysis of \cref{sec:results}, we show how we can use \cref{eq:V_infty} to recover previously known results by considering specific classes of quantum channels $\mathcal{E}$. First we point out that, being an absorption term, the last term in \cref{eq:V_infty} vanishes for unitary dynamics, which is reversible by definition. Formally, we have the following Corollary of \cref{thm:main}.

\begin{corollary}[Deep, unitary circuits]
\label{cor:deep_unitary}
Let $\rho, H \in \BH$ and let $\Phi_\theta$ a be layered quantum channel as described in \cref{sec:intro}, where $\mathcal{E}(\cdot) = W\cdot W^\dagger$, $W\in U(d)$ is an arbitrary unitary transformation. Then we have
\begin{equation}
\label{eq:BP_unitary}
    \Varr^\infty = \sum_{z>0} \frac{(\ell_{\rho})_z (\ell_{H})_z}{d_z},
\end{equation}
where $\BHz$ are invariant subspaces of $\mathcal{E}$ which can be expressed as the direct sum of $\BHk$.
\end{corollary}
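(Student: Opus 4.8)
The plan is to obtain \cref{cor:deep_unitary} directly from \cref{thm:main}, reading a unitary layer $\mathcal{E}(\cdot)=W\cdot W^\dagger$ as the special case $\mathcal{E}=\mathcal{N}\circ\mathcal{W}$ with trivial noise $\mathcal{N}=\mathrm{id}$, so that \cref{eq:V_infty} already supplies the form $\Varr^\infty=\sum_z\left[(\ell_\rho)_z(\ell_H)_z/d_z+(\ell_\rho)_z(A\ell_H)_z/d_z\right]$. Only two things then remain: (i) showing that the absorption matrix $A$ vanishes for a unitary layer, which is where reversibility enters; and (ii) accounting for the trivial subspace $\BH_0=\mathrm{span}\,\mathds{1}$, which is what turns the sum over essential blocks into a sum over $z>0$.

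For (i) I would first extract two facts from \cref{def:LTM} and the unitarity of $\mathcal{E}$. First, $\mathcal{E}^\dagger(\cdot)=W^\dagger\cdot W$ sends traceless operators to traceless ones, so $T_{\kappa,0}=0$ for every $\kappa\neq0$; the subspace $\BH_0$ decouples and the analysis takes place on $\bigoplus_{\kappa\neq0}\BHk$. Second, $\mathcal{E}^\dagger$ preserves the Hilbert–Schmidt norm, so there the rows of $T$ obey $\sum_\lambda T_{\kappa,\lambda}=\tfrac1{d_\kappa}\sum_j\|\mathcal{E}^\dagger(B_j)\|_2^2=1$: the LTM is row-stochastic, hence $\rho(T)=1$. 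A computation analogous to the one behind the identity $TD=(T^\dagger D)^t$ then gives $\sum_{\kappa}d_\kappa T_{\kappa,\lambda}=\sum_i\|\mathcal{E}(B_i^\lambda)\|_2^2=d_\lambda$, so $(d_\kappa)_\kappa$ is a \emph{strictly positive} left eigenvector of $T$ at the eigenvalue $1=\rho(T)$. Viewing $T$ as the transition matrix of a finite Markov chain, $(d_\kappa)_\kappa$ normalizes to a strictly positive stationary distribution, which forces every state to be recurrent; hence $T$ has no inessential irreducible components, the blocks $Q$ and $R$ of \cref{fig:T_and_B} are empty, and $A=R(\mathds{1}-Q)^{-1}=0$. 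The same row-stochasticity, inherited by each essential block, gives $\rho(T_z)=1$ for every $z$, so the sum in \cref{eq:v_infty_general} runs over all irreducible components; moreover $T_z$ essential means $\mathcal{E}^\dagger(\BHz)\subseteq\BHz$, and since $\mathcal{E}^\dagger$ is invertible this is an equality, so each $\BHz=\bigoplus_{\kappa\in z}\BHk$ is a genuine invariant subspace of $\mathcal{E}$ of the required direct-sum type.

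With $A=0$, \cref{eq:V_infty} collapses to $\Varr^\infty=\sum_z(\ell_\rho)_z(\ell_H)_z/d_z$ over all essential blocks, and for (ii) I would use \cref{eq:varr_0}: $\Varr^L=(\ell_\rho,\prod_l T_l\,\ell_H)-\Trb{H}^2/d^2$, where the $z=0$ contribution to the scalar product is $(\ell_\rho)_0(\ell_H)_0/d_0=\Trb{\rho}^2\Trb{H}^2/d^2=\Trb{H}^2/d^2$ since $\rho$ is a state. This cancels exactly against the subtracted term, leaving $\Varr^\infty=\sum_{z>0}(\ell_\rho)_z(\ell_H)_z/d_z$, as claimed (for $\Trb{H}=0$ the same conclusion is immediate because then $(\ell_H)_0=0$). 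I expect (i) to be the main obstacle: turning the norm-preservation of a unitary layer into the structural statement that its LTM has no inessential part — equivalently $A=0$ and $\rho(T_z)=1$ for all $z$ — is precisely the point at which the non-negative-matrix classification underpinning \cref{thm:main} (\cite{Seneta06}) does genuine work rather than being merely cited.
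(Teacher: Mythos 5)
Your proof is correct and follows the same skeleton as the paper's argument (\cref{cor:deep_unitary_app}): specialize the single-qubit-noise formula \cref{eq:V_infty} (the paper's \cref{cor:single_qubit_noise} with $\mathcal{N}=\mathrm{id}$), show the absorption term vanishes, and then cancel the $z=0$ contribution $(\ell_\rho)_0(\ell_H)_0/d_0=\Trb{H}^2/d^2$ against the subtraction in \cref{eq:varr_0} to restrict the sum to $z>0$ --- the last step being a bookkeeping detail the paper's appendix does not spell out but which you handle correctly. Where you genuinely diverge is the mechanism for killing the absorption matrix. The paper argues that unitary channels saturate the Kadison--Schwarz inequality in \cref{eq:substochasticity}, making the LTM column-stochastic, and then invokes \cref{cor:column_sum} to assert $Q=0$; taken literally, that chain (via \cref{prop:spectral_radius_bound}) only delivers $\rho(Q_k)<1$ for each inessential block, and the passage from ``every column of $T$ sums to one'' to ``there are no inessential classes at all, hence $R=0$ and $A=R(\mathds{1}-Q)^{-1}=0$'' is left implicit. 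Your argument supplies exactly that step: row-stochasticity of $T$ together with the strictly positive left eigenvector $(d_\kappa)_\kappa$ at eigenvalue $1$ gives a strictly positive stationary distribution, which forces every state of the associated finite Markov chain to be recurrent, so the inessential blocks are genuinely empty rather than merely subdominant. This is the mirror image of the paper's double-stochasticity observation (you use norm preservation of $\mathcal{E}^\dagger$ where the paper uses saturation of Kadison--Schwarz for $\mathcal{E}$), but it is the more complete route to $A=0$, and it is precisely the point where the non-negative-matrix classification does real work. Both arguments establish \cref{eq:BP_unitary}; yours is self-contained where the paper relies on the reader to see that a doubly stochastic non-negative matrix admits no transient classes.
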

This Corollary contains many interesting properties of the variance in the deep circuit limit. First, it captures the necessity of \emph{alignment} between $\rho$, $H$ and $\Phi_\theta$ in order to achieve a substantial variance, i.e. $\rho$ and $H$ need both to have non-negligible components on the same invariant subspace $\BHz$. Due to the structure of the channel, this idea is extended to the components $\mathcal{U}_\theta$ and $\mathcal{E}$, whose invariant subspaces need to align in order to keep the dimension $d_z$ of $\BHz$ from being exponentially large. Indeed, while $\BH_0$ is always an invariant subspace satisfying \cref{cor:deep_unitary}, misalignment of the local and entangling parts of the circuit could result in the entirety of the remaining space falling under a single, irreducible component of dimension $d^2-1$. In such cases we get
\begin{equation}
\label{eq:BP_unitary_irreducible}
    \Varr^\infty = \frac{\left(\|\rho\|_2^2 - 1/d\right) \left(\|H\|_2^2 -\Trb{H}^2/d\right)}{d^2-1},
\end{equation}
which implies the presence of BP regardless of $\rho$ and $H$ as long as $\|H\|_2 < O(d)$. Indeed, one can interpret the misalignment of $\mathcal{U}_\theta$ and $\mathcal{E}$ as introducing an excess of expressibility, which is known to lead to exponential concentration \cite{Holmes22}.

As a complementary remark, we point out that, conversely to the above, the first term in \cref{eq:V_infty} pertains to non-contractive subspaces, and as such, vanishes if $\mathcal{E}^\dagger$ is strictly-contractive in, at least, one direction in each $\BHz$.
This is formalized in the following Corollary of \cref{thm:main}.

\begin{corollary}[Deep, noisy circuits]\label{cor:deep_noisy} Let $\rho, H \in \BH$ and let $\Phi_\theta$ a be layered quantum channel as described in \cref{sec:intro}, where $\mathcal{E}$ is such that $\|\mathcal{E}(A)\|_2 < \|A\|_2$, for at least one $A\in\BHk \subset \BHz \, \forall z > 0$ . Then, we have
\begin{equation}\label{eq:BP_noisy_nonunital}
    \Varr^\infty = \frac{(A \ell_H)_0}{d}.
\end{equation}
In particular, if the channel is unital, $\Varr^\infty=0$.
\end{corollary}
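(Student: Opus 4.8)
The plan is to specialize Theorem \ref{thm:main} — in particular the closed form \eqref{eq:V_infty} for $\Varr^\infty$ — to the case where $\mathcal{E}^\dagger$ is strictly contractive along at least one direction inside each non-trivial invariant subspace $\BHz$. The key observation is that the first sum in \eqref{eq:V_infty}, namely $\sum_z (\ell_\rho)_z (\ell_H)_z / d_z$, ranges only over essential blocks $T_z$ with $\rho(T_z) = 1$, and I claim that under the stated hypothesis no such block exists for $z > 0$; only the trivial block $T_0$ survives. This kills the first sum entirely and leaves $\Varr^\infty = (A\ell_H)_0/d$, which is the displayed formula. The final sentence — that $\Varr^\infty = 0$ when $\mathcal{E}$ is unital — then follows because unitality of $\mathcal{E}$ means $\mathcal{E}^\dagger$ is also trace-preserving (unital $\leftrightarrow$ adjoint trace-preserving), so $\mathcal{E}^\dagger$ maps traceless operators to traceless operators, hence the LTM $T$ has no transitions from any $\BHk$ with $\kappa \neq 0$ into $\mathcal{B}_0$; consequently the off-diagonal block $R$ connecting inessential components to $T_0$ vanishes, so $A = R(\mathds{1}-Q)^{-1}$ has zero entries feeding into the $0$-row, giving $(A\ell_H)_0 = 0$.

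First I would make precise the step ``strict contractivity along one direction $\Rightarrow$ $\rho(T_z) < 1$''. Recall from the discussion preceding Theorem \ref{thm:main} that when $\rho(T_z) = 1$, the left Perron eigenvector is the all-ones vector $(v_z)_\kappa = 1$. Using the adjoint relation between $T$ and $T^\dagger$ under the scalar product \eqref{eq:scalar_prod}, the eigenvalue equation $v_z T_z = v_z$ can be rewritten, via \eqref{eq:LTM} and \eqref{eq:locality_vector}, as a statement that $\sum_{\kappa \in T_z} \sum_j \|P_{\BHz}\,\mathcal{E}^\dagger(B_j)\|_2^2 = \sum_{\kappa \in T_z} d_\kappa = d_z$, where $P_{\BHz}$ is the orthogonal projector onto $\BHz$ and $\{B_j\}$ runs over an orthonormal basis of all of $\BHz$. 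Since $\mathcal{E}^\dagger$ is completely positive and, by essentiality of the block, $\mathcal{E}^\dagger(\BHz)\subseteq\BHz$, each term obeys $\|P_{\BHz}\mathcal{E}^\dagger(B_j)\|_2^2 = \|\mathcal{E}^\dagger(B_j)\|_2^2 \le \|B_j\|_2^2 = 1$, and the total equals $d_z$ only if every inequality is saturated, i.e. $\mathcal{E}^\dagger$ is an isometry in Hilbert–Schmidt norm on all of $\BHz$. But the hypothesis furnishes an $A \in \BHk \subset \BHz$ with $\|\mathcal{E}^\dagger(A)\|_2 < \|A\|_2$ (after passing between $\mathcal{E}$ and $\mathcal{E}^\dagger$ — see below), contradicting the isometry property. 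Hence $\rho(T_z) < 1$ for all $z > 0$, so those blocks drop out of the sum in \eqref{eq:v_infty_general}/\eqref{eq:V_infty}, and the only contribution is from $T_0$.

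I expect the main obstacle to be bookkeeping around the adjoint: the corollary is phrased in terms of $\|\mathcal{E}(A)\|_2 < \|A\|_2$, whereas the LTMs $T_l$ in Theorem \ref{thm:main} are built from $\mathcal{E}_l^\dagger$. One must check that strict contractivity of $\mathcal{E}$ in Hilbert–Schmidt norm along some direction in $\BHz$ is equivalent to (or at least implies) the same for $\mathcal{E}^\dagger$, which holds because $\|\mathcal{E}\|_2 = \|\mathcal{E}^\dagger\|_2$ as operator norms on $(\BH, \|\cdot\|_2)$ and because $\mathcal{E}$ preserving $\BHz$ is equivalent to $\mathcal{E}^\dagger$ preserving its orthogonal complement within the relevant decomposition — here I would lean on the properties collected in \cref{app:loc_properties} and \cref{appendix:proof_main}. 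A secondary subtlety is confirming that $\rho(T_z) < 1$ genuinely removes the block from \eqref{eq:v_infty_general}: this is exactly the content of the last clause of Theorem \ref{thm:main} (``the sum is performed over blocks $T_z$ such that $\rho(T_z)=1$''), so once the spectral-radius bound is in hand the rest is immediate. Finally, for the unital case I would spell out that $A=R(\mathds{1}-Q)^{-1}$ inherits a vanishing $0$-component from $R$, using that $(\mathds{1}-Q)^{-1}=\sum_{k\ge 0}Q^k$ is a finite-entry non-negative matrix supported on inessential indices only, so multiplying by $R$ on the left and then reading the $0$-row gives zero.
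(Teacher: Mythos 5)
Your proposal is correct and follows essentially the same route as the paper: specialize \cref{thm:main} by showing $\rho(T_z)<1$ for all $z>0$ via the column-sum (sub-stochasticity) characterization of \cref{cor:column_sum}, where your "saturation of the Hilbert--Schmidt isometry" step is exactly the paper's "Kadison--Schwarz inequality is strict" step, and then handle the unital case by noting that trace preservation of $\mathcal{E}^\dagger$ kills the $0$-row of $R$. If anything, you are more careful than the paper about reconciling the hypothesis on $\mathcal{E}$ with the LTM being built from $\mathcal{E}^\dagger$, a point the paper's appendix silently sidesteps by restating the condition directly for $\mathcal{E}^\dagger$.
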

Note that, even if \cref{eq:BP_noisy_nonunital} is inversely proportional to $d=2^n$, $\Varr^\infty$ is not necessarily exponentially suppressed, as in general the contribution of the observable increases with the same speed, i.e. $\|H\|_2^2 \sim d$.
As before, this Corollary captures the main features of noise-induced barren plateaus (NIBP). In fact, it is clear that strictly contractive channels with a unique fixed point, fall into the assumptions of \cref{cor:deep_noisy}, and therefore exhibit some form of concentration. However, \cref{cor:deep_noisy} is not limited to them, extending the noise-induced concentration to a wider class of noise maps, which crucially depend on the structure of the unitary part of the channel $\mathcal{U}_\theta$. According to the method introduced here, this can clearly be interpreted as a consequence of the interaction between noise and unitary layers, which effectively ``spread" the contractive effect of $\mathcal{E}$ to the whole irreducible component.  
Unital channels will suffer most severely from NIBP, since in that case the absorption term in \cref{eq:BP_noisy_nonunital} vanishes. Contrarily, as pointed out in the literature \cite{Mele24, Singkanipa24}, non-unital channels may avoid the exponential concentration. Here, we have shown how these results obtained in the literature can be seen as the contribution to the absorption term of $\mathcal{B}_0$ which cannot be strictly contractive due to trace preservation of $\Phi_\theta$.

\section{Locality and locality transfer matrix properties}\label{app:loc_properties}

In order to show the main properties of locality vectors and locality transfer matrices (LTM), it is convenient, for each subsystem $\mathcal{H}_m$, to fix an orthonormal basis $\{P^{(m)}_j\}_{j=1}^{d_m-1}$ of $\mathcal{B}_m$ composed of traceless, Hermitian operators together with the identity operator, each normalized with respect to the Hilbert-Schmidt norm, namely $P^{(m)}_0 = \mathds{1}/\sqrt{d_m}$ , $P^{(m)}_{j_m} = P^{(m)\dagger}_{j_m} \, \forall {j_m}$ ,$\Trb{P^{(m)}_{j_m} P^{(m)}_{k_m}} = \delta_{{j_m}{k_m}}$.
Starting from these, we can build an orthonormal basis $\{P_j\}_j$ for the whole space by means of tensor products. Each basis element will be labelled by the multi-index $j = (j_1, ... , j_M)$, where the entries $j_m$ refer to an element of the local bases, and hence $j_m \in \{0,..., d_m^2-1\}$.  Such a basis will be dubbed a \emph{local} basis.  As an example, if each $\mathcal{H}_m=\mathbb{C}^2$ is a qubit, the normalized Pauli strings form a local basis for $\BH$. Given a local basis $\{P_j\}_j$, it is possible to group the elements in disjoint sets. In particular, for any given binary string $\kappa = \{0,1\}^M$, we can collect in the set $S_\kappa$ all basis elements acting non trivially on $\mathcal{H}_m$ if and only if $\kappa_m=1$.  For practical reasons, we introduce the indicator function $\deltat{i}{\kappa}$ for the set $S_\kappa$, defined by
\begin{equation}
    \deltat{i}{\kappa} = \prod_{m=0}^M \Tilde{\delta}_{i_m,\kappa_m}=\begin{cases} 
      1 & \text{if} \;\; \kappa_m = 0 \Leftrightarrow i_m =0 \; \forall m \\
      0 & \text{otherwise}
   \end{cases}
\end{equation}
From the definition, we can derive some simple properties of this function.
\begin{lemma}\label{lemma:indicator_func}
    The indicator function $\deltat{i}{b}$ has the following properties:
    \begin{equation}
        \sum_i \deltat{i}{\kappa} = d_{\kappa}, \;\;\;\; \sum_{\kappa} \deltat{i}{\kappa} = 1, \;\;\;\; \sum_{\kappa \in K} \deltat{i}{\kappa}\deltat{i}{\lambda} = \deltat{i}{\lambda} \sum_{\kappa \in K} \delta_{\kappa,\lambda} 
    \end{equation}
    where $d_\kappa = \prod_{m=0}^M (d_m^2-1)^{\kappa_m}$, $K \subset \{0,1\}^M$, and $\delta_{\kappa,\lambda}$ is the usual Kronecker delta.
\end{lemma}
\begin{proof}
    All the results follow directly from the definition.
\end{proof}
Using this notation, we can express the \emph{locality} $\ell_A$  of some operator $A\in\BH$ and \emph{locality transfer matrix} $T$ of a linear map $\Lambda:\BH\to\BH$ defined in the main text as 

\begin{equation}
    (\ell_A)_{\kappa} = \sum_{j} \Trb{P_j A}^2 \deltat{j}{\kappa}
\end{equation}
and
\begin{equation}\label{eq:alternate_LTM}
     T_{\kappa, \lambda} = \frac{1}{d_\lambda} \sum_{i, j} \Trb{P_i\Lambda(P_j)}^2 \deltat{i}{\kappa} \deltat{j}{\lambda}
\end{equation}

respectively. It is immediately realized that both these quantities are basis-independent.

\begin{lemma}\label{lemma:basis_independence_T}
Given a bounded operator $A\in\BH$ and a partition into subsystems, the locality vector $\ell_A$ is uniquely defined, i.e. it does not depend on the choice of local basis. Similarly, given a quantum channel $\mathcal{E}:\BH\to\BH$. the corresponding locality transfer matrix $T$ is uniquely defined.
\end{lemma}

\begin{proof}
    Let $\{P_{j}\}$ and $\{B_{i}\}$ be two local bases for a given subsystem. Then
    \begin{equation}
    \begin{split}
        &\sum_{j} \Trb{P_{j}A}^2 \deltat{j}{\kappa}=\sum_{j} \Trb{\sum_{i} \Trb{B_i P_j} B_{i}A}^2 \deltat{j}{\kappa}\\
        = &\sum_{j} \sum_{i, i'} \Trb{B_iP_j} \Trb{B_{i'} P_j} \Trb{B_{i}A} \Trb{B_{i'}A} \deltat{j}{\kappa}\\
        = &\sum_{i, i'} \left(\sum_{j}\Trb{B_iP_j} \Trb{B_{i'} P_j} \deltat{j}{\kappa} \right)\Trb{B_{i}A} \Trb{B_{i'}A}\\
        = &\sum_{i} \Trb{B_{i}A}^2 \deltat{i}{\kappa}
    \end{split}
    \end{equation}
    where the last equality is due to
    \begin{equation}
    \begin{split}
    &\sum_{j}\Trb{B_iP_j} \Trb{B_{i'} P_j} \deltat{j}{\kappa}=\\ &=  \prod_{m=1}^M \left( \sum_{j_m=1}^{d_m} \Trb{B_{i_m'} P_{j_m}}\Trb{B_{i_m} P_{j_m}}\Tilde{\delta}_{j_m,\kappa_m}\right)\\ & = \prod_{m=1}^M \left(  \Trb{B_{i_m'} B_{i_m}}\Tilde{\delta}_{i_m,\kappa_m}\right) = \delta_{i,i'} \deltat{i}{\kappa} 
    \end{split}
    \end{equation}
    which holds since both $\{P_{j_m}\}$ and $\{B_{i_m}\}$ are orthonormal bases of $\mathcal{B}_m$ by definition of local basis. A totally analogous calculation yields the same result for the locality transfer matrix $T$.
\end{proof}
Thanks to the formulation of \cref{eq:alternate_LTM}, the relation between the LTM of a map $\Lambda$ and the Hermitian adjoint $\Lambda^\dagger$ with respect to the Hilbert-Schmidt scalar product can be seen. In particular, we have
\begin{equation}\label{eq:detailed_balance}
\begin{split}
    d_\lambda T_{\kappa, \lambda} &= \sum_{i, j} \Trb{P_i\Lambda(P_j)}^2 \deltat{i}{\kappa} \deltat{j}{\lambda}\\
    &= \sum_{i, j} \Trb{\Lambda^\dagger (P_i) P_j}^2 \deltat{i}{\kappa} \deltat{j}{\lambda} = d_\kappa T^\dagger_{\lambda,\kappa},
\end{split}
\end{equation}
which can be compactly written in matrix form as $TD = (T^\dagger D)^t$, with $D_{\kappa, \lambda} = d_\kappa \delta_{\kappa, \lambda}$. For sake of readability, here we introduce a shorthand notation for the scalar product $(\cdot,\cdot)$ in $\mathbb{R}^{2^M}$ such that $T$ and $T^\dagger$ are \emph{also} Hermitian adjoint of one another, i.e.
\begin{equation}
    (a,b) = a^tD^{-1}b = \sum_\kappa \frac{a_\kappa b_\kappa}{d_\kappa}.
\end{equation}
This trivially follows from the chain $(a,Tb) = a^t D^{-1}Tb=a^tD^{-1}D(T^\dagger)^t D^{-1}b = (T^\dagger a)^tD^{-1}b = (T^\dagger a, b)$.

\section{Proof of {\cref{prop:general_formula}}}\label{app:preliminary_results}

In this section we provide a proof for the building blocks of the main results of this work. Note that, what follows hinge on the structure of the circuit $\Phi_\theta$ provided in the main text, which we recall is composed of $L$ layers of interleaved unitary and noise channels 
\begin{equation}
    \Phi_{\theta} = \mathcal{U}_{\theta_{L+1}} \circ \mathcal{E}_{L}\circ\mathcal{U}_{\theta_{L-1}} \dots \circ  \mathcal{E}_{1}\circ\mathcal{U}_{\theta_{1}}.
\end{equation} 
In particular, we assume that $\mathcal{U}_\theta: \rho \mapsto U_\theta \rho U_\theta^\dagger$, where $U_\theta = \bigotimes_m U^{(m)}_{\theta_m}$ is a \emph{local 2-design} for the system. This statement is more precisely captured here. 
\begin{definition}(Local design)
Given a unitary ensemble $\{U_\theta\}_{\theta\in \Theta}$ with a given probability distribution over the parameter space $\Theta$, we say it forms a local $t$-design for the system if each element is factorized with respect to the partition, i.e. $U_\theta = \bigotimes_m U^{(m)}_{\theta_m}$ each acting solely on $\mathcal{H}_m$, and additionally
\begin{equation}
    \int_\Theta d\theta U^{(m) \otimes t }_\theta \otimes U^{(m)* \otimes t}_\theta = \int_{V \in U(d_m)} d\mu(V) V^{(m) \otimes t}\otimes V^{(m)* \otimes t}
\end{equation}
where the second integral is performed with respect to the Haar measure.
\end{definition}
With this notation in place, we are ready to start. The first Lemma provides a formula for the expectation value of a circuit in the aforementioned class, showing that they form global 1-designs.
\begin{lemma}[Global 1-design]\label{lemma:expval}
Let $A, B \in \BH$, and let $\{U_\theta\}_{\theta \in \Theta}$ be a unitary ensemble forming a local 1-design. Then
\begin{equation}
    \mathbb{E}_\theta \left\{ \Trb{ AU_\theta^{\dagger} B U_\theta}\right\} = \frac{\Trb{A}\Trb{B}}{d}
\end{equation}
\end{lemma}

\begin{proof}
Let $\{P_j\}_j$ be a local basis for the system, and consider the respective decompositions of $A$ and $B$, namely $A=\sum_{i} a_i P_i $ and  $B=\sum_{j} b_j P_j$. Note that, each component $a_i$ is defined as ${a_i = \Trb{P_iA}}$, and consequently $a_0 = \frac{1}{\sqrt{d}}\Trb{A}$ (respectively for $B$). Then we have the following chain of equalities:
\begin{equation}
\begin{split}
    &\int_{\Theta} \prod_{m=1}^{M} d\theta \Trb{ AU_\theta^{\dagger} BU_\theta}=\\
    = &\sum_{i,j} a_i b_j \int_\Theta \prod_{m=1}^{M} d\theta \Trb{ P_i U_\theta^{\dagger} P_j U_\theta} \\
    = &\sum_{i,j} a_i b_j \prod_{m=1}^{M} \int_{U\in U(d_m)} d\mu(U) \Trb[m]{ P^{(m)}_{i_m} U^{\dagger} P^{(m)}_{j_m} U}\\
    = &\sum_{i,j} a_i b_j \prod_{m=1}^{M} \frac{1}{d_m}\Trb[m]{P^{(m)}_{i_m}} \Trb[m]{P^{(m)}_{j_m}} \\
    = &\sum_{i,j} a_i b_j \prod_{m=1}^{M} \delta_{0,i_m}\delta_{0,j_m} = a_0 b_0
\end{split}
\end{equation}
where $\Tr_m$ denotes the partial trace over the $m$-th subsystem.
\end{proof}

Regarding the second moment, it can be computed using the following Lemma.

\begin{lemma}\label{lemma:weingarten}
Let $\{P_j\}$ be a local basis and let $\{U_\theta\}_{\theta \in \Theta}$ be a unitary ensemble forming a local 2-design. Then
\begin{equation}
\begin{split}
    &\mathbb{E}_\theta \left\{  \Trb{ P_i U_\theta^{\dagger} P_jU_\theta} \Trb{ P_k U_\theta^{\dagger} P_lU_\theta} \right\}= \\ = &\delta_{ij}\delta_{kl} \prod_{m=1}^M\left(\delta_{0i_m}\delta_{0j_m} + (1-\delta_{0i_m})(1-\delta_{0j_m})\frac{1}{d^2_m-1}\right)
\end{split}
\end{equation}
\end{lemma}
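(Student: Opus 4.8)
The plan is to compute the Haar average of the product of two "twirled trace" factors by using the local $2$-design property to factorize the integral across subsystems, and then apply the standard single-system Weingarten calculus on each factor $U^{(m)}$ independently. Since $U_\theta = \bigotimes_m U^{(m)}_{\theta_m}$ and the four operators $P_i, P_j, P_k, P_l$ are all product operators with respect to the partition, each trace $\Trb{P_i U_\theta^\dagger P_j U_\theta}$ factorizes as $\prod_m \Trb[m]{P^{(m)}_{i_m} U^{(m)\dagger} P^{(m)}_{j_m} U^{(m)}}$, and the expectation of the product of two such quantities becomes $\prod_m \mathbb{E}_{U^{(m)}}\{ \Trb[m]{P^{(m)}_{i_m} U^{(m)\dagger} P^{(m)}_{j_m} U^{(m)}} \Trb[m]{P^{(m)}_{k_m} U^{(m)\dagger} P^{(m)}_{l_m} U^{(m)}} \}$. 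So everything reduces to the single-qubit-group computation on $U(d_m)$.

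\medskip

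\noindent First I would establish the single-system identity: for $U$ Haar-distributed on $U(d')$ and Hermitian operators $X, X', Y, Y'$,
\begin{equation}
\mathbb{E}_U\left\{ \Trb{X U^\dagger X' U}\Trb{Y U^\dagger Y' U}\right\}
\end{equation}
can be evaluated by writing $\Trb{X U^\dagger X' U} = \sum_{ab} X_{ab}(U^\dagger X' U)_{ba}$ and using the known second-moment formula for Haar averages of $U^{\dagger}\otimes U^{\dagger}\otimes U\otimes U$ in terms of Weingarten functions $\mathrm{Wg}$, with permutations in $S_2$. Concretely, $\mathbb{E}_U[U_{a_1 b_1}U_{a_2 b_2}\bar U_{c_1 d_1}\bar U_{c_2 d_2}]$ is a sum over $\sigma,\tau\in S_2$ of $\delta$-functions weighted by $\mathrm{Wg}(\sigma\tau^{-1},d')$. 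Carrying out the contractions and specializing to the traceless/identity basis elements $P^{(m)}_{j_m}$ (which satisfy $\Trb{P^{(m)}_{j_m}}=\sqrt{d_m}\,\delta_{0 j_m}$ and $\Trb{P^{(m)}_{j_m}P^{(m)}_{k_m}}=\delta_{j_m k_m}$) collapses the result to a product of Kronecker deltas: one gets a nonzero answer only when $i_m=j_m$ and $k_m=l_m$, and the value is $\delta_{0 i_m}\delta_{0 j_m}$ when $i_m=0$ (the identity channel on that factor, giving $\mathbb{E}\,1\cdot 1 = 1$) versus $\frac{1}{d_m^2-1}$ when $i_m\neq 0$ (a traceless operator twirled into the traceless subspace, whose overlap-squared averages to $1/(d_m^2-1)$). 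This matches exactly the second-moment structure already used to prove \cref{prop:warm_up}, and indeed this Lemma is the per-factor input to that Proposition.

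\medskip

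\noindent Then the multi-system statement follows by taking the product over $m$: the condition "$i_m=j_m$ and $k_m=l_m$ for all $m$" is precisely $\delta_{ij}\delta_{kl}$ on the multi-indices, and the product $\prod_m \left(\delta_{0 i_m}\delta_{0 j_m} + (1-\delta_{0 i_m})(1-\delta_{0 j_m})\tfrac{1}{d_m^2-1}\right)$ is assembled factor by factor, using that on each subsystem either $i_m=j_m=0$ or $i_m=j_m\neq 0$ (the cross term vanishes because $\delta_{ij}$ already forces $i_m=j_m$). One can also replace the Haar integral by the local $2$-design integral at the outset with no change, since by definition they agree on degree-$2$ polynomials in $U^{(m)}$ and $U^{(m)*}$, and the integrand is exactly such a polynomial.

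\medskip

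\noindent The main obstacle is the single-system Weingarten bookkeeping: keeping the four matrix-index contractions straight when expanding the two traces and pairing them against the $S_2\times S_2$ sum, and then verifying that after substituting the normalization relations for $\{P^{(m)}_{j_m}\}$ all the "off-diagonal" permutation terms either cancel or combine to give exactly $1/(d_m^2-1)$ (this is where the dimension of the traceless subspace $d_m^2-1$ enters, via $\mathrm{Wg}(e,d_m)+\mathrm{Wg}((12),d_m)$-type combinations). Everything else — the tensor factorization and the final product assembly — is routine given \cref{lemma:indicator_func} and the local-design definition.
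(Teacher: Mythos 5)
Your overall strategy is exactly the paper's: factorize the expectation over the subsystems using the local $2$-design property, evaluate each factor with the single-system second-moment (Weingarten) formula, and reassemble the product over $m$. However, the one step you never actually carry out --- the Weingarten contraction --- is precisely where your asserted answer goes wrong. The second-moment formula pairs the two operators \emph{outside} the conjugation with each other and the two operators \emph{inside} with each other: for traceless basis elements the surviving term is $\Trb{P_{i_m}P_{k_m}}\Trb{P_{j_m}P_{l_m}}/(d_m^2-1)$, so the Kronecker deltas produced are $\delta_{i_m k_m}\delta_{j_m l_m}$, not the $\delta_{i_m j_m}\delta_{k_m l_m}$ you claim (``nonzero only when $i_m=j_m$ and $k_m=l_m$''). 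A sanity check on one qubit: $\mathbb{E}_U\{\Trb{Z U^\dagger X U}^2\}=\Trb{Z^2}\Trb{X^2}/(d^2-1)\neq 0$ even though $i\neq j$ there. The prefactor $\delta_{ij}\delta_{kl}$ in the printed statement is in fact an index typo for $\delta_{ik}\delta_{jl}$: the paper's own derivation produces the latter, and the downstream use in the proof of \cref{prop:warm_up} collapses the quadruple sum $\sum_{i,j,k,l}a_ib_ja_kb_l(\cdots)$ to $\sum_{i,j}a_i^2b_j^2(\cdots)$, i.e.\ sets $k=i$ and $l=j$. By pattern-matching to the statement rather than doing the contraction, you have ``derived'' an identity that is false as you interpret it.

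The same confusion infects your description of the per-factor value: you assert that on each subsystem ``either $i_m=j_m=0$ or $i_m=j_m\neq 0$''. In fact $i_m$ and $j_m$ are never forced to coincide; the factor $\delta_{0i_m}\delta_{0j_m}+(1-\delta_{0i_m})(1-\delta_{0j_m})\frac{1}{d_m^2-1}$ only requires them to be \emph{simultaneously} trivial or simultaneously non-trivial (e.g.\ $i_m$ labelling $Z$ and $j_m$ labelling $X$ contributes $1/(d_m^2-1)$). To repair the proof you need only perform the $S_2\times S_2$ bookkeeping you outline (or quote the closed trace form of the second moment, as the paper does), keep all four terms --- the two $\frac{1}{d_m(d_m^2-1)}$ cross terms do not cancel but combine with the leading terms into $\delta_{0i_m}\delta_{0j_m}\delta_{0k_m}\delta_{0l_m}+\frac{1}{d_m^2-1}(\delta_{i_mk_m}-\delta_{0i_m}\delta_{0k_m})(\delta_{j_ml_m}-\delta_{0j_m}\delta_{0l_m})$ --- and then read off the correct delta structure. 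Everything else in your writeup (the tensor factorization of the traces, the replacement of the Haar integral by the local design, the final product assembly) is sound and identical to the paper's route.
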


\begin{proof}
To prove this, we make use of the following result of Weingarten Calculus

\begin{equation}
\label{app:eq:weingarten}
\begin{split}
&\int_{U \in U(d)} d\mu(U) \Trb{AU^\dagger BU} \Trb{CU^\dagger DU} \\
&= \frac{1}{d^2-1} \left(\Trb{A}\Trb{B}\Trb{C}\Trb{D}+\Trb{AC}\Trb{BD}\right) \\
&+ \frac{1}{d(d^2-1)}\left(\Trb{AC}\Trb{C}\Trb{D}+\Trb{A}\Trb{C}\Trb{BD}\right)
\end{split}
\end{equation}

Based on \cref{app:eq:weingarten}, the result follows from direct integration:

\begin{equation}
\begin{split}
\prod_{m=1}^{M} &\int_{U\in U(d_m)} d \mu(U) \Trb{ P^{(m)}_{i_m} U^\dagger P^{(m)}_{j_m}U } \times\\
&\times\Trb{ P^{(m)}_{k_m} U^\dagger P^{(m)}_{l_m}U }\\
= \prod_{m=1}^M &\frac{1}{d_m^2-1} \Big(\Trb{P^{(m)}_{i_m}}\Trb{P^{(m)}_{j_m}}\Trb{P^{(m)}_{k_m}}\Trb{P^{(m)}_{l_m}}\\
+&\Trb{P^{(m)}_{i_m}P^{(m)}_{k_m}}\Trb{P^{(m)}_{j_m}P^{(m)}_{l_m}}\Big) \\
- &\frac{1}{d_m(d_m^2-1)}\Big(\Trb{P^{(m)}_{i_m}P^{(m)}_{k_m}}\Trb{P^{(m)}_{j_m}}\Trb{P^{(m)}_{l_m}}\\
&+\Trb{P^{(m)}_{i_m}}\Trb{P^{(m)}_{k_m}}\Trb{P^{(m)}_{j_m}P^{(m)}_{l_m}}\Big)\\
=\prod_{m=1}^M& \frac{1}{d_m^2-1} (d_m^2\, \delta_{0i_m}\delta_{0j_m}\delta_{0k_m}\delta_{0l_m}\\
&+ \delta_{i_m k_m}\delta_{j_m l_m} - \delta_{0i_m}\delta_{0k_m}\delta_{j_m l_m} - \delta_{i_m k_m}\delta_{0j_m}\delta_{0l_m})\\
=\prod_{m=1}^M& \delta_{0i_m}\delta_{0j_m}\delta_{0k_m}\delta_{0l_m}\\ 
&+\frac{1}{d_m^2-1} (\delta_{{i_m} {k_m}} - \delta_{0i_m}\delta_{0k_m}) (\delta_{{j_m} {l_m}} - \delta_{0j_m}\delta_{0l_m})\\
=\,&\delta_{ik}\delta_{jl} \prod_{m=1}^M \left(\delta_{0i_m}\delta_{0j_m} + (1-\delta_{0i_m})(1-\delta_{0j_m})\frac{1}{d^2_m-1}\right)
\end{split}
\end{equation}
\end{proof}

In particular, \cref{lemma:weingarten} can be used to compute the variance of loss functions computed as expectation values $\Loss = \Trb{\mathcal{U}_\theta(A)B}$, i.e. in the absence of intermediate channels $\mathcal{E}_l$.

\begin{proposition}\label{prop:warm_up}
Let $A, B \in \BH$, and let $\{U_\theta\}_{\theta \in \Theta}$ be a unitary ensemble forming a local 2-design. Then

\begin{equation}
\mathbb{E}_\theta \left\{\Trb{ AU_\theta^{\dagger} B U_\theta}^2 \right\} = \left(\ell_A, \ell_B\right),
\end{equation}

where $(\cdot,\cdot)$ is the scalar product defined in \cref{eq:scalar_prod}.
\end{proposition}

\begin{proof}
Let $\{P_j\}$ be a local basis for the system, and consider the respective decompositions of $A$ and $B$. By \cref{lemma:weingarten} we have

\begin{equation}\label{eq:lemma_weingarten0}
\begin{split}
    \mathbb{E}_\theta &\left\{\Trb{ AU_\theta^\dagger B U_\theta}^2 \right\} = \\&\sum_{i,j} a_i^2b_j^2 \prod_{m=1}^M \left(\delta_{0i_m}\delta_{0j_m} + (1-\delta_{0i_m})(1-\delta_{0j_m})\frac{1}{d^2_m-1}\right)
\end{split}
\end{equation}

In the following, it will be convenient to recast the product on the right-hand side of \cref{eq:lemma_weingarten0} into the equivalent formulation

\begin{equation}
    \begin{split}
    &\prod_{m=1}^M\left(\delta_{0i_m}\delta_{0j_m} + (1-\delta_{0i_m})(1-\delta_{0i_m})\frac{1}{d^2_m-1}\right) =\\ &= \sum_{\kappa \in \{0,1\}^{M}} \frac{1}{d_\kappa} \prod_{m=1}^M (\delta_{0i_m}\delta_{0j_m})^{1-\kappa_m}(1-\delta_{0i_m})^{\kappa_m}(1-\delta_{0j_m})^{\kappa_m}
    \end{split}
\end{equation}

where the binary vectors $\kappa\in\{0,1\}^M$ identify all possible sets $S_\kappa$ introduced in \cref{app:loc_properties} and $d_\kappa = \prod_{m=1}^M({d_m^2-1})^{\kappa_m}$. Putting it back into \cref{eq:lemma_weingarten0} we get

\begin{equation}
\begin{split}
     &\sum_{\kappa \in \{0,1\}^{M}} \frac{1}{d_\kappa} \sum_{i,j} a_i^2 b_j^2 \prod_{m=1}^M (\delta_{0i_m}\delta_{0j_m})^{1-\kappa_m}\times\\
     &\times(1-\delta_{0i_m})^{\kappa_m}(1-\delta_{0j_m})^{\kappa_m}\\
     = &\sum_{\kappa \in \{0,1\}^{M}} \frac{1}{d_\kappa}\left(\sum_i a_i^2 \prod_{m=1}^M \delta_{0i_m}^{1-\kappa_m}(1-\delta_{0i_m})^{\kappa_m}\right)\times\\
     &\times\left(\sum_j b_j^2 \prod_{m=1}^M \delta_{0j_m}^{1-\kappa_m}(1-\delta_{0j_m})^{\kappa_m}\right)\\
     = &\sum_{\kappa \in \{0,1\}^{M}} \frac{1}{d_\kappa}\left(\sum_i a_i^2 \Tilde{\delta}_{i, \kappa}\right)\left(\sum_j b_j^2 \Tilde{\delta}_{j, \kappa}\right)\\
     = &\sum_{\kappa \in \{0,1\}^{M}} \frac{(\ell_A)_\kappa (\ell_B)_\kappa}{d_\kappa} = (\ell_A, \ell_B)
\end{split}
\end{equation}
from which the Proposition follows.

\end{proof}

This can be extended to the general case introducing the action of the intermediate channels $\mathcal{E}_l$, and in particular we get the following Proposition.

\begin{proposition}\label{app:prop:general_formula}
Let $A, B \in \BH$ and $\Lambda:\BH \to \BH$ be a linear map. Furthermore, let $\{U_{\theta_1}\}_{\theta_1 \in \Theta}$ and $\{V_{\theta_2}\}_{\theta_2 \in \Theta}$ be independent, unitary ensembles each forming a local 2-design. Then

\begin{equation}
    \mathbb{E}_{\theta_1,\theta_2} \left\{ \Trb{AU^\dagger_{\theta_1}\Lambda(V_{\theta_2}BV^\dagger_{\theta_2})U_{\theta_1}}^2 \right\} = (\ell_A,T\ell_B)
\end{equation}

where $T$ is the locality transfer matrix associated to $\Lambda$.

\end{proposition}

\begin{proof}
Let $\Tilde{B}_{\theta_2} = \Lambda(V_{\theta_2} B V_{\theta_2})$. By \cref{app:prop:general_formula}, we have

\begin{equation}
\begin{split}
    &\mathbb{E}_{\theta_2} \mathbb{E}_{\theta_1}\left\{ \Trb{ A U^\dagger_{\theta_1} B_{\theta_2}U_{\theta_1}}^2 \right\}\\
    &= \mathbb{E}_{\theta_2} \left\{ ( \ell_A,\ell_{\Tilde{B}_{\theta_2}}) \right\} \\
    &= \sum_{\kappa \in \{0,1\}^{M}} \frac{1}{d_\kappa} (\ell_{A})_\kappa \,\mathbb{E}_{\theta_2} \left\{ (\ell_{\Tilde{B}_{\theta_2}} )_\kappa \right\}
\end{split}
\end{equation}

Expanding the definition on the last term with respect to the local basis $\{P_j\}_j$, ad applying again \cref{app:prop:general_formula} we get

\begin{equation}
\begin{split}
&\mathbb{E}_{\theta_2} \left\{ (\ell_{\Tilde{B}_{\theta_2}})_\kappa \right\} = \sum_i \mathbb{E}_{\theta_2} \left\{\Trb{P_{i}\Tilde{B}_{\theta_2}}^2 \right\} \Tilde{\delta}_{i, \kappa}\\ 
&= \sum_i \mathbb{E}_{\theta_2} \left\{\Trb{\Lambda^\dagger(P_{i})V^\dagger_{\theta_2} BV_{\theta_2}}^2 \right\} \Tilde{\delta}_{i, \kappa}\\
&= \sum_{\lambda \in \{0,1\}^{M}} \frac{1}{d_\lambda} \sum_{i, j,k} \Trb{P_i\Lambda(P_j)}^2 \Trb{P_kB}^2 \Tilde{\delta}_{j, \lambda} \Tilde{\delta}_{k, \lambda} \Tilde{\delta}_{i, \kappa}\\
&= \sum_{\lambda \in \{0,1\}^{M}} \left( \frac{1}{d_\lambda} \sum_{i, j} \Trb{P_i\Lambda(P_j)}^2 \Tilde{\delta}_{i, \kappa} \Tilde{\delta}_{j, \lambda}\right)\times\\
&\times\left(\sum_k\Trb{P_kA}^2 \Tilde{\delta}_{k, \lambda} \right)\\
&= \sum_{\lambda \in \{0,1\}^{M}} T_{\kappa,\lambda} (\ell_B)_\lambda = (T \ell_B)_\kappa
\end{split}
\end{equation}

where $\Lambda^\dagger$ is the Hermitian adjoint of $\Lambda$ with respect to the Hilbert-Schmidt scalar product.
\end{proof}

Iterated application of \cref{app:prop:general_formula} for an initial state $\rho$, an observable $H$, and a general intermediate quantum channel $\mathcal{E}$, yields \cref{eq:general_formula} of the main text.

\section{Proof of \cref{thm:main}}\label{appendix:proof_main}

The proof of \cref{thm:main} is based on the characterization of the general LTM for the Hermitian adjoint $\mathcal{E}^\dagger$ of arbitrary quantum channel. To do so, several aspects of non-negative matrix theory, as well as the contractivity properties of $\mathcal{E}^\dagger$ are key. For sake of clarity and completeness we recall them in the following.

\subsection{Preliminaries}

In this section, we start with preliminary concepts and definitions involving non-negative matrices, and then we recall some well known facts and definitions about operator norms and quantum channel contractivity.

\subsubsection{Elements of non-negative matrix theory}

In this section, we briefly recap on the main results on non-negative matrix theory useful in the proof of \cref{thm:main} of the main text. For a complete discussion and proofs of the cited results, we refer the interested reader to Refs.~\cite{Seneta06, Meyer00}. Let's start by the definition of non-negative matrix.
\begin{definition}[Non-negative matrix]
    A $n\times n$ matrix $T$ is said to be non-negative if each entry $(T)_{ij}\geq 0$.
\end{definition}
The general behaviour of non negative matrices can vary greatly, but there is a class of matrices, called \emph{irreducible}, which have very informative spectral properties.
\begin{definition}[Irreducible matrix] A $n \times n$ non-negative matrix $T$ is said to be \emph{irreducible} if for two arbitrary indices $i,j =1,...n$, there exist $l=l(i,j)\in \mathbb{N}$ such that $(T^l)_{ij}>0$ . Moreover, we will say that $T$ has period $p$, where $p$ is the greatest common divisor of all $l(i,i)$ that satisfy $(T^l)_{ii}>0$ $\forall i$.
\end{definition}
Equivalently, if we introduce the graph $\mathcal{G}_T$ whose adjacency matrix is $T$, then it can be shown that $T$ is irreducible if and only if $\mathcal{G}_T$ is strongly connected, and that the period $p$ reduces to the great common divisor of the lengths of all closed directed paths in $\mathcal{G}_T$ \cite{Seneta06}.  Furthermore, it will be useful in the following to distinguish two classes of irreducible matrices, namely \emph{cyclic} (or \emph{periodic}) and \emph{primitive} (or \emph{aperiodic}), which are characterized as having period $p>1$ and $p=1$ respectively. 
\\ \\
One of the main results involving irreducible matrices is the celebrated \emph{Perron-Frobenius} theorem, which characterizes the spectral properties of this class. We recall it here for convenience.

\begin{theorem}[Perron-Frobenius]
Let $T$ be a $n\times n$ non-negative, irreducible matrix. Then there exists an eigenvalue $r$ of $T$, with corresponding right and left eigenvectors $v$, $w$ such that:
    \begin{enumerate}[(a)]
        \item $r\in \mathbb{R}, r>0$ and is a simple root of the characteristic polynomial,
        \item both $w$,$v$ are the only eigenvectors that have strictly positive components, i.e. $v_i,w_i>0 \, \forall i=1,...n$,
        \item $v$ and $w$ are unique up to a scalar multiple, and hence can be taken to be normalized, i.e. $v^tw=1$,
        \item $r \geq |\lambda|$, for all eigenvalue $\lambda$ of $T$,
    \end{enumerate}
where $r$ is called the Perron-Frobenius eigenvalue and $P = wv^t$ the Perron projector. Moreover, if $T$ is also aperiodic, then we have the more restrictive
    \begin{enumerate}[(a'), start=4]
        \item $r>|\lambda|$, for all eigenvalue $\lambda \neq r$ of $T$,
    \end{enumerate}
\end{theorem}
Another important result is the so called subinvariance theorem, which is a useful tool to bound the value of $r$ for a given irreducible matrix.
\begin{theorem}[Subinvariance Theorem]
    Let $T$ be a $n\times n$ non-negative, irreducible matrix, $s>0$ and $y$ be a $n$-dimensional row vector such that each component $y_i\geq0$ and satisfying
    \begin{equation}
        y^tT \leq sy^t
    \end{equation}
    component-wise. Then $y_i>0 \forall i$, and $s\geq r$. Moreover, equality holds if and only if $s=r$.
\end{theorem}

Finally, the last result allows us to exploit the knowledge of the dominant eigenvalue to determine the asymptotic properties of $T^L$.

\begin{theorem}[Asymptotic behaviour of irreducible matrices] Let $T$ be a $n\times n$ non-negative, irreducible matrix. Then the Cesàro average of $T$ converges, and we have 
\begin{equation}
    \lim_{L\to\infty} \frac{1}{L}\sum_{l=1}^{L} T^l/r^l = P
\end{equation}
Moreover, if $T$ is also aperiodic, then limit of $T^L/r^L$ converges, and we have
\begin{equation}
    \lim_{L\to\infty} T^L/r^L = P
\end{equation}
where $r$ and $P$ are the Perron-Frobenius eigenvalue and Perron projector respectively.
\end{theorem}

Despite being less structured, it is a well known fact that general non-negative matrices can be cast to a canonical block upper triangular form, where all blocks in the diagonal are \emph{irreducible} simply by means of a permutation matrix, i.e. by a relabelling of the basis elements. In particular, concerning the diagonal, irreducible blocks appearing in such decomposition, we will use the term \emph{essential} when referring to the blocks such that all $(T)_{i,j} = 0$ for all columns apart from the block itself, and \emph{inessential} otherwise. In terms of the graph $\mathcal{G}_T$, this distinction is readily understood. As discussed above, irreducible blocks correspond to strongly connected components, and consequently essential blocks are strongly connected components which do not have edges connecting vertices in it to vertices pertaining to other components. More simply, we can describe essential components as those whose edges ``do not lead outside". In what follows, we name by $T_z$ all essential, irreducible components, and we group into a single block $Q$ all inessential components. The blocks $R_z$, appearing on top of the block $Q$, represent the collection of edges coming from inessential components and leading to essential ones. A graphical summary is depicted in \cref{eq:T_canonical_form}.

\begin{equation}\label{eq:T_canonical_form}
        T = 
    \begin{pmatrix}
    \,\tikz{\node[draw]{$T_0$}} & & & & \tikz{\node[draw, minimum width=1.2cm]{$R_0$}}\,\\
    & \tikz{\node[draw, minimum width=0.8cm, minimum height=0.8cm] {$T_1$}}& & & \tikz{\node[draw, minimum width=1.2cm, minimum height=0.8cm]{$R_1$}}\\
    & & \ddots & & \vdots\\
    & & & \tikz{\node[draw, minimum width=0.8cm, minimum height=0.8cm] {$T_z$}} & \tikz{\node[draw, minimum width=1.2cm, minimum height=0.8cm]{$R_z$}}\\
    & & & & \tikz{\node[draw, minimum width=1.2cm, minimum height=1.2cm] {$Q$}}\\
    \end{pmatrix}
\end{equation}
This in particular allows us to apply the results of this section also to more generic matrices, such as LTMs, which in general are not irreducible. 

\subsubsection{Useful results on quantum channel}

In this section, we briefly introduce some relevant properties of completely positive (CP) maps. These will be especially useful in the trace preserving case (CPTP), i.e. quantum channels, and the unital case (CPU), i.e. the corresponding adjoint action with respect to the Hilbert-Schmidt scalar product. In what follows, we will denote by $\mathbb{M}_n$ the space of $n\times n$ matrices, which we can endow with a norm as follows.
\begin{definition}(Schatten norm)\label{def:schatten-norms} Given $A \in \mathbb{M}_n$ and $p\in [1,\infty]$, we define the Schatten $p$-norm  $\|\cdot\|_p :\mathbb{M}_n \to \mathbb{R}$ as
\begin{equation}
    \|A\|_p = \Trb{\left(\sqrt{A^\dagger A}\right)^p}^{1/p}.
\end{equation}
\end{definition}
A crucial property of Schatten norms is H\"older inequality, namely $|\Trb{A^\dagger B}| \leq \|A\|_p\|B\|_q$, $\forall A, B \in \mathbb{M}_n, \forall p,q$ s.t. $1/p+1/q = 1$. As a special case, for $p=2$ we get back the Hilbert-Schmidt norm, and H\"older inequality reduces to the Cauchy-Schwarz inequality.

As a direct consequence of \cref{def:schatten-norms}, an induced norm on linear operators acting on $\mathbb{M}_n$ can be defined.
\begin{definition}(Induced norm)\label{def:induced_norm} Given a linear operator $\Lambda:\mathbb{M}_n\to\mathbb{M}_n$ we define the induced $p\to q$ norm as
\begin{equation}
    \|\Lambda\|_{p\to q} := \sup_{A : \|A\|_p=1} \|\Lambda(A)\|_q.
\end{equation}
\end{definition}
Depending on the properties of such induced norms, we might refer to the map $\Lambda$ as contractive or strictly contractive. In particular, we will use the following definitions.
\begin{definition}(Contractivity of linear maps)\label{def:contractivity} A linear operator $\Lambda:\mathbb{M}_n\to\mathbb{M}_n$ is said to be \emph{contractive} with respect to the $p$ norm if $\|\Lambda\|_{p\to p} \leq 1$, and similarly to be \emph{strictly contractive} if $\|\Lambda\|_{p\to p} < 1$.
\end{definition}
Linear maps that are also CPTP are known to always be contractive with respect to the $1$-norm \cite{Raginsky02,Holevo01}, but are in general not contractive for other $p$-norms. This property, together with H\"older's inequality, allow putting an upper bound on the value of the variance of an arbitrary, layered quantum circuit. 
\begin{lemma}[H\"older's inequality for variances]\label{lemma:holder_for_variance}
   Let $A,B \in \mathbb{M}_n$ be hermitian matrices, and $\Phi_\theta:\mathbb{M}_n \to \mathbb{M}_n$ be a parameterized quantum channel. In particular, consider the $L$ layered map  of type \cref{eq:circuit_channel}. Then
   \begin{equation}
    \left| \Trb{\Phi_\theta^L(A) B}\right|\leq  \|\mathcal{E}\|^{L}_{p\to p}\|\|A\|_p\|B\|_q \;\; \forall \theta \in \Theta
   \end{equation}
   with $1/p+1/q = 1$. As a special case, if $A=\rho$ and $B=H$ are a density operator and an observable respectively, then the variance can be upper bounded by  $\Varr^L \leq \|H\|_\infty^2 \forall L$.
\end{lemma}
\begin{proof}
The result is the direct consequence of H\"older's inequality and contractivity of quantum maps. In particular, we have the following chain of inequalities:
\begin{equation}
\begin{split}
    \left| \Trb{\Phi^L_{\theta}(A)B} \right| &\leq \|\Phi^L_{\theta}(A)\|_p\|B\|_q \\
    &= \|U_{\theta_L} \mathcal{E}\left(\Phi^{L-1}_{\theta}(A) \right) U^\dagger_{\theta_L}\|_p \|B\|_q\\
    &= \|\mathcal{E}\left(\Phi^{L-1}_{\theta}(A) \right)\|_p \|B\|_q\\
    &\leq \|\mathcal{E}\|_{p\to p} \|\Phi^{L-1}_{\theta}(A)\|_p \|B\|_q  \;\; \forall \theta \in \Theta
\end{split}
\end{equation}
By iterative application of this procedure, one can get the result. The final remark holds due to the contractivity of quantum channels, choosing $p=1$ and $q=\infty$, and noting that $\|\rho\|_1 \leq 1$ for all density matrices. 
\end{proof}
Specific classes of quantum channels can be shown to be contractive with respect to a wider variety of norms. In particular, we have that, for $p\geq 2$, \emph{unitality} of the map is a necessary and sufficient condition for contractivity. (see Theorem II.4 in \cite{PerezGarcia06}). Within unital channels, unitary transformation $\mathcal{U}$ always saturate the bound, as they have the additional property of being norm-preserving, i.e. $\|\mathcal{U}(A)\|_p = \|U^\dagger AU\|_p = \|A\|_p$.
Finally, if we reduce the action of the channel to the subset $\mathbb{H}_0 \subset \mathbb{M}_n$ of Hermitian, traceless matrices, then the unitality property is no longer a necessary condition for contractivity. Indeed, any single qubit channel $\mathcal{N}$ is contractive in this setting, i.e. $\|\mathcal{N}|_{\mathbb{H}_0}\|_{p\to p}\leq 1 \forall p$ if $n=2$. For single qubit channels we can even be more explicit, as showed in the following Lemma.
\begin{lemma}[Single qubit channel normal form]\label{lemma:normal_form} Let $\{P_i\}$ be the normalized (with respect to the Schatten $2$-norm) Pauli basis of $\mathbb{M}_2$, and  $\mathcal{N}$ be a single qubit channel. Then there exist unitary matrices $U, V$ such that $\mathcal{N}'(\cdot) = U^\dagger\mathcal{N}(V^\dagger \cdot V)U$ satisfies
\begin{equation}\label{eq:normal_form}
    \mathcal{N}'(P_0) = P_0 + \sum_{i>0} t_i P_i, \quad \mathcal{N}'(P_i) = \lambda_i P_i \, \forall i>0
\end{equation}
where $\sum_{i>0} (t_i + \lambda_i \alpha_i)^2 \leq 1$, $\forall \alpha_i \in \mathbb{R}$ s.t. $\sum_{i>0} \alpha_i^2 \leq 1$.
\end{lemma}
\begin{proof}
    It has been shown in \cite{King00,Ruskai02} that any single qubit quantum channel can be cast in the canonical form of \cref{eq:normal_form} by means of a change of basis. Furthermore, the constraint on the parameters follow, analogously to \cite{Mele24}, by considering that any single qubit state must have bounded purity, namely $\Trb{\rho^2} \leq 1$, and since $\mathcal{N}'$ is a channel, the same must hold for $\mathcal{N}'(\rho)$. Since any qubit state can be decomposed in terms of $\{P_i\}_i$ as $\rho = 1/\sqrt{2}P_0 + 1/\sqrt{2} \sum_{i>0} \alpha_i P_i$, $\alpha_i \in \mathbb{R}$, we get
    \begin{equation}
        \Trb{\rho^2} = \frac{1+\alpha_i^2}{2} \leq 1, \quad \Trb{\mathcal{N}'(\rho)^2} = \frac{1+\sum_{i>0} (t_i + \lambda_i \alpha_i)^2}{2} \leq 1
    \end{equation}
    respectively, which concludes the proof.
\end{proof}
Considering instead CPU maps, the most relevant result is Kadison-Schwarz inequality, which for our purposes, can be stated as follows.
\begin{theorem}[Kadison-Schwarz inequality \cite{Kadison52}]\label{thm:kadison_schwarz}
    Let $A, B \in \mathbb{H}_0$, and $\Lambda:\mathbb{M}_n\to\mathbb{M}_n$ be a CPU map. Then 
    \begin{equation}
        \Lambda(A)\Lambda(B) \leq \Lambda(AB). 
    \end{equation}
\end{theorem}
All these properties will be useful to characterize the spectral properties of interest of the LTM of quantum channels.

\subsection{Further characterizations of LTMs}

We now study the structure of the LTM of a general CPU map. Thanks to this analysis, we will be able to compute the limiting value $\Varr^\infty$ by describing the quantum circuit in the Heisenberg picture. Denoting by $T$ the resulting LTM, we start by computing the general form of integer powers $T^L$ of $T$.

\begin{lemma} [Limiting form of $T$]\label{lemma:T_lim}
    Let $\Lambda:\BH\to\BH$ be a CPU map and $T$ be the corresponding LTM. Then $T$ and $T^L$ take the form
\begin{equation}
\begin{split}\label{eq:canaonical_form_appendix}
    T = 
    \begin{pmatrix}
    \,\tikz{\node[draw]{$T_0$}} & & & & \tikz{\node[draw, minimum width=1.2cm]{$R_0$}}\,\\
    & \tikz{\node[draw, minimum width=0.8cm, minimum height=0.8cm] {$T_1$}}& & & \tikz{\node[draw, minimum width=1.2cm, minimum height=0.8cm]{$R_1$}}\\
    & & \ddots & & \vdots\\
    & & & \tikz{\node[draw, minimum width=0.8cm, minimum height=0.8cm] {$T_z$}} & \tikz{\node[draw, minimum width=1.2cm, minimum height=0.8cm]{$R_z$}}\\
    & & & & \tikz{\node[draw, minimum width=1.2cm, minimum height=1.2cm] {$Q$}}\\
    \end{pmatrix}, \\ \\
    T^L = \begin{pmatrix}
    \,\tikz{\node[draw]{$T_0^L$}} & & & & \tikz{\node[draw, minimum width=1.2cm]{$A_0^{(L)}$}}\,\\
    & \tikz{\node[draw, minimum width=0.8cm, minimum height=0.8cm] {$T_1^L$}}& & & \tikz{\node[draw, minimum width=1.2cm, minimum height=0.8cm]{$A_1^{(L)}$}}\\
    & & \ddots & & \vdots\\
    & & & \tikz{\node[draw, minimum width=0.8cm, minimum height=0.8cm] {$T_z^L$}} & \tikz{\node[draw, minimum width=1.2cm, minimum height=0.8cm]{$A_z^{(L)}$}}\\
    & & & & \tikz{\node[draw, minimum width=1.2cm, minimum height=1.2cm] {$Q^L$}}\\
    \end{pmatrix}
\end{split}
\end{equation}
up to a basis state index permutation, where each $T_z$ is an irreducible matrix, and $A^{(L)}_z = \sum_{l=0}^{L-1}T_z^lR_zQ^{L-1-l}$.
\end{lemma}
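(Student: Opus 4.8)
The plan is to read off the block structure of \cref{eq:canaonical_form_appendix} from just two ingredients: non-negativity of the LTM, and the elementary algebra of block upper-triangular matrices. The completely positive, unital hypothesis on $\Lambda$ is actually not needed for the block \emph{shape}; it enters only to single out the distinguished block $T_0$. First I would record that $T$ is a non-negative matrix: by \cref{eq:alternate_LTM} each entry $T_{\kappa,\lambda}=\frac{1}{d_\lambda}\sum_{i,j}\Trb{P_i\Lambda(P_j)}^{2}\,\deltat{i}{\kappa}\,\deltat{j}{\lambda}$ is a sum of squares with non-negative weights, hence $T_{\kappa,\lambda}\ge 0$. By the canonical (Frobenius) normal form for non-negative matrices recalled in the preliminaries of \cref{appendix:proof_main}, there is a relabelling — a simultaneous permutation of rows and columns — of the index set $\{0,1\}^{M}$ after which $T$ is block upper-triangular with every diagonal block irreducible, these blocks being exactly the strongly connected components of the graph $\mathcal{G}_T$.

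Second I would sharpen the ordering of the diagonal blocks so that the \emph{essential} components come first. Since an essential component admits no edge leaving it, distinct essential components cannot communicate, so the submatrix carried by the union of all essential indices is block-diagonal; call it $D=\mathrm{diag}(T_0,T_1,\dots,T_z)$. Moreover no edge leaves an essential class, so the block-column attached to each $T_z$ is supported entirely on its own diagonal block; placing all essential classes ahead of all inessential ones therefore brings $T$ into the form of \cref{eq:canaonical_form_appendix}: the lower-left block vanishes, the lower-right block is the restriction $Q$ of $T$ to the inessential indices (internally still block upper-triangular, which is immaterial here), and the upper-right block, partitioned according to the block-diagonal structure of $D$, yields the strips $R_0,\dots,R_z$, with $R_z$ having its rows in the class of $T_z$ and its columns indexed by all of $Q$. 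Finally, unitality gives $\Lambda(\mathds{1})=\mathds{1}$, so $\mathcal{B}_0=\text{span}\,\mathds{1}$ is $\Lambda$-invariant and $\{0\}$ is its own essential (self-looping) class, which we name $T_0$.

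Third I would compute $T^{L}$ by block multiplication. Writing $T=\left(\begin{smallmatrix}D & R\\ 0 & Q\end{smallmatrix}\right)$ with $R$ partitioned conformally with $D$ into $R_0,\dots,R_z$, an induction on $L$ yields
\begin{equation}
T^{L}=\begin{pmatrix} D^{L} & \sum_{l=0}^{L-1}D^{l}RQ^{L-1-l}\\ 0 & Q^{L}\end{pmatrix}.
\end{equation}
Indeed, the base case $L=1$ is the definition of $R$, and the inductive step follows from $T^{L+1}=T\,T^{L}$ together with the reindexing $D\big(\sum_{l=0}^{L-1}D^{l}RQ^{L-1-l}\big)+RQ^{L}=\sum_{l=0}^{L}D^{l}RQ^{L-l}$. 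Since $D$ is block-diagonal, $D^{L}=\mathrm{diag}(T_0^{L},\dots,T_z^{L})$ and the upper-right block splits into horizontal strips $A_z^{(L)}=\sum_{l=0}^{L-1}T_z^{l}R_zQ^{L-1-l}$, which is precisely the asserted form of $T^{L}$.

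The only genuinely delicate point I anticipate is the second step: one must check that a single permutation realises both the Frobenius normal form of $T$ \emph{and} the ordering ``all essential classes first'', and that the essential/inessential dichotomy (``edges do not lead outside'') translates exactly into the zero pattern of \cref{eq:canaonical_form_appendix} — in particular into the vanishing of the inessential-to-essential block. This is a standard fact about the condensation (acyclic component graph) of $\mathcal{G}_T$, with essential classes being its sinks, but it is the one place where care with the graph-theoretic bookkeeping is needed; everything else is routine block-matrix algebra.
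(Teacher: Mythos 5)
Your proof is correct and follows essentially the same route as the paper's: cast $T$ into the Frobenius canonical form for non-negative matrices with essential irreducible blocks $T_z$ separated from the inessential collection $Q$, then obtain $A_z^{(L)}=\sum_{l=0}^{L-1}T_z^{l}R_zQ^{L-1-l}$ by induction on block multiplication. You simply spell out the graph-theoretic bookkeeping (essential classes as sinks of the condensation) that the paper delegates to its preliminaries and to Seneta's reference.
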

\begin{proof}
    We start by putting $T$ into the canonical form of \cref{eq:T_canonical_form}. In this form, the powers of the diagonal blocks $T_z^L$ are trivially the diagonal blocks of $T^L$. Instead, the result about $A^{(L)}$ follows by induction.
    In fact, both the base case and the inductive one follow from matrix multiplication rules, of $T\cdot T$ and $T^{L-1}\cdot T$ respectively. In particular we have
    \begin{equation}
    \begin{split}
        A_z^{(2)} &= T_zR_z + R_zQ\\
        A_z^{(L)} &= T_z^{L-1} R_z + A^{(L-1)}Q\\
        &=T_z^{L-1} R_z + \sum_{l=0}^{L-2}T_z^lR_zQ^{L-1-l} \\
        &= \sum_{l=0}^{L-1}T_z^lR_zQ^{L-1-l}\\
    \end{split} 
    \end{equation}

which gives the proposition.
\end{proof}

As already shown in \cref{lemma:holder_for_variance}, the value of the variance is upper bounded by $\|H\|_\infty^{2}$. Since by \cref{prop:general_formula}, this quantity is linked to $(\ell_\rho, T^L \ell_H)$, it is expected that the spectral radius $\rho(T)$ of $T$ is upper bounded by $1$. More specifically, we can prove the following Proposition. 

\begin{proposition}[Spectral radius of T]\label{prop:spectral_radius_bound}
Let $\Lambda:\BH\to\BH$ be a CPU map and $T$ be the corresponding LTM. Then $T$ is contractive in the sense of the spectral radius, i.e. $\rho(T) \leq 1$. Moreover, the component $Q$ is strictly contractive, namely $\rho(Q) < 1$.
\end{proposition}

\begin{proof}
The statement follows as a consequence of the Kadison-Schwarz inequality and the Subinvariance theorem. 

First note that, for a generic $\Lambda$, the structure of $Q$ is not as well-behaved as $T_z$, as $Q$ is not necessarily \emph{irreducible}. However, as any non-negative matrix, also $Q$ can be cast in canonical block upper triangular form by means of a basis permutation, where each diagonal block is irreducible.

\begin{equation}Q = 
\begin{pmatrix}
\,\tikz{\node[draw, minimum width=0.8cm, minimum height=0.8cm]{$Q_1$}} & \tikz{\node[draw, minimum width=0.8cm, minimum height=0.8cm]{*}} & \cdots &  \tikz{\node[draw, minimum width=0.8cm, minimum height=0.8cm]{*}}\,\\
& \tikz{\node[draw, minimum width=0.8cm, minimum height=0.8cm] {$Q_2$}}& & \tikz{\node[draw, minimum width=0.8cm, minimum height=0.8cm]{*}}\\
& & \ddots & \vdots\\
& & & \tikz{\node[draw, minimum width=0.8cm, minimum height=0.8cm] {$Q_k$}}\\
\end{pmatrix}
\end{equation}

With this in mind, we can study the spectral radius of $T$ and $Q$ in terms of the spectral radii of each block $T_z$ and $Q_k$, i.e. the corresponding Perron eigenvalues, since $\rho(T) = \max\{\max_z r_z, \max_k r_{Q_k}\}$ and $\rho(Q) = \max_k r_{Q_k}$.

By \cref{lemma:basis_independence_T}, we are free to choose the basis used to express the matrix $T$. In particular, we choose the normalized Pauli basis $\{P_i\}$, which, besides being a local basis for $\BH$, is also unitary up to normalization, namely $P_i^2=\mathds{1}/d \, \forall i$. Exploiting \cref{eq:alternate_LTM}, we can compute the column sum of $T$ as

\begin{equation}\label{eq:substochasticity}
\begin{split}
    &\sum_\kappa T_{\kappa, \lambda} = \sum_\kappa  \frac{1}{d_\lambda} \sum_{i,j} \Trb{P_i \Lambda(P_j)}^2 \deltat{i}{\kappa} \deltat{j}{\lambda}\\
    &= \frac{1}{d_\lambda} \sum_{i,j} \Trb{P_i \Lambda(P_j)}^2 \deltat{j}{\lambda} \left(\sum_\kappa \deltat{i}{\kappa} \right)\\
    &=\frac{1}{d_\lambda} \sum_{i,j} \Trb{P_i \Lambda(P_j)}^2 \deltat{j}{\lambda} =  \frac{1}{d_\lambda} \sum_{j} \Trb{ \Lambda(P_j)^2} \deltat{j}{\lambda} \\
    &\leq \frac{1}{d_\lambda} \sum_{j} \Trb{ \Lambda(P_j^2)} \deltat{j}{\lambda} = \frac{1}{d_\lambda} \sum_{j} \deltat{j}{\lambda} = 1
\end{split}
\end{equation}

where the third and last equality follow from \cref{lemma:indicator_func}, the inequality is Kadison-Schwarz and the second to last equality is the unitary property of the basis. If $\Lambda$ is unitary, the inequality is saturated, and $T$ becomes a \emph{stochastic} matrix. In general, \cref{eq:substochasticity} show \emph{sub-stochasticity} of T. Indeed, this condition can be recast in vector form as $v^tT \leq v^t$, where $v_\kappa = 1 \forall \kappa$. In particular, this holds for all irreducible blocks in the diagonal, which by the Subinvariance theorem implies $r_z \leq 1$ and $r_{Q_k} \leq 1$, giving $\rho(T) \leq 1$. Focusing on $Q$, we observe that, by definition, each irreducible block $Q_k$ is \emph{inessential}, i.e. is connected to some other block. In terms of the matrix $T$, this means that, considering the columns involving $Q_k$, there is always an index $\lambda$ in the support of $Q_k$ such that
\begin{equation}
    \sum_\kappa (T-Q_k)_{\kappa, \lambda} > 0,
\end{equation}

which implies that $\exists \lambda$ s.t. $\sum_\kappa (Q_k)_{\kappa, \lambda} < 1$. Written in matrix form, this reads $v^t Q_k \leq v^t$, $v^t Q_k \neq v^t$, which by the Subinvariance theorem, implies $r_{Q_k} \neq 1$. Putting everything together, one gets $\rho(Q) < 1$. 

\end{proof}

When analysing the single irreducible components $T_z$, we can be more specific, and find an equivalence between the value of the column-sum of the block and the value of the corresponding spectral radius. This is especially useful in the computation of the dominant eigenvectors, which is explicitly stated in the following Corollary.

\begin{corollary}\label{cor:column_sum}
    Let T be a LTM and $T_z$ be an irreducible block, then $\rho(T_z) = 1 \Leftrightarrow \sum_\kappa (T_z)_{\kappa, \lambda} = 1$, or equivalently $v_z^t T_z = v_z^t$, where $(v_z)_\kappa = 1 \forall \kappa$ is the left eigenvector of the dominant eigenvalue.
\end{corollary}

\begin{proof}
    The result follows from the same proof strategy as above, and is a direct consequence of the Subinvariance theorem.
\end{proof}

Intuitively, the blocks which are out of such hypothesis won't contribute to the large $L$ limit, and indeed the contribution of the $Q$, $T_z$ and $A_z^{(L)}$ is bounded to decay exponentially in the number of layers.

\begin{proposition}\label{prop:vanishing_terms}
Let $T$ be a LTM, and let $T_z$ be an irreducible block with $r_z < 1$. Then, as $L\to \infty$, $\|T_z^L\| \to 0$ and $\|A^{(L)}_z\| \to 0$ exponentially fast for any matrix norm $\|\cdot\|$. Similarly, also $\|Q^L\| \to 0$.
\end{proposition}

\begin{proof}
The proposition can be proven using Gelfand's formula. In particular since $\lim_{L\to\infty} \|T_z^L\|^{1/L} =  r_z$, we can always bound $\|T_z^L\| \leq K \tau^L$, for some constant $K>0$ and $\tau = r_z +\epsilon < 1$ for an arbitrarily small $\epsilon$. In the same way, by \cref{prop:spectral_radius_bound} a similar result can be obtained for Q. Finally, the absorption term $A_z^{(L)}$ can also be bounded using \cref{lemma:T_lim}. In that case we have
\begin{equation}
\begin{split}
        \|A_z^{(L)}\| &\leq \sum_{l=0}^{L-1} \|T_z^L\|\|R_z\|\|Q^{L-1-l}\|\\
        &\leq \|R_z\| K_T K_Q \tau^l \kappa^{L-1-l} \leq K_A\alpha^L
\end{split}
\end{equation}

with some constant $K_A>0$ and $\alpha = \max\{\kappa, \tau\}<1$. This can be obtained again using Gelfand's formula on both $T_z$ and $Q$, and by sub-additivity and sub-multiplicativity of the matrix norm $\|\cdot\|$.
\end{proof}

\subsection{Proof of \cref{thm:main}}

As stated in the main text, the limiting value of quantum circuits of type \cref{eq:circuit_channel} is obtained by studying the spectral properties of the LTM of the intermediate channel in the Heisenberg picture. In particular, the previous discussion  suggests a limiting value for the variance of the form
\begin{equation}\label{eq:v_infty_general_app}
    \Varr^\infty = \sum_{z\,|\,\rho(T_z)=1} (\ell_\rho,w_z) (\ell_H)_z + (\ell_\rho,w_z) (A_z \ell_H)
\end{equation}

with some normalized, strictly positive vector $w_z$, and absorption matrices $A_z$. Indeed, the following shows that this is the case.

\begin{theorem}[Deep circuit variance]\label{thm:main_appendix}
Let $\rho, H \in \BH$ and let $\Phi_\theta$ a be layered quantum channel as in described in the main text. Then the Cesàro average of $\Varr^L$ converges to \cref{eq:v_infty_general_app}, and we have
\begin{equation}
\label{eq:cesaro_limit_app}
    \left|\frac{1}{L} \sum_{l=0}^L \Varr^l -\Varr^\infty \right| \in O\left(e^{-\beta L} \|H\|^2_2\right),
\end{equation}
for some constant $\beta > 0$. Additionally, if all essential blocks are aperiodic, then $\Varr^L$ is convergent, and we have
\begin{equation}
\label{eq:simple_limit_app}
    \left|\Varr^L -\Varr^\infty \right| \in O\left(e^{-\beta L} \|H\|^2_2\right),
\end{equation}
where the right eigenvector $w_z$ of $T_z$ is a strictly positive vector, i.e. $(w_z)_\kappa > 0 \, \forall \kappa$, and $A_z = R_z(\mathds{1}-Q)^{-1}$ are the absorption coefficients of each essential block. 
\end{theorem}

\begin{proof}
Thanks to \cref{prop:vanishing_terms}, only irreducible components with $\rho(T_z)=1$ will contribute to the limit, so we can restrict our analysis to those alone.
Consider then an irreducible block $T_z$ with unit spectral radius, and of period $d$. While the full version of Perron-Frobenius theorem does not directly apply to $T_z$, it is a well-known result of non-negative matrix theory \cite{Seneta06,Meyer00} that the matrix $T_z^d$ can be cast to a block diagonal form by a permutation, with irreducible and \emph{aperiodic} blocks, for which we can apply it. However, it is crucial to notice that while $\lim_{N\to \infty} T_z^{dN} = T_z^{(d\infty)}$ exists, this does not imply that $\lim_{L\to \infty} T_z^{L}$ does. In fact, different sub-sequences might have different limiting values, and in particular $\lim_{N\to\infty} T_z^{dN+m} = T_z^{(d\infty)}T^m$, which is different for all $m=0,...d-1$. In the periodic scenario then, $T_z^L$ does not have a limit, and the only convergent quantity is the \emph{Cesàro average}, i.e.
\begin{equation}
    P_z = \lim_{L\to\infty} \frac{1}{L}\sum_{l=1}^L T_z^l = T_z^{(d \infty)}\frac{1}{d}\sum_{m=0}^{d-1}T_z^m
\end{equation}

where $P_z = w_zv_z^{t}$ can be shown to be the Perron projector associated to $T_z$ \cite{Seneta06,Meyer00}. Despite more cumbersome, a totally analogous approach allows determining the limiting values of $A^{(dN+m)}$ as well, as shown in the following Proposition.

\begin{proposition}
    Given a LTM with a periodic irreducible block $T_z$ of period $d$, then
    \begin{equation}
        \lim_{N\to\infty} A^{(dN+m)} = T^{(d\infty)} A^{(m)} + A^{(d\infty)} Q^m
    \end{equation}
where $A^{(d\infty)} = \sum_{m=0}^{d-1} T^{(d\infty)}A^{(d)}(\mathds{1}-Q^d)^{-1}$ and $A^{(0)}=0$.
\end{proposition}

\begin{proof}
Starting from the definition of $A^{(l)}$, we can first find the limiting value $A^{(d\infty)}$ of $A^{(dN)}$:

\begin{equation}\label{thm:main_dim:prop:chain}
\begin{split}
    A^{(dN)} &= \sum_{l=0}^{Nd-1}T_z^lR_z Q^{dN-1-l}\\
    &=\sum_{m=0}^{d-1}\sum_{n=0}^{N-1} T_z^{nd+m}R_zQ^{(N-1)d-nd+d-1-m}\\
    &=\sum_{m=0}^{d-1}\sum_{n=0}^{N-1} T_z^{nd}T_z^{m}R_zQ^{d-1-m}(Q^d)^{N-n}\\
    &=\sum_{n=0}^{N-1} T_z^{nd}A^{(d)}(Q^d)^{N-n}\\
    &=T_z^{(d\infty)}A^{(d)}\sum_{n=0}^{N-1}(Q^d)^{N-n} + \sum_{n=0}^{N-1}\Delta_z^{(n)}A^{(d)}(Q^d)^{N-n}
\end{split}
\end{equation}

where $\Delta_z^{(N)} = T_z^{dN}- T_z^{(d\infty)}$. Since $T_z^d$ is block diagonal, and each of the $d$ blocks $T_{zm}$ is irreducible and aperiodic, we have that $T^N_{zm} \to P_{zm}$ exponentially fast, i.e. $\|T^N_{zm} - P_{zm}\| = \|\Delta^{(N)}_{zm}\| < K_{zm} \tau_{zm}^N$, for some $K>0$ and $\tau<1$. Then, $\|\Delta^{(N)}_z\| \leq \sum_{m=0}^{d}\|\Delta^{(N)}_{zm}\| \leq K_z \tau_z^{N}$, where $\tau = \max_m \{\tau_{zm}\} < 1$. Together with \cref{prop:spectral_radius_bound}, this implies that, the last term in \cref{thm:main_dim:prop:chain} approaches zero with the same exponential speed, similarly to what happens in \cref{prop:vanishing_terms}.
Putting everything together, and considering that $\sum_{n=0}^{N-1}X^{n} \to (\mathds{1}-X)^{-1}$ $\forall X$ such that $\rho(X) < 1$, we have 
\begin{equation}
    A^{(d\infty)} =  T^{(d\infty)}A^{(d)}(\mathds{1}-Q^d)^{-1}
\end{equation}

At this point, by induction similarly to \cref{lemma:T_lim}, one can easily show that

\begin{equation}
    A^{(dN+m)} = T^{(dN)} A^{(m)} + A^{(dN)} Q^m
\end{equation}
which yields the proposition.
\end{proof}

In particular, the Cesàro average converges and we have the expression

\begin{equation}
A_z = \lim_{L\to\infty}\frac{1}{L}\sum_{l=1}^{L} A^{(l)} = \frac{1}{d}\sum_{m=0}^{d-1} T^{(d\infty)} A^{(m)} + A^{(d\infty)} Q^m.
\end{equation}

Recalling that the right eigenvector $v_z$ can be explicitly calculated when $\rho(T_z)=1$ (see \cref{cor:column_sum}), this allows to obtain the final form of $\Varr^\infty$ by ordinary matrix vector multiplication. As a special case, if all relevant blocks $T_z$ are aperiodic, then the limits of $T_z^L$ and $A_z^{(L)}$ converge, and we have $P_z = T_z^{(1 \infty)}$ and $A_z = A_z^{(1 \infty)} = P_z R_z (\mathds{1}-Q)^{-1}$.
Finally, the exponential upper bound in \cref{eq:cesaro_limit_app} and \cref{eq:simple_limit_app} is also obtained as a consequence of the preceding analysis. In particular if we denote by $T^\infty$ the matrix with $P_z$ in place of $T_z$, $A_z$ in place of $R_z$ and zero otherwise, we have

\begin{equation}
\begin{split}
    \left|\frac{1}{L}\sum_{l=0}^L \Varr^l - \Varr^\infty\right| &= \left|\left(\ell_\rho, \frac{1}{L}\sum_{l=0}^L T^l - T^\infty \ell_H\right)\right| \\
    &\leq \left\|\frac{1}{L}\sum_{l=0}^L T^l - T^\infty\right\| \sqrt{(\ell_\rho,\ell_\rho) (\ell_H,\ell_H)}.
\end{split}
\end{equation}

by Cauchy-Schwarz inequality. Since all blocks converge exponentially fast from the above discussion, the matrix norm is also exponentially decaying. Moreover, by construction $(\ell_A,\ell_A) \leq \sqrt{\sum_\kappa (\ell_A)^2_\kappa} \leq \sum_\kappa (\ell_A)_\kappa = \|A\|_2^2 \, \forall A\in\BH$, which concludes the proof. 
\end{proof}

While for an explicit calculation of $w_z$ one should in general rely on case-specific analyses, a general result can be derived for a subclass of channels especially useful in the context of quantum computing, namely single qubit noise.

\begin{proposition}[Single qubit noise]\label{cor:single_qubit_noise}
    Let $\rho, H \in \BH$ and let $\Phi_\theta$ a be layered quantum channel as in described in the main text. Assume moreover that the intermediate channel is of the form $\mathcal{E} = \mathcal{N}(W\rho W^\dagger)$, where $W$ is a unitary transformation and $\mathcal{N} = \mathcal{N}_1 \otimes ... \otimes \mathcal{N}_n$ is a composition of single qubit quantum channels. Then
    \begin{equation}
       \Varr^\infty = \sum_{z} \frac{(\ell_\rho)_z (\ell_H)_z}{d_z} + \frac{(\ell_\rho)_z (A \ell_H)_z}{d_z}.
    \end{equation}
\end{proposition}

\begin{proof}
Without loss of generality, we can consider the single qubit channels $\mathcal{N}_m$ to be in their normal form of \cref{lemma:normal_form}. In particular, this holds due to the invariance of the LTM with respect to changes of local bases (\cref{lemma:basis_independence_T}). In terms of the adjoint maps $\mathcal{N}^\dagger_m$, this condition reads $\mathcal{N}^\dagger_m (P_{j_m}) = t_{j_m} P_0 + \lambda_{j_m}P_j$ and can be used to compute $\Trb{\mathcal{N}^\dagger_m (P_{j_m})^2} = t_{j_m}^2 + \lambda_{j_m}^2 \leq 1$ by \cref{lemma:normal_form}. We now show that, when $P_j \in \BHz$ pertains to an irreducible component of spectral radius $\rho(T_z)=1$, then the inequality must be saturated. In particular, thanks to \cref{cor:column_sum} we know that $\sum_\kappa (T_z)_{\kappa, \lambda} = 1$. By \cref{eq:substochasticity} this implies
\begin{equation}\label{eq:unitary_noise}
\begin{split}
    \sum_\kappa (T_z)_{\kappa, \lambda} &= \frac{1}{d_\lambda} \sum_j \Trb{(W^\dagger \mathcal{N}^\dagger(P_j)W)^2} \deltat{j}{\lambda} \\&= \frac{1}{d_\lambda} \sum_j \Trb{\mathcal{N}^\dagger(P_j)^2} \deltat{j}{\lambda} \\
    &= \frac{1}{d_\lambda} \sum_j \prod_{m=1}^M \Trb{\mathcal{N}^\dagger_m(P_{j_m})^2} \deltat{j}{\lambda} = 1.
\end{split}
\end{equation}
Since each term in the product is upper-bounded by $1$, \cref{eq:unitary_noise} implies $\Trb{\mathcal{N}^\dagger_m (P_{j_m})^2} = t_{j_m}^2 + \lambda_{j_m}^2 = 1 \, \forall j_m$. This condition is only compatible with \cref{lemma:normal_form} if $t_{j_m}^2=0$ and $\lambda_{j_m}^2 = 1$. This result can now be used to show that, the adjoint of $T_z^\dagger$ of the LTM $T_z$, must also be column stochastic. Indeed, if we consider the expansion of $WP_jW^\dagger$ with respect to the normalized Pauli basis, we can get

\begin{equation}
\begin{split}
    \sum_\kappa (T_z)^\dagger_{\kappa, \lambda} &= \frac{1}{d_\lambda} \sum_j \Trb{\mathcal{N}(WP_jW^\dagger)^2} \deltat{j}{\lambda} \\
    &= \frac{1}{d_\lambda} \sum_j \Trb{\left(\sum_i \Trb{P_i WP_jW^\dagger} \mathcal{N}(P_i)\right)^2} \deltat{j}{\lambda}\\
    &= \frac{1}{d_\lambda} \sum_j \Trb{\left(\sum_i \Trb{P_iWP_jW^\dagger} \prod_m\lambda_{i_m} P_i\right)^2} \deltat{j}{\lambda} \\
    &= \frac{1}{d_\lambda} \sum_{i,j}\prod_m\lambda_{i_m}^2\Trb{P_iWP_jW^\dagger}^2\deltat{j}{\lambda}\\
    &= \frac{1}{d_\lambda} \sum_j \Trb{WP_j^2W^\dagger}\deltat{j}{\lambda} = 1.
\end{split}
\end{equation}

This allows to compute the right eigenvector $w_z$ of the leading eigenvalue of $T_z$. Using \cref{eq:detailed_balance}, we have indeed

\begin{equation}
    \sum_\lambda (T_z)_{\kappa, \lambda} d_\lambda =  \sum_\lambda (T_z)^\dagger_{\lambda,\kappa} d_\kappa = d_\kappa
\end{equation}
which implies $(w_z)_\lambda = d_\lambda/d_z$, where the normalization factor $d_z = \sum_{\lambda} d_\lambda$ is necessary to ensure $w_zv_z^t = P_z$ is indeed a projection, thus concluding the proof.
\end{proof}

As a consequence of this last result, we can compute the variance of generic unitary circuits. 

\begin{corollary}[Unitary circuits]\label{cor:deep_unitary_app}
For unitary circuits of type \cref{eq:circuit_channel}, we have

\begin{equation}
    \Varr^\infty = \sum_{z>0} \frac{(\ell_{\rho})_z (\ell_{H})_z}{d_z}
\end{equation}
\end{corollary}

\begin{proof}
    This form of the variance is a special case of \cref{cor:single_qubit_noise}, putting $\mathcal{N}(\rho) = \rho$, and noting that the absorption terms must vanish. In particular, this follows from \cref{eq:substochasticity}, observing that unitary channels saturate Kadison-Schwarz inequality, which combined with \cref{cor:column_sum} imply $Q=0$.
\end{proof}

On the opposite limit, if the noise map is strictly contractive in at least one direction in each $\BHz$, then the combination of noise and entanglement is strong enough to kill the variance in each of the absorbing subspaces. As a consequence, only the absorption term to $\BH_{0}$ remains, since no channel can be contractive there by trace preservation.

\begin{corollary}[Noise-induced concentration]\label{cor:deep_noisy_app} Let $\mathcal{E}$ be a quantum channel, and let $\{P_j\}$ denote the normalized Pauli basis. If $\|\mathcal{E}^\dagger (P_j)\|_2 < 1$ for some $j\in T_z$, $\forall z$, then
\begin{equation}
    \Varr^\infty = \frac{(A \ell_H)_0}{d}.
\end{equation}
In particular, if the channel is unital, $\Varr^\infty=0$.
\end{corollary}

\begin{proof}
    This form of the variance is a special case of \cref{eq:v_infty_general_app}, where all absorbing components vanish, and we have $T=Q$. In particular, this follows from \cref{eq:substochasticity}, observing that the above condition implies that for $\mathcal{E}^\dagger$ Kadison-Schwarz inequality is strict, which combined with \cref{cor:column_sum} imply $T_z=0\, \forall z>0$, leaving only $T_0$. Finally, the corollary follows from the normalization condition $\Trb{\rho}=1$ on $\rho$, which ensures $(\ell_\rho)_0=1/d$.
\end{proof}

\section{Variance lower bounds}\label{appendix:lower_bounds}

In this section we prove the general variance lower bounds outlined in \cref{sec:results}.

\subsection{Proof of \cref{thm:lower_bound}}

Here we employ \cref{prop:general_formula} to prove a general lower-bound on slowly entangling circuits. In particular, such result is based on the approximation $T_l\approx\mathds{1}$, which holds either for shallow circuits, i.e. $L\in O(\log n)$, or deeper circuits, but with weakly entangling intermediate channels. The discussion is based on the following result.

\begin{theorem}\label{thm:appendix:2}
    Let $\rho, H \in \BH$ and consider a sequence of quantum channels $\{\mathcal{E}_l\}_{l=1}^L$, and let $\{T_l\}_{l=1}^L$ be the respective LTMs. Finally let $K\subset \{0,1\}^M$ denote a subset of indices, and by $\alpha_l = \min_{\kappa\in K} (T_l)_{\kappa,\kappa}$. Then
    \begin{equation}\label{eq:lower_bound}
        \Varr^L \geq  \alpha^L (\ell_\rho,\ell_{\mathcal{K}(H)}),
    \end{equation}
    where $\mathcal{K}(\cdot) =\sum_{\kappa \in K} \sum_{j} \Trb{P_j \; \cdot} P_j\deltat{j}{\kappa}$ is a projector onto the space spanned by $K$ and $\alpha = \left(\prod_{l=0}^L \alpha_l\right)^{1/L}$ is the geometric mean of $\alpha_l$. 
\end{theorem}

\begin{proof}
    Consider a single circuit layer. Then, we can write
    \begin{equation}
    \begin{split}
        (\ell_\rho, T_l \ell_H) &= \sum_{\kappa, \lambda} \frac{(\ell_\rho)_\kappa T_{\kappa,\lambda} (\ell_H)_\lambda}{d_\kappa} \\
        &\geq \sum_{\kappa \in K} \frac{(\ell_\rho)_\kappa T_{\kappa,\kappa} (\ell_H)_\kappa}{d_\kappa} \\&\geq \alpha_l \sum_{\kappa\in K} \frac{(\ell_\rho)_\kappa (\ell_H)_\kappa}{d_\kappa} = \alpha_l (\ell_\rho, \ell_{{\mathcal{K}(H)}})
    \end{split}
    \end{equation}

where the inequality holds since all terms in the sum are non-negative by construction. The claim follows from repeated application of the latter.
\end{proof}

Despite its simplicity, \cref{thm:appendix:2} can be used to deduce general bounds on weakly entangling circuits, which are the foundation of small angle initialization strategies. In particular, we get the following Corollary.

\begin{corollary}\label{cor:small_angle:appendix}
    Let $\mathcal{H} = \bigotimes_{m=1}^M \mathcal{H}_m$, $d_m \in \Theta(1)$. If either of the conditions 
    \begin{enumerate}[(a)]
        \item $\alpha > 0$, $\alpha \in \Omega(1)$ and $L\in O(\log n)$,
        \item $\alpha = 1-f(n,L)$, $f \in O(\log n/L)$ and {$L \in \Omega(\log^{1+\epsilon}n)$} for some arbitrary $\epsilon > 0$
    \end{enumerate}
    is satisfied, then
    \begin{equation}
        \Varr^L \geq F(n) (\ell_\rho, \ell_{{\mathcal{K}(H)}}),
    \end{equation}
    where $F(n) \in \Omega(1/\text{poly}(n))$.    
\end{corollary}

\begin{proof}
    Exploiting \cref{thm:appendix:2}, it suffices to show that $F(n) = \alpha^L \in \Omega(1/\text{poly}(n))$. In the first case, this follows directly from the shallow nature of the circuit, and in particular $F(n) \in \alpha^{O(\log(n))} = \Omega(1/n^{-\log(\alpha)}) \subset \Omega(1/\text{poly}(n))$. For the second case instead, it is useful to consider $\log(F(n))$:
    \begin{equation}
    \begin{split}
        &-\log(F(n)) < -L\log\left(1-C\frac{\log(n)}{L}\right) \\&= C\log(n)\left(1+C\frac{\log(n)}{2L} + O\left(\frac{\log^2(n)}{L^2}\right)\right) \in O(\log(n))
    \end{split}
    \end{equation}
which in turn implies $F(n) \in e^{-O(\log(n))} = \Omega(1/n^C) \subset \Omega(1/\text{poly}(n))$.
\end{proof}

\subsection{Small-angle initializations lower bounds}

In order to prove the general lower bounds on small angle initializations provided in the main text, it is useful to start from \cref{prop:noise_small_angle_equiv}, as it is the fundamental building block in this type of proofs. We recall it for convenience.

\begin{proposition}\label{prop:equivalence:appendix}
    Let $\{E_\phi\}_\phi$ be an ensemble such that $\mathcal{E}(\rho) = \mathbb{E}_\phi \{ E_\phi \rho E_\phi^\dagger\}$ is a quantum channel. Further, denote by $T_\phi$ the transfer matrix associated to each $\mathcal{E}_\phi^\dagger(\cdot) = E^\dagger_\phi \cdot E_\phi$, and by $T$ the transfer matrix of $\mathcal{E}^\dagger$. Then we have
    \begin{equation}
\mathbb{E}_\phi \{T_\phi\} \geq T,
    \end{equation}
with equality holding if and only if $\mathcal{E}$ is unitary.
\end{proposition}

\begin{proof}
    Let $A,B$ be arbitrary bounded operators, and consider
    \begin{equation}
    \begin{split}
        \Trb{\mathcal{E}(A)B}^2 &= \Trb{A\mathcal{E}^\dagger(B)}^2 \\
        &= \mathbb{E}_\phi\left\{\Trb{A\mathcal{E}^\dagger_{\phi}(B)}\right\}^2 \\
        &\leq \mathbb{E}_\phi\left\{\Trb{A\mathcal{E}^\dagger_{\phi}(B)}^2\right\} \;\; \forall A,B
    \end{split}
    \end{equation}
    which follows from the observation that $f(\phi) = \Trb{A\mathcal{E}^\dagger_{\phi}(B)} \in \mathbb{R}$, and so $\Var{\phi}\{f\} \geq 0$. Applying this to the entries of $T$ and $T_\phi$ gives the general inequality. Finally, the equality follows from $\Var{\phi}\{f\}=0$, which is means that $f(\phi)=\Trb{AK^\dagger B K}$ is a constant, where $K=E_\phi$ $\forall \phi$. Hence, since $\mathcal{E}(\cdot) = K \cdot K^\dagger$ is CPTP, it must also be unitary. 
\end{proof}

In order to translate this rather abstract formulation into a practical recipe, we need to identify the conditions that allow to treat the contribution of given an ensemble $\{\mathcal{E}_\phi\}$ of parameterized intermediate channels to the variance in terms of the mean LTM $\mathbb{E}_\phi\left\{T_\phi \right\}$. In particular, it is easily verified that, if $\phi$ is sampled independently of the other parameters, then

\begin{equation}
\begin{split}
     \Varr^L &= \mathbb{E}_\phi \mathbb{E}_\theta \left\{\Trb{\Phi_{\theta,\phi} (\rho) H}^2\right\} \\
     &= \mathbb{E}_\phi \left\{(\ell_\rho, T_\phi \ell_H)\right\} =  (\ell_\rho, \mathbb{E}_\phi \{T_\phi\} \ell_H).
\end{split}
\end{equation}

Combining this observation with \cref{cor:small_angle:appendix}, we can get the QResNet lower bound of \cref{prop:qresnet}.

\begin{proposition}[QResNet]
    Let $\mathcal{E}_\phi(\cdot) = e^{i\phi G} \cdot e^{-i\phi G}$ be a unitary entangling gate, and let $\mu, \sigma^2$ be the mean and variance of the initialization distribution $\mathcal{P}$ of $\phi$. Then, if $\mu=0$, ${\sigma^2 \in O(\log n/\|G\|_2^2L)}$, and $L\in\Omega(\log^{1+\epsilon} n)$
    \begin{equation}
        \Varr^L \geq F(n) (\ell_\rho, \ell_H)
    \end{equation}
    where $F(n) \in \Omega(1/\text{poly}(n))$.
\end{proposition}

\begin{proof}
    Let $T$ be the locality transfer matrix of $\mathcal{E} = \mathbb{E}_\phi\{\mathcal{E}_\phi\}$, and consider the diagonal element $T_{\kappa,\kappa}$. Then, by definiton, we have

    \begin{equation}\label{eq:qresnet_appendix:1}
    \begin{split}
                T_{\kappa,\kappa} &= \frac{1}{d_\kappa} \sum_{i, j} \Trb{\mathbb{E}_\phi\{P_i\mathcal{E}^\dagger_\phi(P_j)\}}^2 \Tilde{\delta}_{i, \kappa} \Tilde{\delta}_{j, \kappa} \\
                &\geq \frac{1}{d_\kappa} \sum_{i} \Trb{\mathbb{E}_\phi\{P_i\mathcal{E}^\dagger_\phi(P_i)\}}^2 \Tilde{\delta}_{i, \kappa} \\
                &= \frac{1}{d_\kappa} \sum_{i} \Trb{\mathbb{E}_\phi\{P_ie^{-i\phi G}P_i e^{i\phi G}\}}^2 \Tilde{\delta}_{i, \kappa}
    \end{split}
    \end{equation}
Since $\sigma^2\to 0$ as $n\to \infty$, to find the asymptotic behaviour of the diagonal elements of $T$ we can expand $e^{i\phi G}$ around $\mu$, and obtain ${e^{i\phi G} = \mathds{1} + i\phi G - \phi^2 G^2/2 + O(\phi^3\|G\|_2^3)}$. Substituting this into \cref{eq:qresnet_appendix:1}, we get 
\begin{equation}
    \begin{split}
        T_{\kappa,\kappa} &\approx \frac{1}{d_\kappa} \sum_{i} \text{Tr}\Big[\mathbb{E}_\phi\{P_i(\mathds{1} + i\phi G - \phi^2 G^2/2)P_i (\mathds{1} +\\
        &+i\phi G - \phi^2 G^2/2)\}\Big]^2 \Tilde{\delta}_{i, \kappa}\\
        &= \frac{1}{d_\kappa} \sum_{i} \left(1-\mathbb{E}_\phi\{\phi^2\}\left(\Trb{G^2P_i^2} - \Trb{P_iGP_iG}\right)\right)^2\Tilde{\delta}_{i, \kappa}\\
        &\geq 1- 4\|G\|^2_2 \sigma^2 \;\; \forall \kappa\in\{0,1\}^M
    \end{split}
\end{equation}
Exploiting \cref{cor:small_angle:appendix}, we get the proposition by showing $T_{\kappa,\kappa} \geq 1 - f(n,L)$, where $f(n,L) \in O(\log n/L)$. In particular this follows directly from the scaling of $\sigma^2$. The same proof ensures absence of concentration on a unitary QResNet, provided that $\mathcal{P}$ is chosen as an initialization probability by \cref{prop:equivalence:appendix}.
\end{proof}

\section{NIBP-free landscapes under unital noise}\label{sec:unital_noise_examples}
To illustrate the emergence of non-trivial behavior in the loss variance $\Varr$ under unital noise channels, we focus on two representative examples. Both involve small quantum circuits composed of $n=2$ qubits and correlated noise. In these cases, the presence of non-unitary layers is either irrelevant or beneficial to the overall magnitude of the variance.

\begin{figure}
    \centering
    \includegraphics[width=\linewidth]{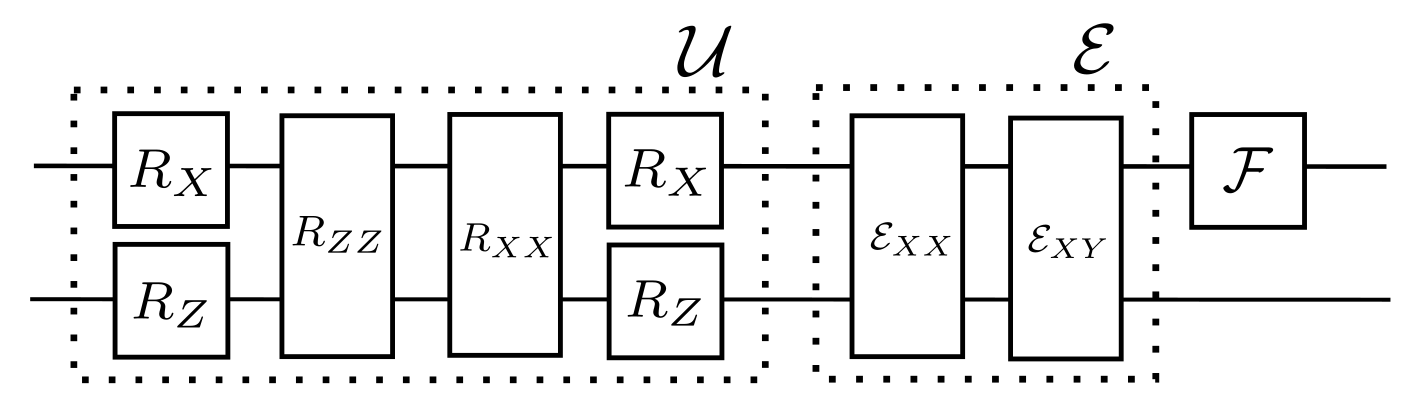}
    \caption{Single layer structure used in the numerical examples. The circuit acts on a two-qubit register and is divided into three blocks, each representing either unitary gates ($\mathcal{U}$) or noise channels ($\mathcal{E}$ and $\mathcal{F}$).}
    \label{fig:circ_example}
\end{figure}

We begin with the circuit depicted in \cref{fig:circ_example}, which features a unitary layer $\mathcal{U}$ as shown, combined with a fixed noise channel $\mathcal{E}_l = \mathcal{E} = \mathcal{E}_{XX} \circ \mathcal{E}_{XY} \; \forall l$. Specifically, $\mathcal{E}$ is defined as the composition of two dephasing-like channels on the $XX$ and $XY$ bases respectively, namely $\mathcal{E}_{XX}(\rho) = (1-p)\rho + p XX \rho XX $ and $\mathcal{E}_{XY}(\rho) = (1-p)\rho + p XY\rho XY$.

By analyzing the generators of the parameterized gates $R_X$ and $R_{ZZ}$ (i.e., $X$ and $ZZ$), one can verify that they commute with both the other rotations appearing in $\mathcal{U}$ (i.e., $R_Z$ and $R_{XX}$) and with the noise channel for any choice of parameters. Consequently, their action can be moved to the end of the circuit, leaving a mixed channel at the beginning. Notably, the state $\rho = \ket{00}\bra{00} + \ket{11}\bra{11}$ is a fixed point of this mixed channel, as it is singularly as fixed point of each of the remainig operations. As a result, by choosing $\rho$ as the initial state, the circuit is effectively equivalent to its purely unitary counterpart and thus free of NIBP. This is confirmed by an analytical computation of the variance via \cref{eq:v_infty_general}, which yields a non-vanishing result that is independent of the number of layers. In particular, \cref{fig:example1} numerically corroborates the prediction of $\Varr^L = 1$ for $H = ZZ + XI + YZ$. 

Furthermore, by substituiting $\mathcal{U}$ with $\mathcal{U}'$, obtained by replacing the first $RX$ gate with an $RY$ gate, \cref{fig:example1} shows how the NIBP phenomenon reappears, highlighting again the crucial role in the interplay between noise and circuit structure.

\begin{figure}
    \centering
    \includegraphics[width=\linewidth]{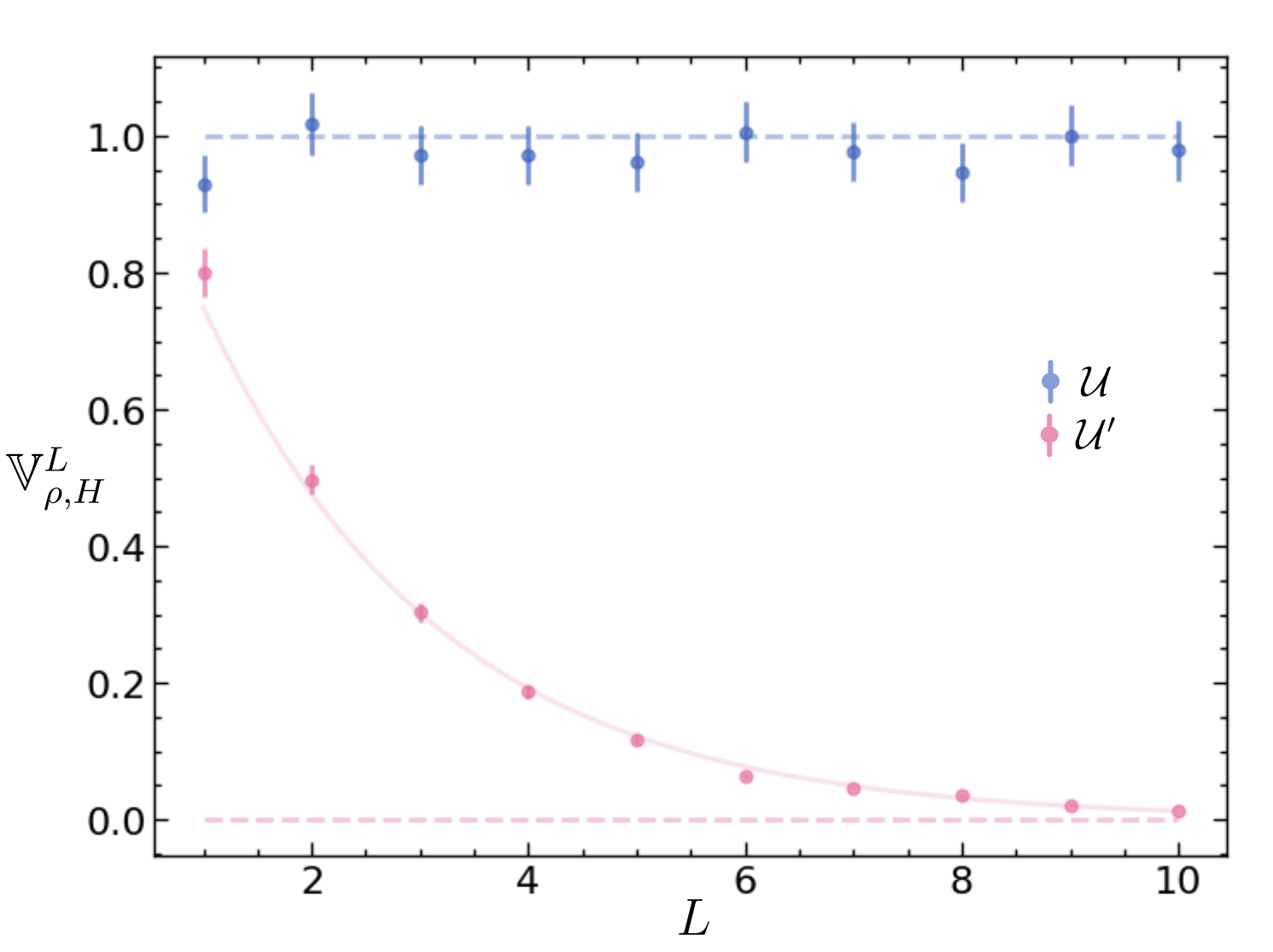}
    \caption{Example of NIBP-free unital, non-unitary channel. Dotted lines represent the theoretical deep-circuit limiting values for the variance in the case of layers defined by $\mathcal{U}\circ\mathcal{E}$ (blue) and $\mathcal{U}'\circ\mathcal{E}$ (pink), where $\mathcal{U}'$ is obtained from $\mathcal{U}$ in \cref{fig:circ_example} by substituiting the first $RX$ gate with an $RY$ gate. The solid pink line is an exponential fit of the convergence to the limiting value.}
    \label{fig:example1}
\end{figure}

As a second example, we change initial state, setting it to $\rho = \mathds{1}\otimes \ket{0}\bra{0}$, while keeping $H$. By virtue of this choice, the variance of the unitary circuit is now identically zero, as shown in \cref{fig:example2}. Here, we consider a modified noise channel $\mathcal{E}' = \mathcal{E} \circ \mathcal{F}$, where $\mathcal{F}$ is a single-qubit channel applied to the first qubit, defined as $\mathcal{F}(\rho) = (1-p)\rho + p\mathcal{A}(\rho)$, and
\begin{equation}
    \begin{split}
    \mathcal{A}(\rho) &= 
 \frac{q}{\sqrt{2}}\Trb{X\rho} X 
 +\frac{q}{\sqrt{2}}\Trb{X\rho} Y\\ 
&+(1-q)\Trb{Y\rho} Y \\
&+\frac{q}{\sqrt{2}}\Trb{Z\rho} Z
+\frac{q}{\sqrt{2}}\Trb{Z\rho} Y
    \end{split}
\end{equation}
By computing its Choi matrix, one can confirm that $\mathcal{F}$ is completely positive and trace-preserving (CPTP), and therefore constitutes a valid quantum channel whenever $0\leq p \leq 1$ and $0\leq q \leq 1/2$. Although $\mathcal{E}'$ is still not strictly contractive, it is \emph{contractive enough} to satisfy \cref{cor:deep_noisy}, and thus provably induces the NIBP effect, as corroborated numerically in \cref{fig:example2}. Interestingly however, the behavior of $\Varr$ under this channel is not monotonic with respect to the circuit depth: it initially increases before eventually decaying to zero in the deep-circuit limit, thus temporarily enhancing the variance.
This observation motivates the consideration of a non-uniform noise profile, where $\mathcal{E}_l \neq \mathcal{E}_{l'}$ for some $l, l'$, defined as: 
\begin{equation}
    \mathcal{E}_l'' = \begin{cases}
        \mathcal{E}' \;\; &\text{if } l \leq const  \\
        \mathcal{E} \;\; &\text{otherwise}\\
    \end{cases}
\end{equation}
In such a setup, the circuit benefits from the absence of concentration observed in the first example, as well as the transient variance enhancement from the second, converging to non-zero value dependent on the constant rather than the number of layers. This hybrid effect is clearly demonstrated in \cref{fig:example2}. The mechanism at play here is at the core of the \emph{absorption} effects shown in the main text.

\begin{figure}
    \centering
    \includegraphics[width=\linewidth]{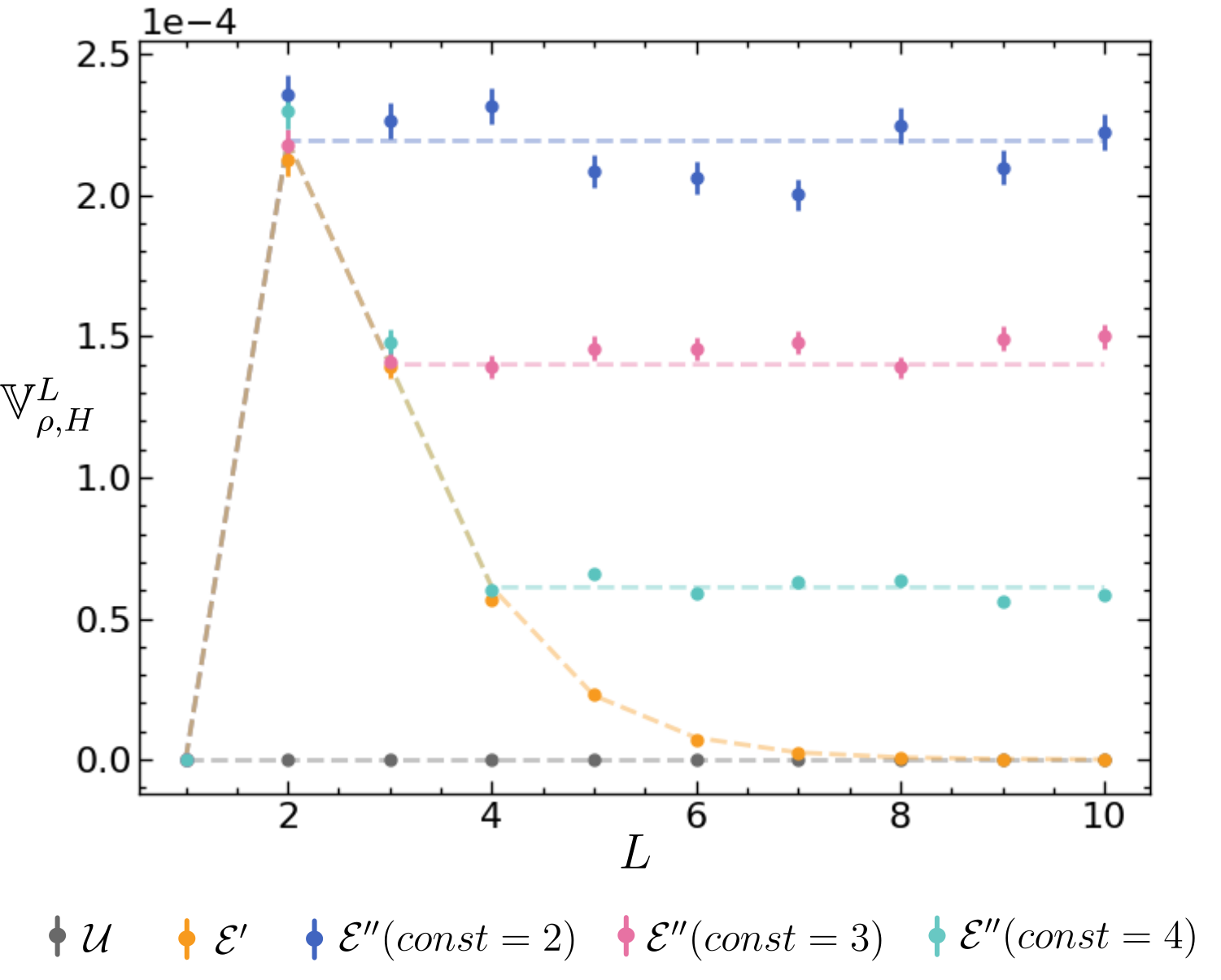}
    \caption{Variance improvment with unital channels. Dotted lines represent the theoretical prediction given by \cref{eq:general_formula}, while points indicate the numerically estimated circuit variances. }
    \label{fig:example2}
\end{figure}

Finally, we emphasize that, despite the limited system size considered ($n=2$ qubits), these examples are still insightful in illustrating the absence of NIBP. This is because the key quantity of interest is the scaling with respect to the circuit depth $L$, rather than the number of qubits $n$.

\section{Non-unital noise and entanglement}\label{sec:methods:noise_ent}

In this Section we explicitely derive the absorption terms pertaining to the trivial component $\mathcal{B}_0$, in the presence of strictly contractive, yet non-unital noise, first in general, and then on a specific system used for the numerical calculations.

\subsection{Numerics}

\begin{figure}
    \includegraphics[width=\linewidth]{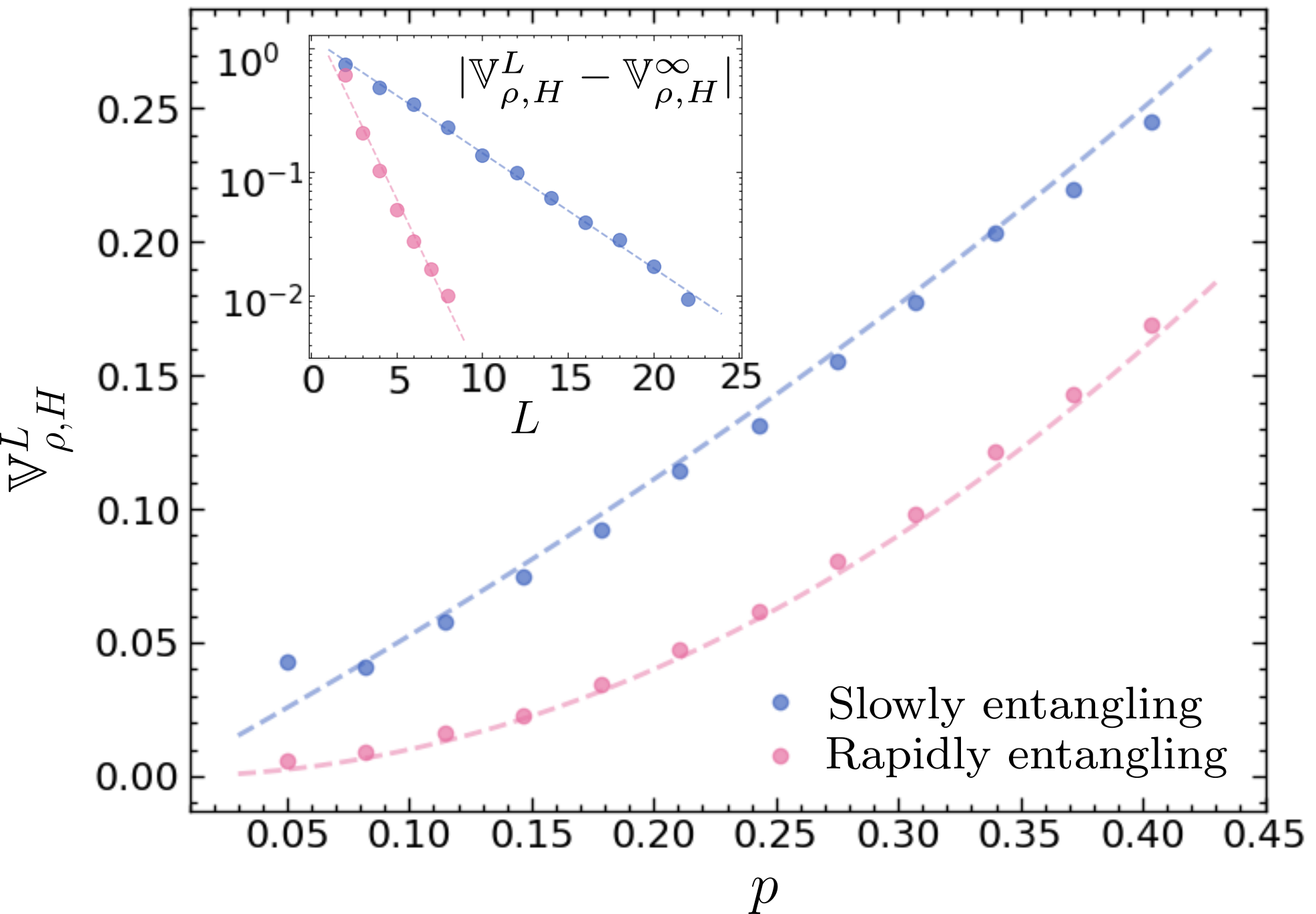}
    \caption{Scaling of $\Varr^L$ as a function of the noise strength and the entangling power of the intermediate channel. The main figure illustrates the scaling of $\Varr^\infty$ with noise strength $p$ for both rapidly entangling (pink) and slowly entangling (light blue) channels, using $L = 8$ and $L = 20$, respectively. The dotted lines represent the theoretical predictions of \cref{eq:quadratic_dependence} and \cref{eq:linear_dependence} The inset verifies the exponential convergence of $\Varr^L$ to $\Varr^\infty$ at $p = 0.1$, justifying the chosen number of layers. The dotted lines represent an exponential fit to the numerical data. All plots are obtained using a $n=10$ qubit system.}
    \label{fig:numerical}
\end{figure}

Exploiting the knowledge about the structure of the absorption matrix $A$, derived from \cref{thm:main}, it is possible to study the scaling of the variance $\Varr^\infty$ as a function of the noise strength and entangling power of the unitary circuit. To do so, let us consider a non-unital map of the form
\begin{equation}\label{eq:noise_model}
    \mathcal{E}_c(\rho) = (1-p)\mathcal{E}(\rho) + p \Tilde{\rho},
\end{equation}
where $\Tilde{\rho}\neq\mathds{1}/d$ is an arbitrary quantum state, $\mathcal{E}$ is a unitary channel representing the entangling operation, and $p$ is the error probability associated to $\mathcal{E}_c$. Intuitively, we can think of the resulting channel $\Phi_\theta$ as the repetition of $L$ layers, each made up of the composition of $\Tilde{\Phi}(\rho) = (1-p)\rho + p\Tilde{\rho}$ and of $\mathcal{E}\circ\mathcal{U}_{\theta_l}$. Clearly $\Tilde{\rho}$ is the unique fixed-point of $\Tilde{\Phi}$, while since the local unitaries are assumed to form 2-designs, the only fixed point for the unital part, valid for all parameters, is the maximally mixed state $\mathds{1}/d$. This causes the emergence of competing effects, which are the ultimate origin of the variance in such models. However, depending on the relative strength of the two effects, the behaviour of $\Varr^\infty$ as a function of $p$ may vary greatly. \cref{thm:main} allows to quantify the variance scaling in the limit of a rapidly entangling and slowly entangling channel, i.e. $\beta \to \infty$ and $\beta \to 0$ respectively. In particular, for both limits, the LTM approaches a projection, namely the dominant eigenprojection in the former and the identity in the latter. In these cases, $\Varr^\infty$ is given by the following Lemma.

\begin{lemma}\label{lemma:noise_model}
Let $\mathcal{E}_c$ be a quantum channel of type \cref{eq:noise_model}, with $0< p \leq 1$ and let $T$ be the transition matrix of $\mathcal{E}^\dagger$. Then
    \begin{equation}
        \Varr^\infty = p^2 (\ell_{\Tilde{\rho}}, (\mathds{1}-(1-p)^2T)^{-1} \ell_H).
    \end{equation}
In particular, if $T$ is a projection, then
    \begin{equation}\label{eq:noise_scaling_projector}
        \Varr^\infty = \left(\frac{p}{2-p}-p^2\right)(\ell_{\Tilde{\rho}},T\ell_H) + p^2(\ell_{\Tilde{\rho}},\ell_H).
    \end{equation}
\end{lemma}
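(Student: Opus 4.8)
The plan is to \emph{unravel} the non-unital layer. Writing $\mathcal{E}_c = \tilde\Phi\circ\mathcal{E}$ with $\tilde\Phi(\sigma) = (1-p)\,\sigma + p\,\Trb{\sigma}\,\tilde\rho$, I split $\tilde\Phi = (1-p)\,\mathrm{id} + p\,\mathcal{R}$ with $\mathcal{R}(\sigma) = \Trb{\sigma}\,\tilde\rho$, and expand every layer along this convex mixture. Since $\mathcal{R}$ erases the incoming state, a branch of the expansion is determined by the index $l^\ast$ of its \emph{last} reset (plus the single branch with no reset at all), so that, taking $H$ traceless without loss of generality,
\[
\Loss \;=\; (1-p)^L\,\Trb{\Psi_L(\rho)H} \;+\; \sum_{l^\ast=1}^{L} p\,(1-p)^{\,L-l^\ast}\,\Trb{\Psi_{L-l^\ast}(\tilde\rho)H},
\]
where $\Psi_k$ denotes the residual depth-$k$ homogeneous unitary sub-circuit ($k$ copies of $\mathcal{E}$ interleaved with $k+1$ local $2$-designs). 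Because every $\Psi_k$ is a global $1$-design, $\mathbb{E}_\theta[\Trb{\Psi_k(\sigma)H}] = \Trb{H}/d = 0$ on each branch, so $\mathbb{E}_\theta[\Loss] = 0$ and $\Varr^L = \mathbb{E}_\theta[\Loss^2]$.

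Next I would show that all cross terms in $\mathbb{E}_\theta[\Loss^2]$ vanish. Any two distinct branches differ by at least one independent local $2$-design that influences only one of them; integrating that design first and using that a local $1$-design randomizes its subsystem to $\mathds{1}/d$ (which is then preserved by the remaining unitary layers of that branch) collapses one factor to $\Trb{(\mathds{1}/d)\,H} = 0$. Hence $\Varr^L$ is the diagonal sum, and \cref{prop:general_formula} applied to each homogeneous sub-circuit gives $\mathbb{E}_\theta[\Trb{\Psi_k(\sigma)H}^2] = (\ell_\sigma,\, T^k\ell_H)$ with $T$ the LTM of $\mathcal{E}^\dagger$. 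Reindexing $k = L - l^\ast$,
\[
\Varr^L \;=\; (1-p)^{2L}\,(\ell_\rho,\, T^L\ell_H) \;+\; p^2\Big(\ell_{\tilde\rho},\; \sum_{k=0}^{L-1}\big((1-p)^2\,T\big)^k\,\ell_H\Big).
\]

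To conclude, I take $L\to\infty$. The first term vanishes since $|(\ell_\rho, T^L\ell_H)| = \mathbb{E}_\theta[\Trb{\Psi_L(\rho)H}^2] \le \|H\|_\infty^2$ while $(1-p)^{2L}\to 0$ (the case $p=1$ being immediate). Since $\mathcal{E}$ is unitary, $\mathcal{E}^\dagger$ is a Hilbert-Schmidt isometry, so $\sum_\lambda T_{\kappa\lambda} = \tfrac{1}{d_\kappa}\sum_j\|\mathcal{E}^\dagger(B_j)\|_2^2 = 1$ for every $\kappa$; thus $T$ is row-stochastic, $\rho(T)=1$, and $\rho((1-p)^2 T) = (1-p)^2 < 1$, so the Neumann series converges to $p^2(\ell_{\tilde\rho}, (\mathds{1}-(1-p)^2 T)^{-1}\ell_H)$, the first claim. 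If $T$ is a projection, $T^k = T$ for $k\ge 1$ gives $(\mathds{1}-(1-p)^2 T)^{-1} = \mathds{1} + \tfrac{(1-p)^2}{1-(1-p)^2}\,T = \mathds{1} + \tfrac{(1-p)^2}{p(2-p)}\,T$; substituting and using $\tfrac{p(1-p)^2}{2-p} = \tfrac{p}{2-p} - p^2$ yields \cref{eq:noise_scaling_projector}. The main obstacle is the cross-term step: the unravelling must be organized so that the independence pattern of the local $2$-designs is explicit and the vanishing of \emph{every} off-diagonal contribution — between the no-reset branch and a reset branch, and between two reset branches with different $l^\ast$ — follows uniformly from the $1$-design property. (Alternatively one can bypass the unravelling, compute the LTM $T_c$ of $\mathcal{E}_c^\dagger = (1-p)\,\mathcal{E}^\dagger + p\,\Trb{\tilde\rho\,\cdot}\,\mathds{1}$ directly, observe that $\mathcal{B}_0$ is the unique essential block with $\rho(T_0) = 1$ while every other direction is contracted by the factor $(1-p)^2$, and invoke \cref{thm:main}; the delicate point then migrates to fixing the dimensional prefactors in the absorption matrix $A = R(\mathds{1}-Q)^{-1}$.)
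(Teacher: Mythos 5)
Your proposal is correct, and it reaches both formulas by a genuinely different route from the paper. The paper treats the lemma as a corollary of the deep-circuit machinery: it computes the LTM of $\mathcal{E}_c^\dagger$ in canonical form, observes that $\mathcal{B}_0$ is the only essential block, identifies $Q=(1-p)^2T$ and $(R_0)_{0,\kappa}\propto d\,p^2(\ell_{\tilde\rho})_\kappa$, and then assembles the absorption matrix $A_0=P_0R_0(\mathds{1}-Q)^{-1}$ and invokes \cref{thm:main} (via \cref{cor:deep_noisy}) — i.e.\ exactly the ``alternative'' you sketch in your closing parenthesis, including the dimensional-bookkeeping subtlety you anticipate. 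You instead unravel each layer into reset branches indexed by the last reset, obtain the exact finite-depth identity $\Varr^L=(1-p)^{2L}(\ell_\rho,T^L\ell_H)+p^2\sum_{k=0}^{L-1}(\ell_{\tilde\rho},((1-p)^2T)^k\ell_H)$, and sum the Neumann series. This buys you an explicit formula at every $L$ (making the ``only the last $O(-\log p)$ layers matter'' discussion in the main text immediate) and needs only \cref{prop:general_formula} plus the local $1$-design property, not the full spectral decomposition of the LTM; the paper's route is shorter given that \cref{thm:main} is already in place and generalizes more readily to non-projective $T$ structure. The cross-term step you flag as the main obstacle is in fact already sound as sketched: for any two distinct branches the one with the earlier (or absent) last reset contains an extra independent local $2$-design acting directly on its input state; averaging over that design alone maps the input to $\mathds{1}/d$, which the remaining unital layers preserve, so that factor reduces to $\Trb{H}/d=0$ while the other factor is constant in that parameter — this kills every off-diagonal term uniformly. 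Two small points worth making explicit in a write-up: the branch decomposition is a deterministic identity for $\Trb{\Phi_\theta(\rho)H}$ at fixed $\theta$ (by linearity, with the weights of the pre-reset choices summing to one), not a stochastic trajectory argument; and the reduction to traceless $H$ should be stated, since with $\Trb{H}\neq 0$ the $\kappa=0$ component would spuriously enter the scalar product $(\ell_{\tilde\rho},(\mathds{1}-(1-p)^2T)^{-1}\ell_H)$ — the paper makes the same implicit assumption.
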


The first thing to notice is that the dependence on the initial state of $\Varr^\infty$ is completely lost: the component associated to it decays exponentially fast in the number of layers when $p\in\Theta(1)$, and as a consequence, vanishes in the limit. The remaining terms, instead, pertain to the fixed point of the noise channel $\Tilde{\rho}$, and therefore still appear. In particular, the last term pertains to the very last layer, while the first collects the contribution of all preceding layers. Clearly, since the decay of these contributions is exponential, only layers where $L-l \lesssim -2\log p$ will contribute sensibly to the variance.

\cref{lemma:noise_model} allows to compute the scaling as a function of $p$ in the two opposite limits. Starting from the slowly entangling case, i.e. $T \approx \mathds{1}$, the scaling is approximately \emph{linear} in $p$. More precisely, we have
\begin{equation}\label{eq:linear_dependence}
    \Varr^\infty = \frac{p}{2-p}(\ell_{\Tilde{\rho}}, \ell_H).
\end{equation}
The result shows that first term in \cref{eq:noise_scaling_projector} dominates, suggesting that $\Varr^\infty$ emerges from the contribution of the last $O(-\log p)$ layers. In this sense, for fixed noise rates, only the last portions of the channels are relevant for VQAs \cite{Mele24}. Conversely, in the opposite limit, the dependence on $p$ is more complex, as it now depends on the irreducible components $T_z$. For simplicity, if we consider the case of a highly expressive ansatz, we may take $T$ to have only one irreducible component $T_1$. In this case, the last term in \cref{eq:noise_scaling_projector} dominates, and we get a \emph{quadratic} dependence on $p$ up to an exponentially vanishing correction, namely 
\begin{equation}\label{eq:quadratic_dependence}
    \Varr^\infty = p^2(\ell_{\Tilde{\rho}}, \ell_H) + O\left(4^{-n}\right).
\end{equation}
This worsens the concentration, suggesting that now only the very last layer is able to produce a sizable effect, hence giving an effectively constant depth circuit.

We provide numerical evidence for the application discussed here, derived by the formalism introduced in this work.
We utilize Pennylane~\cite{Bergholm22} to construct and optimize the PQC described in the following.

Regarding the initial state, we use $\rho=(\ket{0}\bra{0})^{\otimes n}$ for simplicity.
As for the channel, we consider maps $\mathcal{E}_c$ of the family $\mathcal{E}_c = \mathcal{N}\circ\mathcal{E}$, $\mathcal{E}_c(\rho) = (1-p)\mathcal{E}(\rho) + p \Tilde{\rho}$. In particular, $\mathcal{E}$ is a unitary, entangling channel  depicted in \cref{fig:ents} and $p\in (0,1]$ represents the noise strength of the noise map $\mathcal{N}(\rho) = (1-p)\rho + p\Tilde{\rho}$ with fixed point $\Tilde{\rho}$. 
Specifically, we fix $\Tilde{\rho}$ to be a highly entangled, pure state, i.e. the GHZ state $\Tilde{\rho} = (\ket{0}^{\otimes n} + \ket{1}^{\otimes n})(\bra{0}^{\otimes n} + \bra{1}^{\otimes n})/2$.

\begin{figure}
  \includegraphics[width=\linewidth]{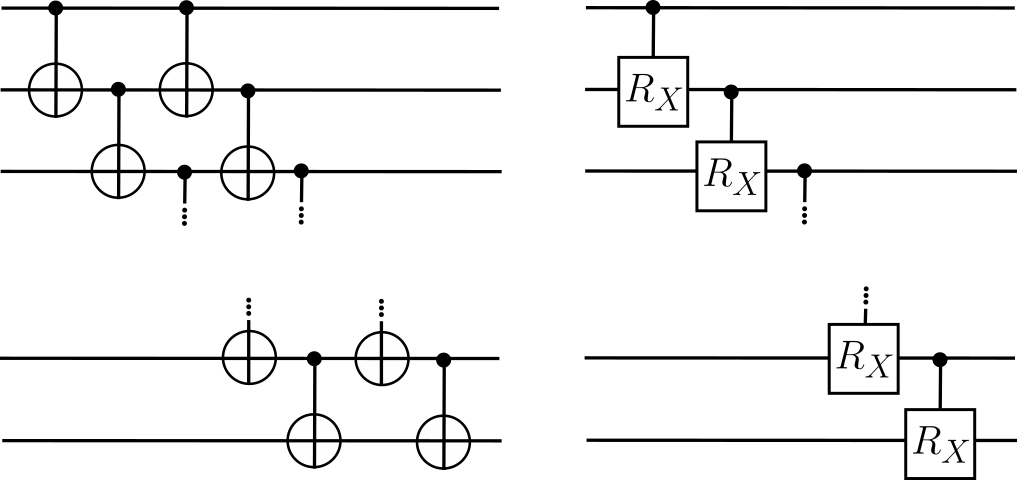}  
  \caption{Entangling unitaries used in the examples. On the left, the rapidly entangling configuration is composed of a double cascade of CNOT gates, while on the right, the slowly entangling one is composed of a single cascade of controlled $RX$ gates, where $RX(\theta) = e^{i\theta X/2}$ and $X$ is the Pauli $X$ gate. In particular, we fixed $\theta= \pi/20$.} \label{fig:ents}
\end{figure}

Concerning $H$, we fix $H = h\sum_{k=1}^{n} 2^{n/2}\,Z_{k}\otimes Z_{k+1}$, as it represents the simplest observable involving all $n$ qubits while having a non-vanishing normalization factor $(\ell_{\Tilde{\rho}}, \ell_H)$. In particular, as shown in \cref{app:sec:numerical_example}, if we fix $h=9/n$, we have $(\ell_{\Tilde{\rho}}, \ell_H) = 1$, which makes the scaling of $\Varr^\infty$ especially easy to check.
Finally, the entangling part is chosen according to \cref{fig:ents}. While the number of layers $L$ needed to reach convergence to $\Varr^\infty$ is logarithmic, as shown in \cref{thm:main} and numerically assessed in the inset of \cref{fig:numerical}, the mixing speed varies depending on the entangling part. For this reason, $L=8$ layers are sufficient in the rapidly entangling case, but $L=20$ are necessary in the slowly entangling case. Finally, all simulations are performed using $n=10$ qubits.
The results are shown in \cref{fig:numerical}. The plots show both the quadratic and linear scaling with $p$ predicted by our model, as well as an exponential decay in the difference $|\Varr^L-\Varr^\infty|$ for fixed noise rates. A slight deviation from the predicted scaling is observed in the slowly entangling setting for $p \approx 0$. This effect can be explained by the finite amount of layers used: the slow speed of convergence imposed by the condition $T\approx \mathds{1}$ prevents the variance to reach the asymptotic limit $\Varr^\infty$, while the action of the noise is still too weak to erase the contribution of the first layers, hence deviating from the predicted behaviour. This same phenomenon is not observed in the rapidly entangling case, as entanglement production already exponentially suppresses those contributions.

\subsection{Calculation of the absorption term}
Explicit computation of the absorption matrix $A = R(\mathds{1}-Q)^{-1}$ is a non-trivial task, which should be tackled on a case-by-case basis. Indeed, in the non-unital case, this term describes the complex phenomenon arising from the interaction of two competing effects, which drive the system towards different states. Analytical summation of $A$ is however can be feasible and still give insight on the interaction of the two. Indeed, using a simplified model, we can perform this calculation and still be able to appreciate the different effects that rapidly entangling and slowly entangling circuits have on a fixed, non-unital noise as a function of its strength. In particular, here we prove \cref{lemma:noise_model}, which we repeat for convenience.

\begin{lemma}
Let $\mathcal{E}_c$ be a quantum channel of type \cref{eq:noise_model}, with $0< p \leq 1$ and let $T$ be the transition matrix of $\mathcal{E}^\dagger$. Then
    \begin{equation}
        \Varr^\infty = p^2 (\ell_{\Tilde{\rho}}, (\mathds{1}-(1-p)^2T)^{-1} \ell_H).
    \end{equation}
In particular, if $T$ is a projection, then
    \begin{equation}
        \Varr^\infty = \left(\frac{p}{2-p}-p^2\right)(\ell_{\Tilde{\rho}},T\ell_H) + p^2(\ell_{\Tilde{\rho}},\ell_H).
    \end{equation}
\end{lemma}

\begin{proof}
The first result follows directly from \cref{thm:main_appendix}, in particular \cref{cor:deep_noisy_app}, by computation of $A_0$. In particular, we can explicitly compute $R_0$ element wise as

\begin{equation}
\begin{split}
       (R_0)_{0,\kappa} &= \frac{1}{d_\kappa} \sum_{j} \Trb{\frac{\mathds{1}}{\sqrt{d}}\mathcal{E}_c^\dagger(P_j)}^2 \Tilde{\delta}_{j, \kappa}\\
       &= \frac{1}{d_\kappa} \sum_{j} \Trb{\mathcal{E}_c\left(\frac{\mathds{1}}{\sqrt{d}}\right)P_j}^2 \Tilde{\delta}_{j, \kappa}\\
       &= \frac{1}{d_\kappa} \sum_{j} \Bigg((1-p)\Trb{\Phi\left(\frac{\mathds{1}}{\sqrt{d}}\right)P_j} \\
       &+ p\sqrt{d}\Trb{{\Tilde{\rho}}P_j} \Bigg)^2\Tilde{\delta}_{j, \kappa}\\
       &= dp^2 \frac{1}{d_\kappa} \sum_{j} \Trb{{\Tilde{\rho}}P_j}^2\Tilde{\delta}_{j, \kappa} = d p^2  (\ell_{\Tilde{\rho}})_\kappa
\end{split}
\end{equation}
by unitality of $\Phi$. By an analogous calculation, it can be shown that $Q=(1-p)^2T$, and by trace preservation $(P_0)_{0,0}=1$.
With these elements, we can compute $A_0 = P_0R_0(1-Q)^{-1}$, and we get
\begin{equation}
    (A_0)_{0,\lambda} = dp^2 (\ell_{\Tilde{\rho}})_\kappa (1-(1-p)^2T)^{-1}_{\kappa,\lambda}.
\end{equation}
In particular, since for any initial state $\rho$, we have $(\ell_\rho)_0 = 1/d$ by normalization, we get the final result
\begin{equation}
    (\ell_\rho, A_0 \ell_H) = p^2 (\ell_{\Tilde{\rho}}, (\mathds{1}-(1-p)^2T)^{-1} \ell_H).
\end{equation}
Using this, we can explicitly compute the right-hand side in the simplified setting where $T$ is a projection. In that case in particular, we have that 
\begin{equation}
\begin{split}
     &(\mathds{1}-(1-p)^2T)^{-1}\\
     = &(\mathds{1}-T) + (1-(1-p)^2)^{-1}T\\
     = &(\mathds{1}-T) + \frac{1}{p(2-p)}T \\
     = &\mathds{1} +
      \left(\frac{1}{p(2-p)} -1\right)T\\
\end{split}
\end{equation}
which gives the result.
\end{proof}

\subsection{Choice of the system for the numerical example}\label{app:sec:numerical_example}

In this section we give an explicit construction of the noise channels used in the numerical example shown in the main text, as well as the choice and normalization procedure of the observable $H$ used. 
In particular, the analysis focuses on the case of unitary entangling map $\Phi$ whose transfer matrix is well approximated by a projector, and a highly entangled, pure fixed point, namely the GHZ state $\Tilde{\rho} = (\ket{0}^{\otimes n} + \ket{1}^{\otimes n})(\bra{0}^{\otimes n} + \bra{1}^{\otimes n})/2$.
Thanks to \cref{lemma:noise_model}, we know that the main contribution to $\Varr^\infty$ in this setting comes from the term $(\ell_{\Tilde{\rho}}, \ell_H)$, so to numerically assess the results, it is useful to choose the observable $H$ in order to normalize this factor. In particular, given the qubit structure, we can express the $\Tilde{\rho}$ in terms of the \emph{normalized} Pauli basis $\{\mathds{1}, X,Y,Z\}^{\otimes n}$, which allows to easily compute the locality vectors. Using the spectral decomposition of the Pauli matrices, it is easy to see that 

\begin{equation}
\begin{split}
     &\ket{0}\bra{0} = \frac{\mathds{1} + Z}{\sqrt{2}}, \;\; \ket{0}\bra{1} = \frac{X + iY}{\sqrt{2}}, \\
     &\ket{1}\bra{1} = \frac{\mathds{1} - Z}{\sqrt{2}}, \;\; \ket{1}\bra{0} = \frac{X - iY}{\sqrt{2}}.
\end{split}
\end{equation}

Using this decomposition, we can find a formula for $\Tilde{\rho}$ exploiting a generalization of the binomial theorem.

\begin{theorem}\label{thm:permutation:appendix}
    Let $A, B \in \mathbb{M}_d(\mathbb{C})$ be square matrices, and let $S$ be the set of all permutations of $n$ elements. Then
    \begin{equation}
        (A + \omega B)^{\otimes n} = \sum_{j=0}^{n} \omega^j \sum_{\sigma \in S} \sigma(A^{\otimes (n-j)}\otimes B^{\otimes j})
    \end{equation}
    where $\omega \in \mathbb{C}$ and the permutation $\sigma$ is applied to the qubit ordering.
\end{theorem}
In particular, the following corollary will be the most useful in performing the computations
\begin{corollary}
    Let $A, B \in \mathbb{M}_d(\mathbb{C})$ be square matrices, and let $S$ be the set of all permutations of $n$ elements. Then
    \begin{equation}
        \frac{(A + \omega B)^{\otimes n} + (A-\omega B)^{\otimes n}}{2} =\sum_{j=0}^{\lfloor n/2 \rfloor} \omega^{2j} \sum_{\sigma \in S} \sigma(A^{\otimes (n-2j)}\otimes B^{\otimes 2j})
    \end{equation}
    where $\omega \in \mathbb{C}$ and the permutation $\sigma$ is applied to the qubit ordering.
\end{corollary}
\begin{proof}
    This corollary follow directly from \cref{thm:permutation:appendix}, and noticing that all even-indexed terms in both $(A + \omega B)^{\otimes n}$ and $(A -\omega B)^{\otimes n}$ are equal, while odd-numbered terms have opposite sign and therefore cancel out.
\end{proof}

Applying the Corollary to the appropriate pairs of projectors, we can get the final expression

\begin{equation}\label{eq:GHZ_rho:appendix}
\begin{split}
    \Tilde{\rho} = \frac{1}{2^{n/2}} \sum_{j=0}^{\lfloor n/2 \rfloor} \sum_{\sigma \in S} \sigma(\mathds{1}^{\otimes (n-2j)}\otimes Z^{\otimes 2j})\\ + (-1)^j \sigma(X^{\otimes (n-2j)}\otimes Y^{\otimes 2j})
\end{split}
\end{equation}

As it is clear from \cref{eq:GHZ_rho:appendix}, the fixed point of the channel has a non-vanishing component only on Pauli strings that are either non-trivial on all qubits, or non-trivial in \emph{only} in an \emph{even} number of qubits. Then it follows that the simplest observable involving all qubits and with non-vanishing variance is of form $H = h\sum_{k=1}^{n} 2^{n/2}\,Z_{k}\otimes Z_{k+1}$, where the factor $2^{n/2}$ accounts for normalization of $Z$, and cancels out with the corresponding factor in \cref{eq:GHZ_rho:appendix} in the calculations of $(\ell_{\Tilde{\rho}}, \ell_H)$. Finally, since each term in the sum is orthogonal, it gives an independent contribution of $1/(d^2-1)^2 = 1/9$, consequently by choosing $h = 9/n$ we have that $(\ell_{\Tilde{\rho}}, \ell_H) = 1$ is normalized.

\section{Example of a circuit with non-convergent variance}\label{app:non_conv_var}

As \cref{thm:main} suggests, the generic circuit of \cref{eq:circuit_channel} need not have a well-defined value for the deep circuit limit of its variance, namely the limit $\lim_{L\to\infty} \Varr^L$ no need to exists. As discussed in \cref{appendix:proof_main}, this property is related to the presence of \emph{cycles} in $T$, i.e. the presence of periodic irreducible blocks $T_z$ with period $p>1$. As a specific example, consider the circuit depicted in \cref{fig:non_conv_variance}. The simple structure of this circuit allows to explicitly compute $\Varr^L$ as function of $L$. Assuming $\Trb{H}=0$, we have
\begin{equation}
    \Varr^L = \begin{cases}
        \frac{(\|\rho\|_2^2 -1/2)\|H\|_2^2}{3} \;\; &\text{if } L \text{ is even}  \\
        0 \;\; &\text{if } L \text{ is odd}  \\
    \end{cases}
\end{equation}
from which it is clear that the deep circuit limit does not converge. Moreover, thanks to the contained dimension of the system, it is possible to compute and represent $T$:
\begin{equation}
T = 
\begin{pmatrix}
1 & 0& 0& 0\\
0 & 0 & 1 & 0\\
0 & 1 & 0 & 0\\
0 & 0 & 0 & 1
\end{pmatrix}
\;\; \Rightarrow \;\; T_0 = T_2 = \begin{pmatrix}
1
\end{pmatrix}
, \;\; T_1 = \begin{pmatrix}
0 & 1\\
1 & 0
\end{pmatrix}
\end{equation}
Here we can see that $T_1$ has in fact period $2$, which implies the presence of $2$ distinct limiting values whenever both $\rho$ and $H$ have a component belonging to $T_1$, consistently with what discussed above.

On the contrary, in accordance with \cref{thm:main}, the Ces\`aro average of the variance is always well-defined, and in this case we get

\begin{equation}
    \lim_{L\to\infty} \frac{1}{L}\sum_{l=1}^{L} \Varr^l = \frac{(\|\rho\|_2^2 -1/2)\|H\|_2^2}{6},
\end{equation}

which can be recovered both from direct calculation, and by computing the right leading eigenvector $w_1=(1/2, 1/2)^t$ of $T_1$.

\begin{figure}
    \includegraphics[width=0.8\linewidth]{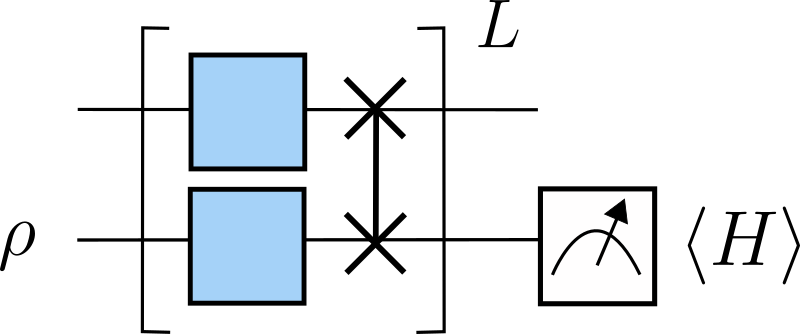}
    \caption{Simple 2-qubit circuit, designed to have a non-convergent variance. In this construction, the channel $\mathcal{E}$ is chosen to be unitary and in particular to be a SWAP gate. Moreover, both the initial state $\rho$ and the observable $H$ are chosen to be non-trivial only on the second qubit, and are therefore represented as single qubit operators.}
    \label{fig:non_conv_variance}
\end{figure}

%

\end{document}


\linenumbers
\preprint{APS/123-QED}

\title{Supplementary Information for ``Estimates of loss function concentration in noisy parametrized quantum circuits"}

\author{Giulio Crognaletti}
\email{giulio.crognaletti@phd.units.it}
\affiliation{
Department of Physics, University of Trieste, Strada Costiera 11, 34151 Trieste, Italy}
\affiliation{Istituto Nazionale di Fisica Nucleare, Trieste Section, Via Valerio 2, 34127 Trieste, Italy}
\affiliation{European Organization for Nuclear Research (CERN), 1211 Geneva, Switzerland}

\author{Michele Grossi}
\email{michele.grossi@cern.ch}
\affiliation{European Organization for Nuclear Research (CERN), 1211 Geneva, Switzerland}

\author{Angelo Bassi}
\affiliation{
Department of Physics, University of Trieste, Strada Costiera 11, 34151 Trieste, Italy}
\affiliation{Istituto Nazionale di Fisica Nucleare, Trieste Section, Via Valerio 2, 34127 Trieste, Italy}

\maketitle
\onecolumngrid
\renewcommand{\theequation}{S\arabic{equation}}
\renewcommand{\thefigure}{S\arabic{figure}}
\renewcommand{\bibnumfmt}[1]{[S#1]}
\renewcommand{\citenumfont}[1]{S#1}

This Supplementary Information contain detailed proofs of the statements contained on the main text, as well as additional details on the numerical simulations.
We recall the general notation adopted in our work for convenience. We consider a qubit system described by the Hilbert space $\mathcal{H}$, and composed by $n$ qubits.  Furthermore, we assume the qubits are grouped into $M$ subsystems $\mathcal{H}_m$,  i.e. $\mathcal{H} = \bigotimes_m \mathcal{H}_m$, each with dimension $d_m$. Each group can be composed by any number of qubits, including only one qubit. In the latter case, $M=n$. Overall, the system has dimension $d = \prod_m d_m$, and we denote by $\BH$ the space of bounded operators acting on $\mathcal{H}$.

\tableofcontents

\section{Locality and locality transfer matrix properties}\label{app:loc_properties}

In order to show the main properties of locality vectors and locality transfer matrices (LTM), it is convenient, for each subsystem $\mathcal{H}_m$, to fix an orthonormal basis $\{P^{(m)}_j\}_{j=1}^{d_m-1}$ of $\mathcal{B}_m$ composed of traceless, Hermitian operators together with the identity operator, each normalized with respect to the Hilbert-Schmidt norm, namely
\begin{equation}
    P^{(m)}_0 = \mathds{1}/\sqrt{d_m} ,\;\;\;\; P^{(m)}_{j_m} = P^{(m)\dagger}_{j_m} \, \forall {j_m} ,\;\;\;\; \Trb{P^{(m)}_{j_m} P^{(m)}_{k_m}} = \delta_{{j_m}{k_m}}.
\end{equation}
Starting from these, we can build an orthonormal basis $\{P_j\}_j$ for the whole space by means of tensor products. Each basis element will be labelled by the multi-index $j = (j_1, ... , j_M)$, where the entries $j_m$ refer to an element of the local bases, and hence $j_m \in \{0,..., d_m^2-1\}$.  Such a basis will be dubbed a \emph{local} basis.  As an example, if each $\mathcal{H}_m=\mathbb{C}^2$ is a qubit, the normalized Pauli strings form a local basis for $\BH$. Given a local basis $\{P_j\}_j$, it is possible to group the elements in disjoint sets. In particular, for any given binary string $\kappa = \{0,1\}^M$, we can collect in the set $S_\kappa$ all basis elements acting non trivially on $\mathcal{H}_m$ if and only if $\kappa_m=1$.  For practical reasons, we introduce the indicator function $\deltat{i}{\kappa}$ for the set $S_\kappa$, defined by
\begin{equation}
    \deltat{i}{\kappa} = \prod_{m=0}^M \Tilde{\delta}_{i_m,\kappa_m}=\begin{cases} 
      1 & \text{if} \;\; \kappa_m = 0 \Leftrightarrow i_m =0 \; \forall m \\
      0 & \text{otherwise}
   \end{cases}
\end{equation}
From the definition, we can derive some simple properties of this function.
\begin{lemma}\label{lemma:indicator_func}
    The indicator function $\deltat{i}{b}$ has the following properties:
    \begin{equation}
        \sum_i \deltat{i}{\kappa} = d_{\kappa}, \;\;\;\; \sum_{\kappa} \deltat{i}{\kappa} = 1, \;\;\;\; \sum_{\kappa \in K} \deltat{i}{\kappa}\deltat{i}{\lambda} = \deltat{i}{\lambda} \sum_{\kappa \in K} \delta_{\kappa,\lambda} 
    \end{equation}
    where $d_\kappa = \prod_{m=0}^M (d_m^2-1)^{\kappa_m}$, $K \subset \{0,1\}^M$, and $\delta_{\kappa,\lambda}$ is the usual Kronecker delta.
\end{lemma}
\begin{proof}
    All the results follow directly from the definition.
\end{proof}
Using this notation, we can express the \emph{locality} $\ell_A$  of some operator $A\in\BH$ and \emph{locality transfer matrix} $T$ of a linear map $\Lambda:\BH\to\BH$ defined in the main text as 

\begin{equation}\label{eq:alternate_LTM}
    (\ell_A)_{\kappa} = \sum_{j} \Trb{P_j A}^2 \deltat{j}{\kappa} ,\;\;\;\;\text{and}\;\;\;\; T_{\kappa, \lambda} = \frac{1}{d_\lambda} \sum_{i, j} \Trb{P_i\Lambda(P_j)}^2 \deltat{i}{\kappa} \deltat{j}{\lambda}
\end{equation}

respectively. It is immediately realized that both these quantities are basis-independent.

\begin{lemma}\label{lemma:basis_independence_T}
Given a bounded operator $A\in\BH$ and a partition into subsystems, the locality vector $\ell_A$ is uniquely defined, i.e. it does not depend on the choice of local basis. Similarly, given a quantum channel $\mathcal{E}:\BH\to\BH$. the corresponding locality transfer matrix $T$ is uniquely defined.
\end{lemma}

\begin{proof}
    Let $\{P_{j}\}$ and $\{B_{i}\}$ be two local bases for a given subsystem. Then
    \begin{equation}
    \begin{split}
        (\ell_A)_{\kappa} &= \sum_{j} \Trb{P_{j}A}^2 \deltat{j}{\kappa} = \sum_{j} \Trb{\sum_{i} \Trb{B_i P_j} B_{i}A}^2 \deltat{j}{\kappa} = \sum_{j} \sum_{i, i'} \Trb{B_iP_j} \Trb{B_{i'} P_j} \Trb{B_{i}A} \Trb{B_{i'}A} \deltat{j}{\kappa}\\
        &= \sum_{i, i'} \left(\sum_{j}\Trb{B_iP_j} \Trb{B_{i'} P_j} \deltat{j}{\kappa} \right)\Trb{B_{i}A} \Trb{B_{i'}A} = \sum_{i} \Trb{B_{i}A}^2 \deltat{i}{\kappa}
    \end{split}
    \end{equation}
    where the last equality is due to
    \begin{equation}
    \begin{split}
    \sum_{j}\Trb{B_iP_j} \Trb{B_{i'} P_j} \deltat{j}{\kappa} &=  \prod_{m=1}^M \left( \sum_{j_m=1}^{d_m} \Trb{B_{i_m'} P_{j_m}}\Trb{B_{i_m} P_{j_m}}\Tilde{\delta}_{j_m,\kappa_m}\right)\\ & = \prod_{m=1}^M \left(  \Trb{B_{i_m'} B_{i_m}}\Tilde{\delta}_{i_m,\kappa_m}\right) = \delta_{i,i'} \deltat{i}{\kappa} 
    \end{split}
    \end{equation}
    which holds since both $\{P_{j_m}\}$ and $\{B_{i_m}\}$ are orthonormal bases of $\mathcal{B}_m$ by definition of local basis. A totally analogous calculation yields the same result for the locality transfer matrix $T$.
\end{proof}
Thanks to the formulation of \cref{eq:alternate_LTM}, the relation between the LTM of a map $\Lambda$ and the Hermitian adjoint $\Lambda^\dagger$ with respect to the Hilbert-Schmidt scalar product can be seen. In particular, we have
\begin{equation}\label{eq:detailed_balance}
     d_\lambda T_{\kappa, \lambda} = \sum_{i, j} \Trb{P_i\Lambda(P_j)}^2 \deltat{i}{\kappa} \deltat{j}{\lambda} = \sum_{i, j} \Trb{\Lambda^\dagger (P_i) P_j}^2 \deltat{i}{\kappa} \deltat{j}{\lambda} = d_\kappa T^\dagger_{\lambda,\kappa},
\end{equation}
which can be compactly written in matrix form as $TD = (T^\dagger D)^t$, with $D_{\kappa, \lambda} = d_\kappa \delta_{\kappa, \lambda}$. For sake of readability, here we introduce a shorthand notation for the scalar product $(\cdot,\cdot)$ in $\mathbb{R}^{2^M}$ such that $T$ and $T^\dagger$ are \emph{also} Hermitian adjoint of one another, i.e.
\begin{equation}\label{eq:scalar_prod}
    (a,b) = a^tD^{-1}b = \sum_\kappa \frac{a_\kappa b_\kappa}{d_\kappa}.
\end{equation}
This trivially follows from the chain $(a,Tb) = a^t D^{-1}Tb=a^tD^{-1}D(T^\dagger)^t D^{-1}b = (T^\dagger a)^tD^{-1}b = (T^\dagger a, b)$.

\section{Proof of {\generalforumla}}\label{app:preliminary_results}

In this section we provide a proof for the building blocks of the main results of this work. Note that, what follows hinge on the structure of the circuit $\Phi_\theta$ provided in the main text, which we recall is composed of $L$ layers of interleaved unitary and noise channels 
\begin{equation}\label{eq:circuit_channel}
    \Phi_{\theta} = \mathcal{U}_{\theta_{L+1}} \circ \mathcal{E}_{L}\circ\mathcal{U}_{\theta_{L-1}} \dots \circ  \mathcal{E}_{1}\circ\mathcal{U}_{\theta_{1}}.
\end{equation} 
In particular, we assume that $\mathcal{U}_\theta: \rho \mapsto U_\theta \rho U_\theta^\dagger$, where $U_\theta = \bigotimes_m U^{(m)}_{\theta_m}$ is a \emph{local 2-design} for the system. This statement is more precisely captured here. 
\begin{definition}(Local design)
Given a unitary ensemble $\{U_\theta\}_{\theta\in \Theta}$ with a given probability distribution over the parameter space $\Theta$, we say it forms a local $t$-design for the system if each element is factorized with respect to the partition, i.e. $U_\theta = \bigotimes_m U^{(m)}_{\theta_m}$ each acting solely on $\mathcal{H}_m$, and additionally
\begin{equation}
    \int_\Theta d\theta U^{(m) \otimes t }_\theta \otimes U^{(m)* \otimes t}_\theta = \int_{V \in U(d_m)} d\mu(V) V^{(m) \otimes t}\otimes V^{(m)* \otimes t}
\end{equation}
where the second integral is performed with respect to the Haar measure.
\end{definition}
With this notation in place, we are ready to start. The first Lemma provides a formula for the expectation value of a circuit in the aforementioned class, showing that they form global 1-designs.
\begin{lemma}[Global 1-design]\label{lemma:expval}
Let $A, B \in \BH$, and let $\{U_\theta\}_{\theta \in \Theta}$ be a unitary ensemble forming a local 1-design. Then
\begin{equation}
    \mathbb{E}_\theta \left\{ \Trb{ AU_\theta^{\dagger} B U_\theta}\right\} = \frac{\Trb{A}\Trb{B}}{d}
\end{equation}
\end{lemma}

\begin{proof}
Let $\{P_j\}_j$ be a local basis for the system, and consider the respective decompositions of $A$ and $B$, namely $A=\sum_{i} a_i P_i $ and  $B=\sum_{j} b_j P_j$. Note that, each component $a_i$ is defined as ${a_i = \Trb{P_iA}}$, and consequently $a_0 = \frac{1}{\sqrt{d}}\Trb{A}$ (respectively for $B$). Then we have the following chain of equalities:
\begin{equation}
\begin{split}
    \int_{\Theta} \prod_{m=1}^{M} d\theta \Trb{ AU_\theta^{\dagger} BU_\theta} &= \sum_{i,j} a_i b_j \int_\Theta \prod_{m=1}^{M} d\theta \Trb{ P_i U_\theta^{\dagger} P_j U_\theta} \\
    &= \sum_{i,j} a_i b_j \prod_{m=1}^{M} \int_{U\in U(d_m)} d\mu(U) \Trb[m]{ P^{(m)}_{i_m} U^{\dagger} P^{(m)}_{j_m} U}\\
    &= \sum_{i,j} a_i b_j \prod_{m=1}^{M} \frac{1}{d_m}\Trb[m]{P^{(m)}_{i_m}} \Trb[m]{P^{(m)}_{j_m}} \\
    &= \sum_{i,j} a_i b_j \prod_{m=1}^{M} \delta_{0,i_m}\delta_{0,j_m} = a_0 b_0
\end{split}
\end{equation}
where $\Tr_m$ denotes the partial trace over the $m$-th subsystem.
\end{proof}

Regarding the second moment, it can be computed using the following Lemma.

\begin{lemma}\label{lemma:weingarten}
Let $\{P_j\}$ be a local basis and let $\{U_\theta\}_{\theta \in \Theta}$ be a unitary ensemble forming a local 2-design. Then
\begin{equation}
    \mathbb{E}_\theta \left\{  \Trb{ P_i U_\theta^{\dagger} P_jU_\theta} \Trb{ P_k U_\theta^{\dagger} P_lU_\theta} \right\} = \delta_{ik}\delta_{jl} \prod_{m=1}^M\left(\delta_{0i_m}\delta_{0j_m} + (1-\delta_{0i_m})(1-\delta_{0j_m})\frac{1}{d^2_m-1}\right)
\end{equation}
\end{lemma}

\begin{proof}
To prove this, we make use of the following result of Weingarten Calculus

\begin{equation}
\label{app:eq:weingarten}
\begin{split}
\int_{U \in U(d)} d\mu(U) \Trb{AU^\dagger BU} \Trb{CU^\dagger DU} &= \frac{1}{d^2-1} \left(\Trb{A}\Trb{B}\Trb{C}\Trb{D}+\Trb{AC}\Trb{BD}\right) \\
&+ \frac{1}{d(d^2-1)}\left(\Trb{AC}\Trb{C}\Trb{D}+\Trb{A}\Trb{C}\Trb{BD}\right)
\end{split}
\end{equation}

Based on \cref{app:eq:weingarten}, the result follows from direct integration:

\begin{equation}
\begin{split}
\prod_{m=1}^{M} &\int_{U\in U(d_m)} d \mu(U) \Trb{ P^{(m)}_{i_m} U^\dagger P^{(m)}_{j_m}U } \Trb{ P^{(m)}_{k_m} U^\dagger P^{(m)}_{l_m}U }\\
= \prod_{m=1}^M &\frac{1}{d_m^2-1} \left(\Trb{P^{(m)}_{i_m}}\Trb{P^{(m)}_{j_m}}\Trb{P^{(m)}_{k_m}}\Trb{P^{(m)}_{l_m}}+\Trb{P^{(m)}_{i_m}P^{(m)}_{k_m}}\Trb{P^{(m)}_{j_m}P^{(m)}_{l_m}}\right) \\
- &\frac{1}{d_m(d_m^2-1)}\left(\Trb{P^{(m)}_{i_m}P^{(m)}_{k_m}}\Trb{P^{(m)}_{j_m}}\Trb{P^{(m)}_{l_m}}+\Trb{P^{(m)}_{i_m}}\Trb{P^{(m)}_{k_m}}\Trb{P^{(m)}_{j_m}P^{(m)}_{l_m}}\right)\\
=\prod_{m=1}^M& \frac{1}{d_m^2-1} (d_m^2\, \delta_{0i_m}\delta_{0j_m}\delta_{0k_m}\delta_{0l_m} + \delta_{i_m k_m}\delta_{j_m l_m} - \delta_{0i_m}\delta_{0k_m}\delta_{j_m l_m} - \delta_{i_m k_m}\delta_{0j_m}\delta_{0l_m})\\
=\prod_{m=1}^M& \delta_{0i_m}\delta_{0j_m}\delta_{0k_m}\delta_{0l_m} + \frac{1}{d_m^2-1} (\delta_{{i_m} {k_m}} - \delta_{0i_m}\delta_{0k_m}) (\delta_{{j_m} {l_m}} - \delta_{0j_m}\delta_{0l_m})\\
=\,&\delta_{ik}\delta_{jl} \prod_{m=1}^M \left(\delta_{0i_m}\delta_{0j_m} + (1-\delta_{0i_m})(1-\delta_{0j_m})\frac{1}{d^2_m-1}\right)
\end{split}
\end{equation}
\end{proof}

In particular, \cref{lemma:weingarten} can be used to compute the variance of loss functions computed as expectation values $\Loss = \Trb{\mathcal{U}_\theta(A)B}$, i.e. in the absence of intermediate channels $\mathcal{E}_l$.

\begin{proposition}\label{prop:warm_up}
Let $A, B \in \BH$, and let $\{U_\theta\}_{\theta \in \Theta}$ be a unitary ensemble forming a local 2-design. Then

\begin{equation}
\mathbb{E}_\theta \left\{\Trb{ AU_\theta^{\dagger} B U_\theta}^2 \right\} = \left(\ell_A, \ell_B\right),
\end{equation}

where $(\cdot,\cdot)$ is the scalar product defined in \cref{eq:scalar_prod}.
\end{proposition}

\begin{proof}
Let $\{P_j\}$ be a local basis for the system, and consider the respective decompositions of $A$ and $B$. By \cref{lemma:weingarten} we have

\begin{equation}\label{eq:lemma_weingarten0}
\mathbb{E}_\theta \left\{\Trb{ AU_\theta^\dagger B U_\theta}^2 \right\} = \sum_{i,j} a_i^2b_j^2 \prod_{m=1}^M \left(\delta_{0i_m}\delta_{0j_m} + (1-\delta_{0i_m})(1-\delta_{0j_m})\frac{1}{d^2_m-1}\right)
\end{equation}

In the following, it will be convenient to recast the product on the right-hand side of \cref{eq:lemma_weingarten0} into the equivalent formulation

\begin{equation}
    \begin{split}
    \prod_{m=1}^M\left(\delta_{0i_m}\delta_{0j_m} + (1-\delta_{0i_m})(1-\delta_{0i_m})\frac{1}{d^2_m-1}\right) = \sum_{\kappa \in \{0,1\}^{M}} \frac{1}{d_\kappa} \prod_{m=1}^M (\delta_{0i_m}\delta_{0j_m})^{1-\kappa_m}(1-\delta_{0i_m})^{\kappa_m}(1-\delta_{0j_m})^{\kappa_m}
    \end{split}
\end{equation}

where the binary vectors $\kappa\in\{0,1\}^M$ identify all possible sets $S_\kappa$ introduced in \cref{app:loc_properties} and $d_\kappa = \prod_{m=1}^M({d_m^2-1})^{\kappa_m}$. Putting it back into \cref{eq:lemma_weingarten0} we get

\begin{equation}
\begin{split}
     &\sum_{\kappa \in \{0,1\}^{M}} \frac{1}{d_\kappa} \sum_{i,j} a_i^2 b_j^2 \prod_{m=1}^M (\delta_{0i_m}\delta_{0j_m})^{1-\kappa_m}(1-\delta_{0i_m})^{\kappa_m}(1-\delta_{0j_m})^{\kappa_m}\\
     = &\sum_{\kappa \in \{0,1\}^{M}} \frac{1}{d_\kappa}\left(\sum_i a_i^2 \prod_{m=1}^M \delta_{0i_m}^{1-\kappa_m}(1-\delta_{0i_m})^{\kappa_m}\right)\left(\sum_j b_j^2 \prod_{m=1}^M \delta_{0j_m}^{1-\kappa_m}(1-\delta_{0j_m})^{\kappa_m}\right)\\
     = &\sum_{\kappa \in \{0,1\}^{M}} \frac{1}{d_\kappa}\left(\sum_i a_i^2 \Tilde{\delta}_{i, \kappa}\right)\left(\sum_j b_j^2 \Tilde{\delta}_{j, \kappa}\right)\\
     = &\sum_{\kappa \in \{0,1\}^{M}} \frac{(\ell_A)_\kappa (\ell_B)_\kappa}{d_\kappa} = (\ell_A, \ell_B)
\end{split}
\end{equation}
from which the Proposition follows.

\end{proof}

This can be extended to the general case introducing the action of the intermediate channels $\mathcal{E}_l$, and in particular we get the following Proposition.

\begin{proposition}\label{prop:general_formula}
Let $A, B \in \BH$ and $\Lambda:\BH \to \BH$ be a linear map. Furthermore, let $\{U_{\theta_1}\}_{\theta_1 \in \Theta}$ and $\{V_{\theta_2}\}_{\theta_2 \in \Theta}$ be independent, unitary ensembles each forming a local 2-design. Then

\begin{equation}
    \mathbb{E}_{\theta_1,\theta_2} \left\{ \Trb{AU^\dagger_{\theta_1}\Lambda(V_{\theta_2}BV^\dagger_{\theta_2})U_{\theta_1}}^2 \right\} = (\ell_A,T\ell_B)
\end{equation}

where $T$ is the locality transfer matrix associated to $\Lambda$.

\end{proposition}

\begin{proof}
Let $\Tilde{B}_{\theta_2} = \Lambda(V_{\theta_2} B V_{\theta_2})$. By \cref{prop:general_formula}, we have

\begin{equation}
    \mathbb{E}_{\theta_2} \mathbb{E}_{\theta_1}\left\{ \Trb{ A U^\dagger_{\theta_1} B_{\theta_2}U_{\theta_1}}^2 \right\} = \mathbb{E}_{\theta_2} \left\{ ( \ell_A,\ell_{\Tilde{B}_{\theta_2}}) \right\} = \sum_{\kappa \in \{0,1\}^{M}} \frac{1}{d_\kappa} (\ell_{A})_\kappa \,\mathbb{E}_{\theta_2} \left\{ (\ell_{\Tilde{B}_{\theta_2}} )_\kappa \right\}
\end{equation}

Expanding the definition on the last term with respect to the local basis $\{P_j\}_j$, ad applying again \cref{prop:general_formula} we get

\begin{equation}
\begin{split}
\mathbb{E}_{\theta_2} \left\{ (\ell_{\Tilde{B}_{\theta_2}})_\kappa \right\} &= \sum_i \mathbb{E}_{\theta_2} \left\{\Trb{P_{i}\Tilde{B}_{\theta_2}}^2 \right\} \Tilde{\delta}_{i, \kappa} = \sum_i \mathbb{E}_{\theta_2} \left\{\Trb{\Lambda^\dagger(P_{i})V^\dagger_{\theta_2} BV_{\theta_2}}^2 \right\} \Tilde{\delta}_{i, \kappa}\\
&= \sum_{\lambda \in \{0,1\}^{M}} \frac{1}{d_\lambda} \sum_{i, j,k} \Trb{P_i\Lambda(P_j)}^2 \Trb{P_kB}^2 \Tilde{\delta}_{j, \lambda} \Tilde{\delta}_{k, \lambda} \Tilde{\delta}_{i, \kappa}\\
&= \sum_{\lambda \in \{0,1\}^{M}} \left( \frac{1}{d_\lambda} \sum_{i, j} \Trb{P_i\Lambda(P_j)}^2 \Tilde{\delta}_{i, \kappa} \Tilde{\delta}_{j, \lambda}\right) \left(\sum_k\Trb{P_kA}^2 \Tilde{\delta}_{k, \lambda} \right)\\
&= \sum_{\lambda \in \{0,1\}^{M}} T_{\kappa,\lambda} (\ell_B)_\lambda = (T \ell_B)_\kappa
\end{split}
\end{equation}

where $\Lambda^\dagger$ is the Hermitian adjoint of $\Lambda$ with respect to the Hilbert-Schmidt scalar product.
\end{proof}

Iterated application of \cref{prop:general_formula} for an initial state $\rho$, an observable $H$, and a general intermediate quantum channel $\mathcal{E}$, yields {\generalforumla} of the main text.

\section{Proof of {\deepcirc}}\label{appendix:proof_main}

The proof of {\deepcirc} is based on the characterization of the general LTM for the Hermitian adjoint $\mathcal{E}^\dagger$ of arbitrary quantum channel. To do so, several aspects of non-negative matrix theory, as well as the contractivity properties of $\mathcal{E}^\dagger$ are key. For sake of clarity and completeness we recall them in the following. We start with concepts and definitions involving non-negative matrices, and then we recall some well known facts and definitions about operator norms and contractivity.

\subsection{Preliminaries}

In this section, we start with preliminary concepts and definitions involving non-negative matrices, and then we recall some well known facts and definitions about operator norms and quantum channel contractivity.

\subsubsection{Elements of non-negative matrix theory}

In this section, we briefly recap on the main results on non-negative matrix theory useful in the proof of {\deepcirc} of the main text. For a complete discussion and proofs of the cited results, we refer the interested reader to Refs.~\cite{Seneta06, Meyer00}. Let's start by the definition of non-negative matrix.
\begin{definition}[Non-negative matrix]
    A $n\times n$ matrix $T$ is said to be non-negative if each entry $(T)_{ij}\geq 0$.
\end{definition}
The general behaviour of non negative matrices can vary greatly, but there is a class of matrices, called \emph{irreducible}, which have very informative spectral properties.
\begin{definition}[Irreducible matrix] A $n \times n$ non-negative matrix $T$ is said to be \emph{irreducible} if for two arbitrary indices $i,j =1,...n$, there exist $l=l(i,j)\in \mathbb{N}$ such that $(T^l)_{ij}>0$ . Moreover, we will say that $T$ has period $p$, where $p$ is the greatest common divisor of all $l(i,i)$ that satisfy $(T^l)_{ii}>0$ $\forall i$.
\end{definition}
Equivalently, if we introduce the graph $\mathcal{G}_T$ whose adjacency matrix is $T$, then it can be shown that $T$ is irreducible if and only if $\mathcal{G}_T$ is strongly connected, and that the period $p$ reduces to the great common divisor of the lengths of all closed directed paths in $\mathcal{G}_T$ \cite{Seneta06}.  Furthermore, it will be useful in the following to distinguish two classes of irreducible matrices, namely \emph{cyclic} (or \emph{periodic}) and \emph{primitive} (or \emph{aperiodic}), which are characterized as having period $p>1$ and $p=1$ respectively. 
\\ \\
One of the main results involving irreducible matrices is the celebrated \emph{Perron-Frobenius} theorem, which characterizes the spectral properties of this class. We recall it here for convenience.

\begin{theorem}[Perron-Frobenius]
Let $T$ be a $n\times n$ non-negative, irreducible matrix. Then there exists an eigenvalue $r$ of $T$, with corresponding right and left eigenvectors $v$, $w$ such that:
    \begin{enumerate}[(a)]
        \item $r\in \mathbb{R}, r>0$ and is a simple root of the characteristic polynomial,
        \item both $w$,$v$ are the only eigenvectors that have strictly positive components, i.e. $v_i,w_i>0 \, \forall i=1,...n$,
        \item $v$ and $w$ are unique up to a scalar multiple, and hence can be taken to be normalized, i.e. $v^tw=1$,
        \item $r \geq |\lambda|$, for all eigenvalue $\lambda$ of $T$,
    \end{enumerate}
where $r$ is called the Perron-Frobenius eigenvalue and $P = wv^t$ the Perron projector. Moreover, if $T$ is also aperiodic, then we have the more restrictive
    \begin{enumerate}[(a'), start=4]
        \item $r>|\lambda|$, for all eigenvalue $\lambda \neq r$ of $T$,
    \end{enumerate}
\end{theorem}
Another important result is the so called subinvariance theorem, which is a useful tool to bound the value of $r$ for a given irreducible matrix.
\begin{theorem}[Subinvariance Theorem]
    Let $T$ be a $n\times n$ non-negative, irreducible matrix, $s>0$ and $y$ be a $n$-dimensional row vector such that each component $y_i\geq0$ and satisfying
    \begin{equation}
        y^tT \leq sy^t
    \end{equation}
    component-wise. Then $y_i>0 \forall i$, and $s\geq r$. Moreover, equality holds if and only if $s=r$.
\end{theorem}

Finally, the last result allows us to exploit the knowledge of the dominant eigenvalue to determine the asymptotic properties of $T^L$.

\begin{theorem}[Asymptotic behaviour of irreducible matrices] Let $T$ be a $n\times n$ non-negative, irreducible matrix. Then the Cesàro average of $T$ converges, and we have 
\begin{equation}
    \lim_{L\to\infty} \frac{1}{L}\sum_{l=1}^{L} T^l/r^l = P
\end{equation}
Moreover, if $T$ is also aperiodic, then limit of $T^L/r^L$ converges, and we have
\begin{equation}
    \lim_{L\to\infty} T^L/r^L = P
\end{equation}
where $r$ and $P$ are the Perron-Frobenius eigenvalue and Perron projector respectively.
\end{theorem}

Despite being less structured, it is a well known fact that general non-negative matrices can be cast to a canonical block upper triangular form, where all blocks in the diagonal are \emph{irreducible} simply by means of a permutation matrix, i.e. by a relabelling of the basis elements. In particular, concerning the diagonal, irreducible blocks appearing in such decomposition, we will use the term \emph{essential} when referring to the blocks such that all $(T){i,j} = 0$ for all columns apart from the block itself, and \emph{inessential} otherwise. In terms of the graph $\mathcal{G}_T$, this distinction is readily understood. As discussed above, irreducible blocks correspond to strongly connected components, and consequently essential blocks are strongly connected components which do not have edges connecting vertices in it to vertices pertaining to other components. More simply, we can describe essential components as those whose edges ``do not lead outside". In what follows, we name by $T_z$ all essential, irreducible components, and we group into a single block $Q$ all inessential components. The blocks $R_z$, appearing on top of the block $Q$, represent the collection of edges coming from inessential components and leading to essential ones. A graphical summary is depicted in \cref{eq:T_canonical_form}.

\begin{equation}\label{eq:T_canonical_form}
        T = 
    \begin{pmatrix}
    \,\tikz{\node[draw]{$T_0$}} & & & & \tikz{\node[draw, minimum width=1.2cm]{$R_0$}}\,\\
    & \tikz{\node[draw, minimum width=0.8cm, minimum height=0.8cm] {$T_1$}}& & & \tikz{\node[draw, minimum width=1.2cm, minimum height=0.8cm]{$R_1$}}\\
    & & \ddots & & \vdots\\
    & & & \tikz{\node[draw, minimum width=0.8cm, minimum height=0.8cm] {$T_z$}} & \tikz{\node[draw, minimum width=1.2cm, minimum height=0.8cm]{$R_z$}}\\
    & & & & \tikz{\node[draw, minimum width=1.2cm, minimum height=1.2cm] {$Q$}}\\
    \end{pmatrix}
\end{equation}
This in particular allows us to apply the results of this section also to more generic matrices, such as LTMs, which in general are not irreducible. 

\subsubsection{Useful results on quantum channel}

In this section, we briefly introduce some relevant properties of completely positive (CP) maps. These will be especially useful in the trace preserving case (CPTP), i.e. quantum channels, and the unital case (CPU), i.e. the corresponding adjoint action with respect to the Hilbert-Schmidt scalar product. In what follows, we will denote by $\mathbb{M}_n$ the space of $n\times n$ matrices, which we can endow with a norm as follows.
\begin{definition}(Schatten norm)\label{def:schatten-norms} Given $A \in \mathbb{M}_n$ and $p\in [1,\infty]$, we define the Schatten $p$-norm  $\|\cdot\|_p :\mathbb{M}_n \to \mathbb{R}$ as
\begin{equation}
    \|A\|_p = \Trb{\left(\sqrt{A^\dagger A}\right)^p}^{1/p}.
\end{equation}
\end{definition}
A crucial property of Schatten norms is H\"older inequality, namely $|\Trb{A^\dagger B}| \leq \|A\|_p\|B\|_q$, $\forall A, B \in \mathbb{M}_n, \forall p,q$ s.t. $1/p+1/q = 1$. As a special case, for $p=2$ we get back the Hilbert-Schmidt norm, and H\"older inequality reduces to the Cauchy-Schwarz inequality.

As a direct consequence of \cref{def:schatten-norms}, an induced norm on linear operators acting on $\mathbb{M}_n$ can be defined.
\begin{definition}(Induced norm)\label{def:induced_norm} Given a linear operator $\Lambda:\mathbb{M}_n\to\mathbb{M}_n$ we define the induced $p\to q$ norm as
\begin{equation}
    \|\Lambda\|_{p\to q} := \sup_{A : \|A\|_p=1} \|\Lambda(A)\|_q.
\end{equation}
\end{definition}
Depending on the properties of such induced norms, we might refer to the map $\Lambda$ as contractive or strictly contractive. In particular, we will use the following definitions.
\begin{definition}(Contractivity of linear maps)\label{def:contractivity} A linear operator $\Lambda:\mathbb{M}_n\to\mathbb{M}_n$ is said to be \emph{contractive} with respect to the $p$ norm if $\|\Lambda\|_{p\to p} \leq 1$, and similarly to be \emph{strictly contractive} if $\|\Lambda\|_{p\to p} < 1$.
\end{definition}
Linear maps that are also CPTP are known to always be contractive with respect to the $1$-norm \cite{Raginsky02,Holevo01}, but are in general not contractive for other $p$-norms. This property, together with H\"older's inequality, allow putting an upper bound on the value of the variance of an arbitrary, layered quantum circuit. 
\begin{lemma}[H\"older's inequality for variances]\label{lemma:holder_for_variance}
   Let $A,B \in \mathbb{M}_n$ be hermitian matrices, and $\Phi_\theta:\mathbb{M}_n \to \mathbb{M}_n$ be a parameterized quantum channel. In particular, consider the $L$ layered map  of type \cref{eq:circuit_channel}. Then
   \begin{equation}
    \left| \Trb{\Phi_\theta^L(A) B}\right|\leq  \|\mathcal{E}\|^{L}_{p\to p}\|\|A\|_p\|B\|_q \;\; \forall \theta \in \Theta
   \end{equation}
   with $1/p+1/q = 1$. As a special case, if $A=\rho$ and $B=H$ are a density operator and an observable respectively, then the variance can be upper bounded by  $\Varr^L \leq \|H\|_\infty^2 \forall L$.
\end{lemma}
\begin{proof}
The result is the direct consequence of H\"older's inequality and contractivity of quantum maps. In particular, we have the following chain of inequalities:
\begin{equation}
\begin{split}
    \left| \Trb{\Phi^L_{\theta}(A)B} \right| &\leq \|\Phi^L_{\theta}(A)\|_p\|B\|_q = \|U_{\theta_L} \mathcal{E}\left(\Phi^{L-1}_{\theta}(A) \right) U^\dagger_{\theta_L}\|_p \|B\|_q\\
    &= \|\mathcal{E}\left(\Phi^{L-1}_{\theta}(A) \right)\|_p \|B\|_q \leq \|\mathcal{E}\|_{p\to p} \|\Phi^{L-1}_{\theta}(A)\|_p \|B\|_q  \;\; \forall \theta \in \Theta
\end{split}
\end{equation}
By iterative application of this procedure, one can get the result. The final remark holds due to the contractivity of quantum channels, choosing $p=1$ and $q=\infty$, and noting that $\|\rho\|_1 \leq 1$ for all density matrices. 
\end{proof}
Specific classes of quantum channels can be shown to be contractive with respect to a wider variety of norms. In particular, we have that, for $p\geq 2$, \emph{unitality} of the map is a necessary and sufficient condition for contractivity. (see Theorem II.4 in \cite{PerezGarcia06}). Within unital channels, unitary transformation $\mathcal{U}$ always saturate the bound, as they have the additional property of being norm-preserving, i.e. $\|\mathcal{U}(A)\|_p = \|U^\dagger AU\|_p = \|A\|_p$.
Finally, if we reduce the action of the channel to the subset $\mathbb{H}_0 \subset \mathbb{M}_n$ of Hermitian, traceless matrices, then the unitality property is no longer a necessary condition for contractivity. Indeed, any single qubit channel $\mathcal{N}$ is contractive in this setting, i.e. $\|\mathcal{N}|_{\mathbb{H}_0}\|_{p\to p}\leq 1 \forall p$ if $n=2$. For single qubit channels we can even be more explicit, as showed in the following Lemma.
\begin{lemma}[Single qubit channel normal form]\label{lemma:normal_form} Let $\{P_i\}$ be the normalized (with respect to the Schatten $2$-norm) Pauli basis of $\mathbb{M}_2$, and  $\mathcal{N}$ be a single qubit channel. Then there exist unitary matrices $U, V$ such that $\mathcal{N}'(\cdot) = U^\dagger\mathcal{N}(V^\dagger \cdot V)U$ satisfies
\begin{equation}\label{eq:normal_form}
    \mathcal{N}'(P_0) = P_0 + \sum_{i>0} t_i P_i, \quad \mathcal{N}'(P_i) = \lambda_i P_i \, \forall i>0
\end{equation}
where $\sum_{i>0} (t_i + \lambda_i \alpha_i)^2 \leq 1$, $\forall \alpha_i \in \mathbb{R}$ s.t. $\sum_{i>0} \alpha_i^2 \leq 1$.
\end{lemma}
\begin{proof}
    It has been shown in \cite{King00,Ruskai02} that any single qubit quantum channel can be cast in the canonical form of \cref{eq:normal_form} by means of a change of basis. Furthermore, the constraint on the parameters follow, analogously to \cite{Mele24}, by considering that any single qubit state must have bounded purity, namely $\Trb{\rho^2} \leq 1$, and since $\mathcal{N}'$ is a channel, the same must hold for $\mathcal{N}'(\rho)$. Since any qubit state can be decomposed in terms of $\{P_i\}_i$ as $\rho = 1/\sqrt{2}P_0 + 1/\sqrt{2} \sum_{i>0} \alpha_i P_i$, $\alpha_i \in \mathbb{R}$, we get
    \begin{equation}
        \Trb{\rho^2} = \frac{1+\alpha_i^2}{2} \leq 1, \quad \Trb{\mathcal{N}'(\rho)^2} = \frac{1+\sum_{i>0} (t_i + \lambda_i \alpha_i)^2}{2} \leq 1
    \end{equation}
    respectively, which concludes the proof.
\end{proof}
Considering instead CPU maps, the most relevant result is Kadison-Schwarz inequality, which for our purposes, can be stated as follows.
\begin{theorem}[Kadison-Schwarz inequality \cite{Kadison52}]\label{thm:kadison_schwarz}
    Let $A, B \in \mathbb{H}_0$, and $\Lambda:\mathbb{M}_n\to\mathbb{M}_n$ be a CPU map. Then 
    \begin{equation}
        \Lambda(A)\Lambda(B) \leq \Lambda(AB). 
    \end{equation}
\end{theorem}
All these properties will be useful to characterize the spectral properties of interest of the LTM of quantum channels.

\subsection{Further characterizations of LTMs}

We now study the structure of the LTM of a general CPU map. Thanks to this analysis, we will be able to compute the limiting value $\Varr^\infty$ by describing the quantum circuit in the Heisenberg picture. Denoting by $T$ the resulting LTM, we start by computing the general form of integer powers $T^L$ of $T$.

\begin{lemma} [Limiting form of $T$]\label{lemma:T_lim}
    Let $\Lambda:\BH\to\BH$ be a CPU map and $T$ be the corresponding LTM. Then $T$ and $T^L$ take the form

\begin{align}\label{eq:canaonical_form_appendix}
    T = 
    \begin{pmatrix}
    \,\tikz{\node[draw]{$T_0$}} & & & & \tikz{\node[draw, minimum width=1.2cm]{$R_0$}}\,\\
    & \tikz{\node[draw, minimum width=0.8cm, minimum height=0.8cm] {$T_1$}}& & & \tikz{\node[draw, minimum width=1.2cm, minimum height=0.8cm]{$R_1$}}\\
    & & \ddots & & \vdots\\
    & & & \tikz{\node[draw, minimum width=0.8cm, minimum height=0.8cm] {$T_z$}} & \tikz{\node[draw, minimum width=1.2cm, minimum height=0.8cm]{$R_z$}}\\
    & & & & \tikz{\node[draw, minimum width=1.2cm, minimum height=1.2cm] {$Q$}}\\
    \end{pmatrix},
    && T^L = \begin{pmatrix}
    \,\tikz{\node[draw]{$T_0^L$}} & & & & \tikz{\node[draw, minimum width=1.2cm]{$A_0^{(L)}$}}\,\\
    & \tikz{\node[draw, minimum width=0.8cm, minimum height=0.8cm] {$T_1^L$}}& & & \tikz{\node[draw, minimum width=1.2cm, minimum height=0.8cm]{$A_1^{(L)}$}}\\
    & & \ddots & & \vdots\\
    & & & \tikz{\node[draw, minimum width=0.8cm, minimum height=0.8cm] {$T_z^L$}} & \tikz{\node[draw, minimum width=1.2cm, minimum height=0.8cm]{$A_z^{(L)}$}}\\
    & & & & \tikz{\node[draw, minimum width=1.2cm, minimum height=1.2cm] {$Q^L$}}\\
    \end{pmatrix}
\end{align}
up to a basis state index permutation, where each $T_z$ is an irreducible matrix, and $A^{(L)}_z = \sum_{l=0}^{L-1}T_z^lR_zQ^{L-1-l}$.
\end{lemma}
\begin{proof}
    We start by putting $T$ into the canonical form of \cref{eq:T_canonical_form}. In this form, the powers of the diagonal blocks $T_z^L$ are trivially the diagonal blocks of $T^L$. Instead, the result about $A^{(L)}$ follows by induction.
    In fact, both the base case and the inductive one follow from matrix multiplication rules, of $T\cdot T$ and $T^{L-1}\cdot T$ respectively. In particular we have
    \begin{equation}
    \begin{split}
        &A_z^{(2)} = T_zR_z + R_zQ\\
        A_z^{(L)} = T_z^{L-1} R_z + A^{(L-1)}Q = &T_z^{L-1} R_z + \sum_{l=0}^{L-2}T_z^lR_zQ^{L-1-l} = \sum_{l=0}^{L-1}T_z^lR_zQ^{L-1-l}\\
    \end{split} 
    \end{equation}

which gives the proposition.
\end{proof}

As already shown in \cref{lemma:holder_for_variance}, the value of the variance is upper bounded by $\|H\|_\infty^{2}$. Since by \cref{prop:general_formula}, this quantity is linked to $(\ell_\rho, T^L \ell_H)$, it is expected that the spectral radius $\rho(T)$ of $T$ is upper bounded by $1$. More specifically, we can prove the following Proposition. 

\begin{proposition}[Spectral radius of T]\label{prop:spectral_radius_bound}
Let $\Lambda:\BH\to\BH$ be a CPU map and $T$ be the corresponding LTM. Then $T$ is contractive in the sense of the spectral radius, i.e. $\rho(T) \leq 1$. Moreover, the component $Q$ is strictly contractive, namely $\rho(Q) < 1$.
\end{proposition}

\begin{proof}
The statement follows as a consequence of the Kadison-Schwarz inequality and the Subinvariance theorem. 

First note that, for a generic $\Lambda$, the structure of $Q$ is not as well-behaved as $T_z$, as $Q$ is not necessarily \emph{irreducible}. However, as any non-negative matrix, also $Q$ can be cast in canonical block upper triangular form by means of a basis permutation, where each diagonal block is irreducible.

\begin{equation}Q = 
\begin{pmatrix}
\,\tikz{\node[draw, minimum width=0.8cm, minimum height=0.8cm]{$Q_1$}} & \tikz{\node[draw, minimum width=0.8cm, minimum height=0.8cm]{*}} & \cdots &  \tikz{\node[draw, minimum width=0.8cm, minimum height=0.8cm]{*}}\,\\
& \tikz{\node[draw, minimum width=0.8cm, minimum height=0.8cm] {$Q_2$}}& & \tikz{\node[draw, minimum width=0.8cm, minimum height=0.8cm]{*}}\\
& & \ddots & \vdots\\
& & & \tikz{\node[draw, minimum width=0.8cm, minimum height=0.8cm] {$Q_k$}}\\
\end{pmatrix}
\end{equation}

With this in mind, we can study the spectral radius of $T$ and $Q$ in terms of the spectral radii of each block $T_z$ and $Q_k$, i.e. the corresponding Perron eigenvalues, since $\rho(T) = \max\{\max_z r_z, \max_k r_{Q_k}\}$ and $\rho(Q) = \max_k r_{Q_k}$.

By \cref{lemma:basis_independence_T}, we are free to choose the basis used to express the matrix $T$. In particular, we choose the normalized Pauli basis $\{P_i\}$, which, besides being a local basis for $\BH$, is also unitary up to normalization, namely $P_i^2=\mathds{1}/d \, \forall i$. Exploiting \cref{eq:alternate_LTM}, we can compute the column sum of $T$ as

\begin{equation}\label{eq:substochasticity}
\begin{split}
    &\sum_\kappa T_{\kappa, \lambda} = \sum_\kappa  \frac{1}{d_\lambda} \sum_{i,j} \Trb{P_i \Lambda(P_j)}^2 \deltat{i}{\kappa} \deltat{j}{\lambda} = \frac{1}{d_\lambda} \sum_{i,j} \Trb{P_i \Lambda(P_j)}^2 \deltat{j}{\lambda} \left(\sum_\kappa \deltat{i}{\kappa} \right)\\
    &\frac{1}{d_\lambda} \sum_{i,j} \Trb{P_i \Lambda(P_j)}^2 \deltat{j}{\lambda} =  \frac{1}{d_\lambda} \sum_{j} \Trb{ \Lambda(P_j)^2} \deltat{j}{\lambda} \leq \frac{1}{d_\lambda} \sum_{j} \Trb{ \Lambda(P_j^2)} \deltat{j}{\lambda} = \frac{1}{d_\lambda} \sum_{j} \deltat{j}{\lambda} = 1
\end{split}
\end{equation}

where the third and last equality follow from \cref{lemma:indicator_func}, the inequality is Kadison-Schwarz and the second to last equality is the unitary property of the basis. If $\Lambda$ is unitary, the inequality is saturated, and $T$ becomes a \emph{stochastic} matrix. In general, \cref{eq:substochasticity} show \emph{sub-stochasticity} of T. Indeed, this condition can be recast in vector form as $v^tT \leq v^t$, where $v_\kappa = 1 \forall \kappa$. In particular, this holds for all irreducible blocks in the diagonal, which by the Subinvariance theorem implies $r_z \leq 1$ and $r_{Q_k} \leq 1$, giving $\rho(T) \leq 1$. Focusing on $Q$, we observe that, by definition, each irreducible block $Q_k$ is \emph{inessential}, i.e. is connected to some other block. In terms of the matrix $T$, this means that, considering the columns involving $Q_k$, there is always an index $\lambda$ in the support of $Q_k$ such that
\begin{equation}
    \sum_\kappa (T-Q_k)_{\kappa, \lambda} > 0,
\end{equation}

which implies that $\exists \lambda$ s.t. $\sum_\kappa (Q_k)_{\kappa, \lambda} < 1$. Written in matrix form, this reads $v^t Q_k \leq v^t$, $v^t Q_k \neq v^t$, which by the Subinvariance theorem, implies $r_{Q_k} \neq 1$. Putting everything together, one gets $\rho(Q) < 1$. 

\end{proof}

When analysing the single irreducible components $T_z$, we can be more specific, and find an equivalence between the value of the column-sum of the block and the value of the corresponding spectral radius. This is especially useful in the computation of the dominant eigenvectors, which is explicitly stated in the following Corollary.

\begin{corollary}\label{cor:column_sum}
    Let T be a LTM and $T_z$ be an irreducible block, then $\rho(T_z) = 1 \Leftrightarrow \sum_\kappa (T_z)_{\kappa, \lambda} = 1$, or equivalently $v_z^t T_z = v_z^t$, where $(v_z)_\kappa = 1 \forall \kappa$ is the left eigenvector of the dominant eigenvalue.
\end{corollary}

\begin{proof}
    The result follows from the same proof strategy as above, and is a direct consequence of the Subinvariance theorem.
\end{proof}

Intuitively, the blocks which are out of such hypothesis won't contribute to the large $L$ limit, and indeed the contribution of the $Q$, $T_z$ and $A_z^{(L)}$ is bounded to decay exponentially in the number of layers.

\begin{proposition}\label{prop:vanishing_terms}
Let $T$ be a LTM, and let $T_z$ be an irreducible block with $r_z < 1$. Then, as $L\to \infty$, $\|T_z^L\| \to 0$ and $\|A^{(L)}_z\| \to 0$ exponentially fast for any matrix norm $\|\cdot\|$. Similarly, also $\|Q^L\| \to 0$.
\end{proposition}

\begin{proof}
The proposition can be proven using Gelfand's formula. In particular since $\lim_{L\to\infty} \|T_z^L\|^{1/L} =  r_z$, we can always bound $\|T_z^L\| \leq K \tau^L$, for some constant $K>0$ and $\tau = r_z +\epsilon < 1$ for an arbitrarily small $\epsilon$. In the same way, by \cref{prop:spectral_radius_bound} a similar result can be obtained for Q. Finally, the absorption term $A_z^{(L)}$ can also be bounded using \cref{lemma:T_lim}. In that case we have
\begin{equation}
    \|A_z^{(L)}\| \leq \sum_{l=0}^{L-1} \|T_z^L\|\|R_z\|\|Q^{L-1-l}\| \leq \|R_z\| K_T K_Q \tau^l \kappa^{L-1-l} \leq K_A\alpha^L
\end{equation}

with some constant $K_A>0$ and $\alpha = \max\{\kappa, \tau\}<1$. This can be obtained again using Gelfand's formula on both $T_z$ and $Q$, and by sub-additivity and sub-multiplicativity of the matrix norm $\|\cdot\|$.
\end{proof}

\subsection{Proof of {\deepcirc}}

As stated in the main text, the limiting value of quantum circuits of type \cref{eq:circuit_channel} is obtained by studying the spectral properties of the LTM of the intermediate channel in the Heisenberg picture. In particular, the previous discussion  suggests a limiting value for the variance of the form
\begin{equation}\label{eq:v_infty_general_app}
    \Varr^\infty = \sum_{z>0} (\ell_\rho,w_z) (\ell_H)_z + (\ell_\rho,w_z) (A_z \ell_H)
\end{equation}

with some normalized, strictly positive vector $w_z$, and absorption matrices $A_z$. Indeed, the following shows that this is the case.

\begin{theorem}[Deep circuit variance]\label{thm:main_appendix}
Let $\rho, H \in \BH$ and let $\Phi_\theta$ a be layered quantum channel as in described in the main text. Then the Cesàro average of $\Varr^L$ converges, and we have
\begin{equation}
\label{eq:cesaro_limit_app}
    \left|\frac{1}{L} \sum_{l=0}^L \Varr^l -\Varr^\infty \right| \in O\left(e^{-\beta L} \|H\|^2_2\right),
\end{equation}
for some constant $\beta > 0$. Additionally, if all essential blocks are aperiodic, then $\Varr^L$ is convergent, and we have
\begin{equation}
\label{eq:simple_limit_app}
    \left|\Varr^L -\Varr^\infty \right| \in O\left(e^{-\beta L} \|H\|^2_2\right),
\end{equation}
where the right eigenvector $w_z$ of $T_z$ is a strictly positive vector, i.e. $(w_z)_\kappa > 0 \, \forall \kappa$, and $A_z = R_z(\mathds{1}-Q)^{-1}$ are the absorption coefficients of each essential block. 
\end{theorem}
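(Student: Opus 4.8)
The proof will run entirely at the level of the transfer matrix, turning the statement into an asymptotic statement about powers of a non-negative matrix. First I would reduce to a traceless observable: writing $H = H' + (\Trb{H}/d)\mathds{1}$ with $\Trb{H'}=0$, trace preservation of $\Phi_\theta$ makes $\Loss$ and $\mathcal{L}_{\rho,H'}$ differ by the deterministic constant $\Trb{H}/d$, so $\Varr^L = \mathbb{V}_{\rho,H'}^L$ and $\|H'\|_2 \le \|H\|_2$; hence WLOG $\Trb{H}=0$, which by \cref{prop:general_formula_appendix} together with \cref{eq:varr_0} gives the clean identity $\Varr^L = (\ell_\rho, T^L\ell_H) = \ell_\rho^t D^{-1}T^L\ell_H$, with $T$ the LTM of $\mathcal{E}^\dagger$ for the homogeneous channel $\mathcal{E}_l \equiv \mathcal{E}$, and $(\ell_H)_0 = 0$. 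I would also record at the outset the elementary bounds $\|\ell_H\|_1 = \sum_\kappa (\ell_H)_\kappa = \|H\|_2^2$ and $\|\ell_\rho\|_1 = \|\rho\|_2^2 \le 1$: since the ambient space $\mathbb{R}^{2^M}$ has fixed dimension, every estimate of the shape ``constant $\times$ (norm of a sub-block of $T^L$) $\times \|\ell_H\|$'' is automatically $O(\cdot\,\|H\|_2^2)$, which is where the $\|H\|_2^2$ prefactor in the claim comes from.

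Next I would invoke the canonical form of \cref{lemma:T_lim}, so that $T^L$ has diagonal blocks $T_z^L$, $Q^L$ and upper-right blocks $A_z^{(L)} = \sum_{l=0}^{L-1}T_z^lR_zQ^{L-1-l}$, and partition the essential blocks into those with $r_z := \rho(T_z) < 1$ and those with $r_z = 1$; recall $\rho(Q)<1$ always, by \cref{prop:spectral_radius_bound}. By \cref{prop:vanishing_terms}, $\|Q^L\|$, the $\|T_z^L\|$ with $r_z<1$, and the $A_z^{(L)}$ attached to those same blocks all decay like $\alpha_1^L$ for a single $\alpha_1<1$; contracted with $\ell_\rho^tD^{-1}$ and $\ell_H$ these produce an error $O(e^{-\beta_1 L}\|H\|_2^2)$ and can be discarded. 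For the blocks with $r_z=1$ the key input is \cref{cor:column_sum}, which identifies the left Perron eigenvector of $T_z$ as the all-ones vector $v_z$, so the Perron projector is $P_z = w_z v_z^t$ with $w_z$ the normalized, strictly positive right Perron eigenvector; the asymptotic theorem for irreducible matrices recalled above then gives $\frac1L\sum_{l\le L}T_z^l \to P_z$, and $T_z^L \to P_z$ at a geometric rate governed by the Perron gap $r_z - \max\{|\lambda| : \lambda \ne r_z\}$ when $T_z$ is aperiodic.

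The only genuinely new piece is the limit of the absorption blocks $A_z^{(L)}$ for $r_z = 1$. Re-indexing, $A_z^{(L)} = \sum_{j=0}^{L-1}T_z^{L-1-j}R_zQ^j$; splitting the sum at $j \sim L/2$, the tail is controlled by $\|Q^j\| \le K q^j$ ($q<1$) against the uniform bound $\|T_z^n\|\le K'$, while on the head $T_z^{L-1-j}\to P_z$ and $\sum_{j\ge0}Q^j = (\mathds{1}-Q)^{-1}$, yielding $A_z^{(L)} \to P_z R_z (\mathds{1}-Q)^{-1}$ geometrically in the aperiodic case, and the corresponding Cesàro statement $\frac1L\sum_{L'\le L}A_z^{(L')}\to P_z R_z(\mathds{1}-Q)^{-1}$ in general. (For the scalarised quantity alone one may bypass this: $v_z^tT_z^l = v_z^t$ for every $l$ forces $v_z^tA_z^{(L)} = v_z^tR_z\sum_{j<L}Q^j$, which converges regardless of periodicity.) Contracting with $\ell_\rho^tD^{-1}$ on the left and $\ell_H$ on the right, the block-$z$ contribution tends to $(\ell_\rho^tD^{-1}w_z)\big(v_z^t\ell_H|_z + v_z^tR_z(\mathds{1}-Q)^{-1}\ell_H|_Q\big) = (\ell_\rho,w_z)\big((\ell_H)_z + (A_z\ell_H)_z\big)$ with $A_z = R_z(\mathds{1}-Q)^{-1}$; summing over the essential blocks with unit spectral radius reproduces $\Varr^\infty$ exactly as in \cref{eq:v_infty_general_app}, and collecting the various geometric rates into a single $\beta>0$ gives the Cesàro bound \cref{eq:cesaro_limit_app} in general and the pointwise bound \cref{eq:simple_limit_app} under aperiodicity.

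I expect the interchange of limit and summation in $A_z^{(L)}$ to be the main obstacle: making the split-and-dominate estimate yield a genuinely exponential rate, and carrying it through the Cesàro average uniformly in the periodic case, is the one place where the lemmas above do not apply verbatim. The remaining effort is bookkeeping — merging $\rho(Q)$, the sub-unit $r_z$, and the Perron gaps of the unit-radius blocks into one $\beta$ while keeping the dependence on $\rho$ and $H$ explicitly of the form $O(e^{-\beta L}\|H\|_2^2)$.
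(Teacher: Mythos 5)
Your proposal is correct and follows essentially the same route as the paper: reduce to the homogeneous transfer matrix $T$, put it in the canonical form of \cref{lemma:T_lim}, discard the sub-unit-radius blocks and $Q^L$ via \cref{prop:vanishing_terms}, identify $v_z$ as the all-ones left Perron vector through \cref{cor:column_sum}, take the (Cesàro) limit of $T_z^L$ and $A_z^{(L)}$, and control the prefactor by a norm bound on $\ell_\rho$, $\ell_H$. The one place where you genuinely diverge is the limit of the absorption blocks: you split the convolution $A_z^{(L)}=\sum_{j<L}T_z^{L-1-j}R_zQ^j$ at $j\sim L/2$ and dominate head and tail separately, whereas the paper regroups the sum modulo the period $d$, uses the block-diagonal aperiodic structure of $T_z^d$, and obtains the limit of each subsequence $A^{(dN+m)}$ before Cesàro-averaging. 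Your version is more direct and yields the cleaner closed form $P_zR_z(\mathds{1}-Q)^{-1}$ for the Cesàro limit in one step (the paper's expression $\frac{1}{d}\sum_m(T^{(d\infty)}A^{(m)}+A^{(d\infty)}Q^m)$ is equal to it but less transparent); the paper's version gives, as a by-product, the explicit family of distinct subsequential limits in the periodic case, which it uses elsewhere to discuss non-convergent variances. Both deliver the exponential rate.

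One aside in your write-up is wrong as stated: the shortcut via $v_z^tA_z^{(L)}=v_z^tR_z\sum_{j<L}Q^j$ does not by itself give the scalar quantity you need, because the block-$z$ contribution to $\Varr^L$ is contracted on the left with $\ell_\rho|_z^tD_z^{-1}$, not with $v_z^t$; the all-ones covector only emerges after you have already replaced $T_z^L$ (or $A_z^{(L)}$) by its Perron projection $w_zv_z^t(\cdots)$. Since your main split-and-dominate argument supplies exactly that replacement, nothing is lost, but the parenthetical cannot substitute for it.
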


\begin{proof}
Thanks to \cref{prop:vanishing_terms}, only irreducible components with $\rho(T_z)=1$ will contribute to the limit, so we can restrict our analysis to those alone.
Consider then an irreducible block $T_z$ with unit spectral radius, and of period $d$. While the full version of Perron-Frobenius theorem does not directly apply to $T_z$, it is a well-known result of non-negative matrix theory \cite{Seneta06,Meyer00} that the matrix $T_z^d$ can be cast to a block diagonal form by a permutation, with irreducible and \emph{aperiodic} blocks, for which we can apply it. However, it is crucial to notice that while $\lim_{N\to \infty} T_z^{dN} = T_z^{(d\infty)}$ exists, this does not imply that $\lim_{L\to \infty} T_z^{L}$ does. In fact, different sub-sequences might have different limiting values, and in particular $\lim_{N\to\infty} T_z^{dN+m} = T_z^{(d\infty)}T^m$, which is different for all $m=0,...d-1$. In the periodic scenario then, $T_z^L$ does not have a limit, and the only convergent quantity is the \emph{Cesàro average}, i.e.
\begin{equation}
    P_z = \lim_{L\to\infty} \frac{1}{L}\sum_{l=1}^L T_z^l = T_z^{(d \infty)}\frac{1}{d}\sum_{m=0}^{d-1}T_z^m
\end{equation}

where $P_z = w_zv_z^{t}$ can be shown to be the Perron projector associated to $T_z$ \cite{Seneta06,Meyer00}. Despite more cumbersome, a totally analogous approach allows determining the limiting values of $A^{(dN+m)}$ as well, as shown in the following Proposition.

\begin{proposition}
    Given a LTM with a periodic irreducible block $T_z$ of period $d$, then
    \begin{equation}
        \lim_{N\to\infty} A^{(dN+m)} = T^{(d\infty)} A^{(m)} + A^{(d\infty)} Q^m
    \end{equation}
where $A^{(d\infty)} = \sum_{m=0}^{d-1} T^{(d\infty)}A^{(d)}(\mathds{1}-Q^d)^{-1}$ and $A^{(0)}=0$.
\end{proposition}

\begin{proof}
Starting from the definition of $A^{(l)}$, we can first find the limiting value $A^{(d\infty)}$ of $A^{(dN)}$:

\begin{equation}\label{thm:main_dim:prop:chain}
\begin{split}
    A^{(dN)} &= \sum_{l=0}^{Nd-1}T_z^lR_z Q^{dN-1-l}\\
    &=\sum_{m=0}^{d-1}\sum_{n=0}^{N-1} T_z^{nd+m}R_zQ^{(N-1)d-nd+d-1-m}\\
    &=\sum_{m=0}^{d-1}\sum_{n=0}^{N-1} T_z^{nd}T_z^{m}R_zQ^{d-1-m}(Q^d)^{N-n}\\
    &=\sum_{n=0}^{N-1} T_z^{nd}A^{(d)}(Q^d)^{N-n}\\
    &=T_z^{(d\infty)}A^{(d)}\sum_{n=0}^{N-1}(Q^d)^{N-n} + \sum_{n=0}^{N-1}\Delta_z^{(n)}A^{(d)}(Q^d)^{N-n}
\end{split}
\end{equation}

where $\Delta_z^{(N)} = T_z^{dN}- T_z^{(d\infty)}$. Since $T_z^d$ is block diagonal, and each of the $d$ blocks $T_{zm}$ is irreducible and aperiodic, we have that $T^N_{zm} \to P_{zm}$ exponentially fast, i.e. $\|T^N_{zm} - P_{zm}\| = \|\Delta^{(N)}_{zm}\| < K_{zm} \tau_{zm}^N$, for some $K>0$ and $\tau<1$. Then, $\|\Delta^{(N)}_z\| \leq \sum_{m=0}^{d}\|\Delta^{(N)}_{zm}\| \leq K_z \tau_z^{N}$, where $\tau = \max_m \{\tau_{zm}\} < 1$. Together with \cref{prop:spectral_radius_bound}, this implies that, the last term in \cref{thm:main_dim:prop:chain} approaches zero with the same exponential speed, similarly to what happens in \cref{prop:vanishing_terms}.
Putting everything together, and considering that $\sum_{n=0}^{N-1}X^{n} \to (\mathds{1}-X)^{-1}$ $\forall X$ such that $\rho(X) < 1$, we have 
\begin{equation}
    A^{(d\infty)} =  T^{(d\infty)}A^{(d)}(\mathds{1}-Q^d)^{-1}
\end{equation}

At this point, by induction similarly to \cref{lemma:T_lim}, one can easily show that

\begin{equation}
    A^{(dN+m)} = T^{(dN)} A^{(m)} + A^{(dN)} Q^m
\end{equation}
which yields the proposition.
\end{proof}

In particular, the Cesàro average converges and we have the expression

\begin{equation}
A_z = \lim_{L\to\infty}\frac{1}{L}\sum_{l=1}^{L} A^{(l)} = \frac{1}{d}\sum_{m=0}^{d-1} T^{(d\infty)} A^{(m)} + A^{(d\infty)} Q^m.
\end{equation}

Recalling that the right eigenvector $v_z$ can be explicitly calculated when $\rho(T_z)=1$ (see \cref{cor:column_sum}), this allows to obtain the final form of $\Varr^\infty$ by ordinary matrix vector multiplication. As a special case, if all relevant blocks $T_z$ are aperiodic, then the limits of $T_z^L$ and $A_z^{(L)}$ converge, and we have $P_z = T_z^{(1 \infty)}$ and $A_z = A_z^{(1 \infty)} = P_z R_z (\mathds{1}-Q)^{-1}$.
Finally, the exponential upper bound in \cref{eq:cesaro_limit_app} and \cref{eq:simple_limit_app} is also obtained as a consequence of the preceding analysis. In particular if we denote by $T^\infty$ the matrix with $P_z$ in place of $T_z$, $A_z$ in place of $R_z$ and zero otherwise, we have

\begin{equation}
    \left|\frac{1}{L}\sum_{l=0}^L \Varr^l - \Varr^\infty\right| = \left|\left(\ell_\rho, \frac{1}{L}\sum_{l=0}^L T^l - T^\infty \ell_H\right)\right| \leq \left\|\frac{1}{L}\sum_{l=0}^L T^l - T^\infty\right\| \sqrt{(\ell_\rho,\ell_\rho) (\ell_H,\ell_H)}.
\end{equation}

by Cauchy-Schwarz inequality. Since all blocks converge exponentially fast from the above discussion, the matrix norm is also exponentially decaying. Moreover, by construction $(\ell_A,\ell_A) \leq \sqrt{\sum_\kappa (\ell_A)^2_\kappa} \leq \sum_\kappa (\ell_A)_\kappa = \|A\|_2^2 \, \forall A\in\BH$, which concludes the proof. 
\end{proof}

While for an explicit calculation of $w_z$ one should in general rely on case-specific analyses, a general result can be derived for a subclass of channels especially useful in the context of quantum computing, namely single qubit noise.

\begin{proposition}[Single qubit noise]\label{cor:single_qubit_noise}
    Let $\rho, H \in \BH$ and let $\Phi_\theta$ a be layered quantum channel as in described in the main text. Assume moreover that the intermediate channel is of the form $\mathcal{E} = \mathcal{N}(W\rho W^\dagger)$, where $W$ is a unitary transformation and $\mathcal{N} = \mathcal{N}_1 \otimes ... \otimes \mathcal{N}_n$ is a composition of single qubit quantum channels. Then
    \begin{equation}
       \Varr^\infty = \sum_{z} \frac{(\ell_\rho)_z (\ell_H)_z}{d_z} + \frac{(\ell_\rho)_z (A \ell_H)_z}{d_z}.
    \end{equation}
\end{proposition}

\begin{proof}
Without loss of generality, we can consider the single qubit channels $\mathcal{N}_m$ to be in their normal form of \cref{lemma:normal_form}. In particular, this holds due to the invariance of the LTM with respect to changes of local bases (\cref{lemma:basis_independence_T}). In terms of the adjoint maps $\mathcal{N}^\dagger_m$, this condition reads $\mathcal{N}^\dagger_m (P_{j_m}) = t_{j_m} P_0 + \lambda_{j_m}P_j$ and can be used to compute $\Trb{\mathcal{N}^\dagger_m (P_{j_m})^2} = t_{j_m}^2 + \lambda_{j_m}^2 \leq 1$ by \cref{lemma:normal_form}. We now show that, when $P_j \in \BHz$ pertains to an irreducible component of spectral radius $\rho(T_z)=1$, then the inequality must be saturated. In particular, thanks to \cref{cor:column_sum} we know that $\sum_\kappa (T_z)_{\kappa, \lambda} = 1$. By \cref{eq:substochasticity} this implies
\begin{equation}\label{eq:unitary_noise}
    \sum_\kappa (T_z)_{\kappa, \lambda} = \frac{1}{d_\lambda} \sum_j \Trb{(W^\dagger \mathcal{N}^\dagger(P_j)W)^2} \deltat{j}{\lambda} = \frac{1}{d_\lambda} \sum_j \Trb{\mathcal{N}^\dagger(P_j)^2} \deltat{j}{\lambda} = \frac{1}{d_\lambda} \sum_j \prod_{m=1}^M \Trb{\mathcal{N}^\dagger_m(P_{j_m})^2} \deltat{j}{\lambda} = 1.
\end{equation}
Since each term in the product is upper-bounded by $1$, \cref{eq:unitary_noise} implies $\Trb{\mathcal{N}^\dagger_m (P_{j_m})^2} = t_{j_m}^2 + \lambda_{j_m}^2 = 1 \, \forall j_m$. This condition is only compatible with \cref{lemma:normal_form} if $t_{j_m}^2=0$ and $\lambda_{j_m}^2 = 1$. This result can now be used to show that, the adjoint of $T_z^\dagger$ of the LTM $T_z$, must also be column stochastic. Indeed, if we consider the expansion of $WP_jW^\dagger$ with respect to the normalized Pauli basis, we can get

\begin{equation}
\begin{split}
    \sum_\kappa (T_z)^\dagger_{\kappa, \lambda} &= \frac{1}{d_\lambda} \sum_j \Trb{\mathcal{N}(WP_jW^\dagger)^2} \deltat{j}{\lambda} = \frac{1}{d_\lambda} \sum_j \Trb{\left(\sum_i \Trb{P_i WP_jW^\dagger} \mathcal{N}(P_i)\right)^2} \deltat{j}{\lambda}\\
    &= \frac{1}{d_\lambda} \sum_j \Trb{\left(\sum_i \Trb{P_iWP_jW^\dagger} \prod_m\lambda_{i_m} P_i\right)^2} \deltat{j}{\lambda} = \frac{1}{d_\lambda} \sum_{i,j}\prod_m\lambda_{i_m}^2\Trb{P_iWP_jW^\dagger}^2\deltat{j}{\lambda}\\
    &= \frac{1}{d_\lambda} \sum_j \Trb{WP_j^2W^\dagger}\deltat{j}{\lambda} = 1.
\end{split}
\end{equation}

This allows to compute the right eigenvector $w_z$ of the leading eigenvalue of $T_z$. Using \cref{eq:detailed_balance}, we have indeed

\begin{equation}
    \sum_\lambda (T_z)_{\kappa, \lambda} d_\lambda =  \sum_\lambda (T_z)^\dagger_{\lambda,\kappa} d_\kappa = d_\kappa
\end{equation}
which implies $(w_z)_\lambda = d_\lambda/d_z$, where the normalization factor $d_z = \sum_{\lambda} d_\lambda$ is necessary to ensure $w_zv_z^t = P_z$ is indeed a projection, thus concluding the proof.
\end{proof}

As a consequence of this last result, we can compute the variance of generic unitary circuits. 

\begin{corollary}[Unitary circuits]\label{cor:deep_unitary_app}
For unitary circuits of type \cref{eq:circuit_channel}, we have

\begin{equation}
    \Varr^\infty = \sum_{z>0} \frac{(\ell_{\rho})_z (\ell_{H})_z}{d_z}
\end{equation}
\end{corollary}

\begin{proof}
    This form of the variance is a special case of \cref{cor:single_qubit_noise}, putting $\mathcal{N}(\rho) = \rho$, and noting that the absorption terms must vanish. In particular, this follows from \cref{eq:substochasticity}, observing that unitary channels saturate Kadison-Schwarz inequality, which combined with \cref{cor:column_sum} imply $Q=0$.
\end{proof}

On the opposite limit, if the noise map is strictly contractive in at least one direction in each $\BHz$, then the combination of noise and entanglement is strong enough to kill the variance in each of the absorbing subspaces. As a consequence, only the absorption term to $\BH_{0}$ remains, since no channel can be contractive there by trace preservation.

\begin{corollary}[Noise-induced concentration]\label{cor:deep_noisy_app} Let $\mathcal{E}$ be a quantum channel, and let $\{P_j\}$ denote the normalized Pauli basis. If $\|\mathcal{E}^\dagger (P_j)\|_2 < 1$ for some $j\in T_z$, $\forall z$, then
\begin{equation}
    \Varr^\infty = \frac{(A \ell_H)_0}{d}.
\end{equation}
In particular, if the channel is unital, $\Varr^\infty=0$.
\end{corollary}

\begin{proof}
    This form of the variance is a special case of \cref{eq:v_infty_general_app}, where all absorbing components vanish, and we have $T=Q$. In particular, this follows from \cref{eq:substochasticity}, observing that the above condition implies that for $\mathcal{E}^\dagger$ Kadison-Schwarz inequality is strict, which combined with \cref{cor:column_sum} imply $T_z=0\, \forall z>0$, leaving only $T_0$. Finally, the corollary follows from the normalization condition $\Trb{\rho}=1$ on $\rho$, which ensures $(\ell_\rho)_0=1/d$.
\end{proof}

\section{Proof of {\generallb}}

In this section we employ \cref{prop:general_formula} to prove a general lower-bound on slowly entangling circuits. In particular, such result is based on the approximation $T_l\approx\mathds{1}$, which holds either for shallow circuits, i.e. $L\in O(\log n)$, or deeper circuits, but with weakly entangling intermediate channels. The discussion is based on the following result.

\begin{theorem}\label{thm:appendix:2}
    Let $\rho, H \in \BH$ and consider a sequence of quantum channels $\{\mathcal{E}_l\}_{l=1}^L$, and let $\{T_l\}_{l=1}^L$ be the respective LTMs. Finally let $K\subset \{0,1\}^M$ denote a subset of indices, and by $\alpha_l = \min_{\kappa\in K} (T_l)_{\kappa,\kappa}$. Then
    \begin{equation}\label{eq:lower_bound}
        \Varr^L \geq  \alpha^L (\ell_\rho,\ell_{\mathcal{K}(H)}),
    \end{equation}
    where $\mathcal{K}(\cdot) =\sum_{\kappa \in K} \sum_{j} \Trb{P_j \; \cdot} P_j\deltat{j}{\kappa}$ is a projector onto the space spanned by $K$ and $\alpha = \left(\prod_{l=0}^L \alpha_l\right)^{1/L}$ is the geometric mean of $\alpha_l$. 
\end{theorem}

\begin{proof}
    Consider a single circuit layer. Then, we can write
    \begin{equation}
        (\ell_\rho, T_l \ell_H) = \sum_{\kappa, \lambda} \frac{(\ell_\rho)_\kappa T_{\kappa,\lambda} (\ell_H)_\lambda}{d_\kappa} \geq \sum_{\kappa \in K} \frac{(\ell_\rho)_\kappa T_{\kappa,\kappa} (\ell_H)_\kappa}{d_\kappa} \geq \alpha_l \sum_{\kappa\in K} \frac{(\ell_\rho)_\kappa (\ell_H)_\kappa}{d_\kappa} = \alpha_l (\ell_\rho, \ell_{{\mathcal{K}(H)}})
    \end{equation}

where the inequality holds since all terms in the sum are non-negative by construction. The claim follows from repeated application of the latter.
\end{proof}

Despite its simplicity, \cref{thm:appendix:2} can be used to deduce general bounds on weakly entangling circuits, which are the foundation of small angle initialization strategies. In particular, we get the following Corollary.

\begin{corollary}\label{cor:small_angle:appendix}
    Let $\mathcal{H} = \bigotimes_{m=1}^M \mathcal{H}_m$, $d_m \in \Theta(1)$. If either of the conditions 
    \begin{enumerate}[(a)]
        \item $\alpha > 0$, $\alpha \in \Omega(1)$ and $L\in O(\log n)$,
        \item $\alpha = 1-f(n,L)$, $f \in O(\log n/L)$ and {$L \in \Omega(\log^{1+\epsilon}n)$} for some arbitrary $\epsilon > 0$
    \end{enumerate}
    is satisfied, then
    \begin{equation}
    \label{eq:small_angle:bound}
        \Varr^L \geq F(n) (\ell_\rho, \ell_{{\mathcal{K}(H)}}),
    \end{equation}
    where $F(n) \in \Omega(1/\text{poly}(n))$.    
\end{corollary}

\begin{proof}
    Exploiting \cref{thm:appendix:2}, it suffices to show that $F(n) = \alpha^L \in \Omega(1/\text{poly}(n))$. In the first case, this follows directly from the shallow nature of the circuit, and in particular $F(n) \in \alpha^{O(\log(n))} = \Omega(1/n^{-\log(\alpha)}) \subset \Omega(1/\text{poly}(n))$. For the second case instead, it is useful to consider $\log(F(n))$:
    \begin{equation}
        -\log(F(n)) < -L\log\left(1-C\frac{\log(n)}{L}\right) = C\log(n)\left(1+C\frac{\log(n)}{2L} + O\left(\frac{\log^2(n)}{L^2}\right)\right) \in O(\log(n))
    \end{equation}
which in turn implies $F(n) \in e^{-O(\log(n))} = \Omega(1/n^C) \subset \Omega(1/\text{poly}(n))$.
\end{proof}

\section{Small angle initializations}

In order to prove the general lower bounds on small angle initializations provided in the main text, it is useful to start from {\noiseQResNetequiv}, as it is the fundamental building block in this type of proofs. We recall it for convenience.

\begin{proposition}\label{prop:equivalence:appendix}
    Let $\{E_\phi\}_\phi$ be an ensemble such that $\mathcal{E}(\rho) = \mathbb{E}_\phi \{ E_\phi \rho E_\phi^\dagger\}$ is a quantum channel. Further, denote by $T_\phi$ the transfer matrix associated to each $\mathcal{E}_\phi^\dagger(\cdot) = E^\dagger_\phi \cdot E_\phi$, and by $T$ the transfer matrix of $\mathcal{E}^\dagger$. Then we have
    \begin{equation}
\mathbb{E}_\phi \{T_\phi\} \geq T,
    \end{equation}
with equality holding if and only if $\mathcal{E}$ is unitary.
\end{proposition}

\begin{proof}
    Let $A,B$ be arbitrary bounded operators, and consider
    \begin{equation}
        \Trb{\mathcal{E}(A)B}^2 = \Trb{A\mathcal{E}^\dagger(B)}^2 = \mathbb{E}^\dagger_\phi\left\{\Trb{A\mathcal{E}^\dagger_{\phi}(B)}\right\}^2 \leq \mathbb{E}^\dagger_\phi\left\{\Trb{A\mathcal{E}^\dagger_{\phi}(B)}^2\right\} \;\; \forall A,B
    \end{equation}
    which follows from the observation that $f(\phi) = \Trb{A\mathcal{E}^\dagger_{\phi}(B)} \in \mathbb{R}$, and so $\Var{\phi}\{f\} \geq 0$. Applying this to the entries of $T$ and $T_\phi$ gives the general inequality. Finally, the equality follows from $\Var{\phi}\{f\}=0$, which is means that $f(\phi)=\Trb{AK^\dagger B K}$ is a constant, where $K=E_\phi$ $\forall \phi$. Hence, since $\mathcal{E}(\cdot) = K \cdot K^\dagger$ is CPTP, it must also be unitary. 
\end{proof}

In order to translate this rather abstract formulation into a practical recipe, we need to identify the conditions that allow to treat the contribution of given an ensemble $\{\mathcal{E}_\phi\}$ of parameterized intermediate channels to the variance in terms of the mean LTM $\mathbb{E}_\phi\left\{T_\phi \right\}$. In particular, it is easily verified that, if $\phi$ is sampled independently of the other parameters, then

\begin{equation}
    \Varr^L = \mathbb{E}_\phi \mathbb{E}_\theta \left\{\Trb{\Phi_{\theta,\phi} (\rho) H}^2\right\} = \mathbb{E}_\phi \left\{(\ell_\rho, T_\phi \ell_H)\right\} =  (\ell_\rho, \mathbb{E}_\phi \{T_\phi\} \ell_H).
\end{equation}

Combining this observation with \cref{cor:small_angle:appendix}, we can get the QResNet lower bound of {\QResNet}.

\begin{proposition}[QResNet]
    Let $\mathcal{E}_\phi(\cdot) = e^{i\phi G} \cdot e^{-i\phi G}$ be a unitary entangling gate, and let $\mu, \sigma^2$ be the mean and variance of the initialization distribution $\mathcal{P}$ of $\phi$. Then, if $\mu=0$, ${\sigma^2 \in O(\log n/\|G\|_2^2L)}$, and $L\in\Omega(\log^{1+\epsilon} n)$
    \begin{equation}
        \Varr^L \geq F(n) (\ell_\rho, \ell_H)
    \end{equation}
    where $F(n) \in \Omega(1/\text{poly}(n))$.
\end{proposition}

\begin{proof}
    Let $T$ be the locality transfer matrix of $\mathcal{E} = \mathbb{E}_\phi\{\mathcal{E}_\phi\}$, and consider the diagonal element $T_{\kappa,\kappa}$. Then, by definiton, we have

    \begin{equation}\label{eq:qresnet_appendix:1}
        T_{\kappa,\kappa} = \frac{1}{d_\kappa} \sum_{i, j} \Trb{\mathbb{E}_\phi\{P_i\mathcal{E}^\dagger_\phi(P_j)\}}^2 \Tilde{\delta}_{i, \kappa} \Tilde{\delta}_{j, \kappa} \geq \frac{1}{d_\kappa} \sum_{i} \Trb{\mathbb{E}_\phi\{P_i\mathcal{E}^\dagger_\phi(P_i)\}}^2 \Tilde{\delta}_{i, \kappa} = \frac{1}{d_\kappa} \sum_{i} \Trb{\mathbb{E}_\phi\{P_ie^{-i\phi G}P_i e^{i\phi G}\}}^2 \Tilde{\delta}_{i, \kappa}
    \end{equation}
Since $\sigma^2\to 0$ as $n\to \infty$, to find the asymptotic behaviour of the diagonal elements of $T$ we can expand $e^{i\phi G}$ around $\mu$, and obtain ${e^{i\phi G} = \mathds{1} + i\phi G - \phi^2 G^2/2 + O(\phi^3\|G\|_2^3)}$. Substituting this into \cref{eq:qresnet_appendix:1}, we get 
\begin{equation}
    \begin{split}
        T_{\kappa,\kappa} &\approx \frac{1}{d_\kappa} \sum_{i} \Trb{\mathbb{E}_\phi\{P_i(\mathds{1} + i\phi G - \phi^2 G^2/2)P_i (\mathds{1} + i\phi G - \phi^2 G^2/2)\}}^2 \Tilde{\delta}_{i, \kappa}\\
        &= \frac{1}{d_\kappa} \sum_{i} \left(1-\mathbb{E}_\phi\{\phi^2\}\left(\Trb{G^2P_i^2} - \Trb{P_iGP_iG}\right)\right)^2\Tilde{\delta}_{i, \kappa}\\
        &\geq 1- 4\|G\|^2_2 \sigma^2 \;\; \forall \kappa\in\{0,1\}^M
    \end{split}
\end{equation}
Exploiting \cref{cor:small_angle:appendix}, we get the proposition by showing $T_{\kappa,\kappa} \geq 1 - f(n,L)$, where $f(n,L) \in O(\log n/L)$. In particular this follows directly from the scaling of $\sigma^2$. The same proof ensures absence of concentration on a unitary QResNet, provided that $\mathcal{P}$ is chosen as an initialization probability by \cref{prop:equivalence:appendix}.
\end{proof}

\section{Non-unital noise and entanglement}

In this Section we explicitely derive the absorption terms pertaining to the trivial component $\mathcal{B}_0$, in the presence of strictly contractive, yet non-unital noise, first in general, and then on a specific system used for the numerical calculations.

\subsection{Calculation of the absorption term}
As shown in \cref{appendix:proof_main}, explicit computation of the absorption matrix $A = R(\mathds{1}-Q)^{-1}$ is a non-trivial task, which should be tackled on a case-by-case basis. Indeed, in the non-unital case, this term describes the complex phenomenon arising from the interaction of two competing effects, which drive the system towards different states. Analytical summation of $A$ is however can be feasible and still give insight on the interaction of the two. Indeed, using a simplified model, we can perform this calculation and still be able to appreciate the different effects that rapidly entangling and slowly entangling circuits have on a fixed, non-unital noise as a function of its strength.
To do so, let us consider a non-unital map of the form
\begin{equation}\label{eq:noise_model}
    \mathcal{E}_c(\rho) = (1-p)\mathcal{E}(\rho) + p \Tilde{\rho},
\end{equation}
where $\Tilde{\rho}\neq\mathds{1}/d$ is an arbitrary quantum state, $\mathcal{E}$ is a unitary channel representing the entangling operation, and $p$ is the error probability associated to $\mathcal{E}_c$. Intuitively, we can think of the resulting channel $\Phi_\theta$ as the repetition of $L$ layers, each made up of the composition of $\Tilde{\Phi}(\rho) = (1-p)\rho + p\Tilde{\rho}$ and of $\mathcal{E}\circ\mathcal{U}_{\theta_l}$. Then we have the following Lemma.

\begin{proposition}\label{prop:noise_model}
Let $\mathcal{E}_c$ be a quantum channel of type \cref{eq:noise_model}, with $0< p \leq 1$ and let $T$ be the transition matrix of $\mathcal{E}^\dagger$. Then
    \begin{equation}
        \Varr^\infty = p^2 (\ell_{\Tilde{\rho}}, (\mathds{1}-(1-p)^2T)^{-1} \ell_H).
    \end{equation}
In particular, if $T$ is a projection, then
    \begin{equation}\label{eq:noise_scaling_projector}
        \Varr^\infty = \left(\frac{p}{2-p}-p^2\right)(\ell_{\Tilde{\rho}},T\ell_H) + p^2(\ell_{\Tilde{\rho}},\ell_H).
    \end{equation}
\end{proposition}

\begin{proof}
The first result follows directly from \cref{thm:main_appendix}, in particular \cref{cor:deep_noisy_app}, by computation of $A_0$. In particular, we can explicitly compute $R_0$ element wise as

\begin{equation}
\begin{split}
       (R_0)_{0,\kappa} &= \frac{1}{d_\kappa} \sum_{j} \Trb{\frac{\mathds{1}}{\sqrt{d}}\mathcal{E}_c^\dagger(P_j)}^2 \Tilde{\delta}_{j, \kappa} = \frac{1}{d_\kappa} \sum_{j} \Trb{\mathcal{E}_c\left(\frac{\mathds{1}}{\sqrt{d}}\right)P_j}^2 \Tilde{\delta}_{j, \kappa}\\
       &= \frac{1}{d_\kappa} \sum_{j} \left((1-p)\Trb{\Phi\left(\frac{\mathds{1}}{\sqrt{d}}\right)P_j} + p\sqrt{d}\Trb{{\Tilde{\rho}}P_j} \right)^2\Tilde{\delta}_{j, \kappa}\\
       &= dp^2 \frac{1}{d_\kappa} \sum_{j} \Trb{{\Tilde{\rho}}P_j}^2\Tilde{\delta}_{j, \kappa} = d p^2  (\ell_{\Tilde{\rho}})_\kappa
\end{split}
\end{equation}
by unitality of $\Phi$. By an analogous calculation, it can be shown that $Q=(1-p)^2T$, and by trace preservation $(P_0)_{0,0}=1$.
With these elements, we can compute $A_0 = P_0R_0(1-Q)^{-1}$, and we get
\begin{equation}
    (A_0)_{0,\lambda} = dp^2 (\ell_{\Tilde{\rho}})_\kappa (1-(1-p)^2T)^{-1}_{\kappa,\lambda}.
\end{equation}
In particular, since for any initial state $\rho$, we have $(\ell_\rho)_0 = 1/d$ by normalization, we get the final result
\begin{equation}
    (\ell_\rho, A_0 \ell_H) = p^2 (\ell_{\Tilde{\rho}}, (\mathds{1}-(1-p)^2T)^{-1} \ell_H).
\end{equation}
Using this, we can explicitly compute the right-hand side in the simplified setting where $T$ is a projection. In that case in particular, we have that 
\begin{equation}
\begin{split}
     (\mathds{1}-(1-p)^2T)^{-1} &= (\mathds{1}-T) + (1-(1-p)^2)^{-1}T = (\mathds{1}-T) + \frac{1}{p(2-p)}T = \mathds{1} +
      \left(\frac{1}{p(2-p)} -1\right)T\\
\end{split}
\end{equation}
which gives the result.
\end{proof}

\subsection{Choice of the system for the numerical example}\label{app:sec:numerical_example}

In this section we give an explicit construction of the noise channels used in the numerical example shown in the main text, as well as the choice and normalization procedure of the observable $H$ used. 
The quantum circuit considered is of form \cref{eq:circuit_channel}, with each subsystem representing a single qubit. As stated in the main text, the intermediate map $\mathcal{E}$ considered is of the family $\mathcal{E}_p(\rho) = (1-p)\Phi(\rho) + p \Tilde{\rho}$ of contractive maps in the 2-norm with fixed point $\Tilde{\rho}$, where $\Phi(\mathds{1})=\mathds{1}$ is a unital channel and $p\in (0,1]$ represents the noise strength. In particular, the analysis focuses on the case of unitary entangling map $\Phi$ whose transfer matrix is well approximated by a projector, and a highly entangled, pure fixed point, namely the GHZ state $\Tilde{\rho} = (\ket{0}^{\otimes n} + \ket{1}^{\otimes n})(\bra{0}^{\otimes n} + \bra{1}^{\otimes n})/2$.
Thanks to \cref{prop:noise_model}, we know that the main contribution to $\Varr^\infty$ in this setting comes from the term $(\ell_{\Tilde{\rho}}, \ell_H)$, so to numerically assess the results, it is useful to choose the observable $H$ in order to normalize this factor. In particular, given the qubit structure, we can express the $\Tilde{\rho}$ in terms of the \emph{normalized} Pauli basis $\{\mathds{1}, X,Y,Z\}^{\otimes n}$, which allows to easily compute the locality vectors. Using the spectral decomposition of the Pauli matrices, it is easy to see that 

\begin{equation}
    \ket{0}\bra{0} = \frac{\mathds{1} + Z}{\sqrt{2}}, \;\; \ket{1}\bra{1} = \frac{\mathds{1} - Z}{\sqrt{2}}, \;\; \ket{0}\bra{1} = \frac{X + iY}{\sqrt{2}}, \;\; \ket{1}\bra{1} = \frac{X - iY}{\sqrt{2}}
\end{equation}

Using this decomposition, we can find a formula for $\Tilde{\rho}$ exploiting a generalization of the binomial theorem.

\begin{theorem}\label{thm:permutation:appendix}
    Let $A, B \in \mathbb{M}_d(\mathbb{C})$ be square matrices, and let $S$ be the set of all permutations of $n$ elements. Then
    \begin{equation}
        (A + \omega B)^{\otimes n} = \sum_{j=0}^{n} \omega^j \sum_{\sigma \in S} \sigma(A^{\otimes (n-j)}\otimes B^{\otimes j})
    \end{equation}
    where $\omega \in \mathbb{C}$ and the permutation $\sigma$ is applied to the qubit ordering.
\end{theorem}
In particular, the following corollary will be the most useful in performing the computations
\begin{corollary}
    Let $A, B \in \mathbb{M}_d(\mathbb{C})$ be square matrices, and let $S$ be the set of all permutations of $n$ elements. Then
    \begin{equation}
        \frac{(A + \omega B)^{\otimes n} + (A-\omega B)^{\otimes n}}{2} =\sum_{j=0}^{\lfloor n/2 \rfloor} \omega^{2j} \sum_{\sigma \in S} \sigma(A^{\otimes (n-2j)}\otimes B^{\otimes 2j})
    \end{equation}
    where $\omega \in \mathbb{C}$ and the permutation $\sigma$ is applied to the qubit ordering.
\end{corollary}
\begin{proof}
    This corollary follow directly from \cref{thm:permutation:appendix}, and noticing that all even-indexed terms in both $(A + \omega B)^{\otimes n}$ and $(A -\omega B)^{\otimes n}$ are equal, while odd-numbered terms have opposite sign and therefore cancel out.
\end{proof}

Applying the Corollary to the appropriate pairs of projectors, we can get the final expression

\begin{equation}\label{eq:GHZ_rho:appendix}
    \Tilde{\rho} = \frac{1}{2^{n/2}} \sum_{j=0}^{\lfloor n/2 \rfloor} \sum_{\sigma \in S} \sigma(\mathds{1}^{\otimes (n-2j)}\otimes Z^{\otimes 2j}) + (-1)^j \sigma(X^{\otimes (n-2j)}\otimes Y^{\otimes 2j})
\end{equation}

As it is clear from \cref{eq:GHZ_rho:appendix}, the fixed point of the channel has a non-vanishing component only on Pauli strings that are either non-trivial on all qubits, or non-trivial in \emph{only} in an \emph{even} number of qubits. Then it follows that the simplest observable involving all qubits and with non-vanishing variance is of form $H = h\sum_{k=1}^{n} 2^{n/2}\,Z_{k}\otimes Z_{k+1}$, where the factor $2^{n/2}$ accounts for normalization of $Z$, and cancels out with the corresponding factor in \cref{eq:GHZ_rho:appendix} in the calculations of $(\ell_{\Tilde{\rho}}, \ell_H)$. Finally, since each term in the sum is orthogonal, it gives an independent contribution of $1/(d^2-1)^2 = 1/9$, consequently by choosing $h = 9/n$ we have that $(\ell_{\Tilde{\rho}}, \ell_H) = 1$ is normalized.

\section{Example of a circuit with non-convergent variance}\label{app:non_conv_var}

As {\deepcirc} suggests, the generic circuit of \cref{eq:circuit_channel} need not have a well-defined value for the deep circuit limit of its variance, namely the limit $\lim_{L\to\infty} \Varr^L$ no need to exists. As discussed in \cref{appendix:proof_main}, this property is related to the presence of \emph{cycles} in $T$, i.e. the presence of periodic irreducible blocks $T_z$ with period $p>1$. As a specific example, consider the circuit depicted in \cref{fig:non_conv_variance}. The simple structure of this circuit allows to explicitly compute $\Varr^L$ as function of $L$. Assuming $\Trb{H}=0$, we have
\begin{equation}
    \Varr^L = \begin{cases}
        \frac{(\|\rho\|_2^2 -1/2)\|H\|_2^2}{3} \;\; &\text{if } L \text{ is even}  \\
        0 \;\; &\text{if } L \text{ is odd}  \\
    \end{cases}
\end{equation}
from which it is clear that the deep circuit limit does not converge. Moreover, thanks to the contained dimension of the system, it is possible to compute and represent $T$:
\begin{equation}
T = 
\begin{pmatrix}
1 & 0& 0& 0\\
0 & 0 & 1 & 0\\
0 & 1 & 0 & 0\\
0 & 0 & 0 & 1
\end{pmatrix}
\;\; \Rightarrow \;\; T_0 = T_2 = \begin{pmatrix}
1
\end{pmatrix}
, \;\; T_1 = \begin{pmatrix}
0 & 1\\
1 & 0
\end{pmatrix}
\end{equation}
Here we can see that $T_1$ has in fact period $2$, which implies the presence of $2$ distinct limiting values whenever both $\rho$ and $H$ have a component belonging to $T_1$, consistently with what discussed above.

On the contrary, in accordance with {\deepcirc}, the Ces\`aro average of the variance is always well-defined, and in this case we get

\begin{equation}
    \lim_{L\to\infty} \frac{1}{L}\sum_{l=1}^{L} \Varr^l = \frac{(\|\rho\|_2^2 -1/2)\|H\|_2^2}{6},
\end{equation}

which can be recovered both from direct calculation, and by computing the right leading eigenvector $w_1=(1/2, 1/2)^t$ of $T_1$.
\begin{figure}
    \includegraphics[width=0.3\linewidth]{images/non conv example.png}
    \caption{Simple 2-qubit circuit, designed to have a non-convergent variance. In this construction, the channel $\mathcal{E}$ is chosen to be unitary and in particular to be a SWAP gate. Moreover, both the initial state $\rho$ and the observable $H$ are chosen to be non-trivial only on the second qubit, and are therefore represented as single qubit operators.}
    \label{fig:non_conv_variance}
\end{figure}

\bibliography{biblio}